\title{Reputational Bargaining and Inefficient Technology Adoption\footnote{We thank Dilip Abreu, Sandeep Baliga, Francesc Dilm\'{e}, Rohan Dutta,
Mehmet Ekmekci, Jack Fanning, Faruk Gul, Marina Halac, Richard Holden, Ruitian Lang, Lucas Maestri, Chiara Margaria, Stephen Morris, Juan Ortner, Larry Samuelson, 
Karthik Sastry, Bruno Strulovici, Chris Udry, Alex Wolitzky, and Hanzhe Zhang for helpful comments. We thank the NSF Grants SES-1947021 and SES-2337566 and the Cowles Foundation for financial support.}}
\author{Harry Pei\footnote{Department of Economics, Northwestern University. Email: harrydp@northwestern.edu} \and
Maren Vairo\footnote{Department of Economics, Princeton University. Email: mvairo@princeton.edu}}
\date{\today}
\DeclareMathOperator*{\argmax}{arg\,max~}
\DeclareMathOperator*{\argmin}{arg\,min~}
\begin{document}
\newtheorem{Proposition}{\hskip\parindent\bf{Proposition}}
\newtheorem{Theorem}{\hskip\parindent\bf{Theorem}}
\newtheorem{Lemma}{\hskip\parindent\bf{Lemma}}
\newtheorem{Corollary}{\hskip\parindent\bf{Corollary}}
\newtheorem*{Definition}{\hskip\parindent\bf{Definition}}
\newtheorem{Assumption}{\hskip\parindent\bf{Assumption}}
\newtheorem*{Assumption0}{\hskip\parindent\bf{Assumption}}
\newtheorem{Claim}{\hskip\parindent\bf{Claim}}
\newtheorem*{Refinement}{\hskip\parindent\bf{Equilibrium Refinement}}

\maketitle
\numberwithin{equation}{section}

\noindent \textbf{Abstract:} A seller decides whether to adopt a new technology that can lower his cost of production and then bargains with a buyer over the price of an object. Both players can build reputations for being obstinate in the bargaining process by offering the same price over time. We show that even when the buyer \textit{cannot} observe the seller's adoption decision, players' incentives to build reputations can lead to inefficient under-adoption. We also show that under-adoption occurs in equilibrium \textit{if and only if} there are significant delays in reaching agreements, and that these inefficiencies can arise \textit{if and only if} the social benefit from adoption is large enough. As a result, an increase in the social benefit from adoption may lower the probability of adoption and welfare.\\

\noindent \textbf{Keywords:} reputational bargaining, delay, inefficient adoption, hold-up problem. 


\begin{spacing}{1.5}
\section{Introduction}\label{sec1}
Suppose a supplier needs to decide whether to adopt a new technology that can lower his cost of production. Even when the gains from adoption outweigh the costs, the supplier might be reluctant to adopt due to the concern that after his investment becomes sunk cost, his clients will offer low prices and expropriate the gains from adoption.
This is the well-known \textit{hold-up problem}, which is a fundamental determinant of people's incentives to make relationship-specific investments, firms' incentives to adopt new technologies, as well as the boundaries of firms and organizations.

The severity of the hold-up problem depends on the bargaining process that determines the terms of trade as well as players' information about others' investment decisions. For example, Grossman and Hart (1986) assume that bargaining is efficient and that investments are publicly observed. They show that investments are inefficient unless the player who makes the investment decision has all the bargaining power. Gul (2001) shows that investments are approximately efficient even when the investing player cannot make any offer and hence has no bargaining power, as long as his opponents can \textit{frequently revise their offers} and \textit{cannot} observe how much he has invested.

We revisit the hold-up problem by incorporating an important concern in practice, that players 
may have incentives to build reputations for being obstinate in the bargaining process and as a result, might be
reluctant to revise their offers. In contrast to Gul (2001), we show that even when investments are unobservable,
reputational incentives can lead to \textit{inefficient under-investment}.\footnote{Inefficient investment also occurs in reputational bargaining games where players' investments are observable.} 
Furthermore, under-investment arises \textit{if and only if} there are delays in reaching agreement, and these inefficiencies can occur \textit{if and only if} the social benefit from investment is large enough.


We augment the reputational bargaining model of Abreu and Gul (2000) with a technology adoption stage \textit{before} the bargaining stage. A buyer and a seller bargain over the price of an object. The buyer's value is commonly known. The seller's production cost is his private information, which depends on his choice of production technology at the adoption stage \textit{before} bargaining starts. In our baseline model, the seller either uses a \textit{default technology}, or adopts a \textit{new technology} that has a lower production cost compared to the default one but requires a positive adoption cost.

In the bargaining stage, the buyer offers a price.\footnote{The uninformed player making the offer first is a standard assumption in the reputational bargaining literature which is also assumed in Abreu, Pearce and Stacchetti (2015) and Fanning (2024). If the informed player makes the first offer, then their signaling incentives make the game intractable to analyze, which we leave for future research.}
The seller either accepts the buyer's offer, or demands a higher price after which players engage in a continuous-time war-of-attrition. With positive probability, each player is one of \textit{a rich set of commitment types} who offers an exogenous price and never concedes. With complementary probability, they are rational and decide what to offer and when to concede in order to maximize their discounted payoff. 

Theorem \ref{Theorem1} characterizes equilibria of a reputational bargaining game where the distribution of the seller's production cost is \textit{exogenous}. It shows that, fixing the distribution over costs and as the probability of commitment types vanishes, inefficient delays arise in equilibrium \textit{if and only if} (i) the difference between the two production costs is large enough, and (ii) the seller has a low production cost with probability above some cutoff. Our inefficient bargaining result stands in contrast to the results in Kambe (1999), Abreu and Gul (2000), and Abreu, Pearce and Stacchetti (2015, or APS), which show that bargaining is efficient when players have no private information about their payoffs, or when they only have private information about their discount rate.

The bargaining inefficiencies stem from the buyer's incentive to \textit{screen} the seller, that is, to induce sellers with different costs to demand different prices. Screening is feasible when the buyer faces uncertainty about the seller's cost but not when she faces uncertainty only about the seller's discount rate. To see why, suppose the buyer makes a \textit{screening offer} that is \textit{between} the two costs. The low-cost seller gets a positive payoff from conceding, so he faces the trade-off identified in Abreu and Gul (2000) that demanding more surplus lowers his speed of building reputations. 
However, this trade-off is irrelevant for the high-cost type since his payoff from conceding is negative. As a result, 
the high-cost type prefers to demand more surplus when doing so can signal his cost. We show that
as long as a certain set of commitment types occur with positive probability, following any screening offer (i) the high-cost type will demand the entire surplus  and trade with delay and 
(ii) the low-cost type will trade immediately at a lower price since delays are more costly for him.

When is screening profitable for the buyer? As in APS, the buyer can offer the high-cost type's Rubinstein bargaining price (i.e., make a \textit{pooling offer}), which will induce both types of the seller to trade immediately. She can also make a \textit{screening offer} between the two production costs, after which she will lose all her surplus when she faces the high-cost type. Hence, the buyer prefers to make the screening offer only if (i) she can pay a lower price to the low-cost type and (ii) the low-cost type occurs with high enough probability. The former is true \textit{if and only if} the difference between the two costs is large enough. This is because when the cost difference is small, 
all screening offers leave too little surplus to the low-cost type. The low-cost type can then build reputation faster than the buyer even when he demands prices that are greater than the pooling offer.

Our main result, Theorem \ref{Theorem2}, shows that when the adoption decision is \textit{endogenous}, 
the seller's adoption decision can be inefficient \textit{if and only if} the social benefit from adoption (defined as the difference between the high and the low production cost) is large enough.\footnote{The equilibrium level of adoption depends on the adoption cost as well. We show that there exists a cutoff such that inefficient adoption can arise \textit{only if} the social benefit from adoption is above the cutoff. Moreover, \textit{if} the social benefit from adoption is large enough \textit{and} the adoption cost belongs to an open set of ``intermediate values,'' then the unique equilibrium has inefficient adoption.} Otherwise, his adoption decision is \textit{approximately efficient} regardless of the adoption cost. This result implies that an increase in the social benefit from adoption may, paradoxically, \textit{lower} the probability of adoption. 

To understand Theorem \ref{Theorem2}, we start from an observation that the seller can capture all the surplus from adoption when the buyer makes the pooling offer but cannot do so when the buyer makes the screening offer. This together with an implication of Theorem \ref{Theorem1}, that the gap between the production costs being large is necessary for the buyer to prefer making screening offers, seems to explain why inefficient adoption can arise only when the social benefit from adoption is large.

However, this explanation is misleading since Theorem \ref{Theorem1} \textit{cannot} be applied when the distribution of production cost is endogenous: It only applies once we \textit{fix} the distribution of production cost \textit{as the probability of commitment types vanishes}. But in the game with endogenous technology adoption, the distribution of production cost depends on the probability of commitment types.

Instead, to provide a valid explanation for Theorem \ref{Theorem2}, we use the observation that \textit{fixing any probability of commitment types}, the buyer prefers low-price screening offers when the probability that the seller's cost is high goes to zero. In contrast, Theorem \ref{Theorem1} says that for any \textit{fixed distribution of production costs}, as the probability of commitment types vanishes, the buyer prefers to make a high-price pooling offer when the difference between the high and low production costs is small.

The probability that the seller has a high production cost being small relative to the probability of the commitment types
is indeed what may happen in the game with endogenous adoption. When the social benefit from adoption is small but is greater than the seller's cost of adoption, the equilibrium adoption probability converges to one as the probability of commitment types vanishes. 
When the social benefit from adoption is large (i)
the seller may adopt with probability bounded below one despite adoption is socially efficient, and (ii)
the buyer makes the low-price screening offer with non-vanishing probability. Delays occur when the buyer makes the screening offer and the seller has a high production cost. They occur with non-vanishing probability, and hence have significant welfare consequences, only when the social benefit from adoption is large enough.

 We extend the insight of Theorem \ref{Theorem2} to a setting with a rich but finite space of seller-cost types in Theorem \ref{Theorem4}. There, we show that if the gap between the highest and lowest seller-cost types is large enough, then there exists an open set of investment costs under which inefficient adoption and bargaining delay can arise in equilibrium, echoing the finding in Theorem \ref{Theorem2}.

Our results suggest an explanation for the under-adoption of cost-saving technologies, which is widely documented in agriculture and manufacturing. Overall, our theory fits when (i) the producers \textit{know} that the technology can effectively lower cost,\footnote{Wolitzky (2018) provides an alternative explanation based on social learning instead of the hold-up problem. His theory fits when the producers do \textit{not} know the effectiveness of the new technology.} (ii) it is hard for buyers to observe the producers' adoption decisions, but (iii) the producers are reluctant to adopt due to the fear of being held up.


In the remainder of this section, we discuss our contributions to the related literature. Section \ref{sec2} sets up the baseline model in which the seller chooses between two production technologies. The main results are stated in Section \ref{sec3} and are shown in the appendices. Section \ref{sec5} extends our results to settings where the seller chooses between three or more technologies. 
Section \ref{sec6} concludes. 

\paragraph{Bargaining \& Reputations:} We take a first step to analyze reputational bargaining games when the distribution of players' preferences is \textit{endogenous}. This aspect is novel relative to the existing works in this literature, which assume that the distribution of preferences is exogenous.

Compared to Abreu and Gul (2000), we introduce \textit{heterogeneity} in players' costs and show that it enables the uninformed player to \textit{screen} the informed player via unattractive offers, leading to inefficient delays.\footnote{Ekmekci and Zhang (2024) study reputational bargaining with interdependent values but with only one rational type for each player. In contrast, we study a private value model where the seller has multiple rational types.} This stands in contrast to the model analyzed by APS where 
the only private information is about a player's \textit{discount rate} and whether players are committed, and the limiting equilibrium outcome is efficient when commitment types' demands are constant over time.\footnote{APS also consider the case in which some commitment types' demands change over time. Since our motivation is to revisit the hold-up problem when players are unwilling to revise their offers, we assume that all commitment types demand the same price over time, which is also assumed in Abreu and Gul (2000) and Fanning (2024).}

A contemporaneous work of Fanning (2024) focuses on the interaction between private information about values (or equivalently, costs) and outside options, but not on endogenous investments. 
He shows that bargaining is efficient when 
(i) the informed player's value is drawn from a rich set
and
(ii) no rational type is indifferent between accepting any commitment type's offer and taking the outside option. 
 Our inefficient bargaining result requires the existence of two adjacent types with sufficiently different production costs,\footnote{Ortner (2017) shows that when a seller's cost may decrease over time, the outcome is efficient if and only if the buyers' values are drawn from a rich set. Although our analysis reaches a similar conclusion, the inefficiencies in our model are driven by players' incentives to build reputations, which is not the case for Ortner (2017)'s result.} which violates his richness requirement on the set of values. Our modeling assumption fits when the heterogeneity in production cost is driven by the \textit{differences in production technologies}, as adoption decisions are usually \textit{indivisible} and adopting an innovative technology can significantly lower production cost. 
His second requirement violates our richness assumption on the set of commitment types, which we will explain in Section \ref{sub2.1}.

Our model has private values and there is a \textit{gap} between the seller's cost and the buyer's value. In contrast to Gul, Sonnenschein and Wilson (1986) who show that there is no delay as the bargaining friction vanishes, we show that players' reputation concerns can cause delays. Our result stands in contrast to the inefficiency results that are driven by interdependent values (Deneckere and Liang 2006, Baliga and Sj\"{o}str\"{o}m 2023), 
no gap between players' values (Ausubel and Deneckere 1989),
costly concessions (Dutta 2023), risk aversion (Dilm\'{e} and Garrett 2022),
and the arrival of new traders (Fuchs and Skrzypacz 2010).

\paragraph{Hold-Up Problem:} Compared to Grossman and Hart (1986) and Gul (2001), 
we incorporate a practical concern that players might be reluctant to revise their offers due to their incentives to build reputations for being obstinate.\footnote{Instead of assuming that the investing player cannot make any offer and has no bargaining power, we diverge from Gul (2001) by allowing both players to make offers, which seems to provide the investing player stronger investment incentives. However, in contrast to Gul (2001)'s efficient investment result, we show that investment can be inefficiently low. This comparison  highlights the role of reputation concerns in driving these inefficiencies.} 
We show that the hold-up problem re-emerges even when the investment decision is \textit{unobservable}. Our result implies that the \textit{absolute magnitude} of the social benefit from investment has a significant effect on players' investment incentives.
This is complementary to the existing theories in which a player's investment incentive depends only on the \textit{ratio} between the social benefit from investment and the cost of investment.\footnote{In Che and S\'{a}kovics (2004), investment incentives depend on the difference between the social surplus and the cost of investment, whereas investment incentives depend both on this ratio and on the \textit{absolute magnitude} of social surplus in our model.}

Our conclusion also applies to other forms of \textit{selfish} investments, such as when 
the seller decides whether to \textit{divest} and become less cost-efficient. When investments are \textit{cooperative} 
such as the seller's investment increases the buyer's value (Che and Hausch 1999), the seller has no incentive to invest when his investment is unobservable, but has an incentive to invest when it is observable. This stands in contrast to our model in which non-observability leads to stronger investment incentives.

\section{The Baseline Model}\label{sec2}
 Time is continuous, indexed by $t \in [0,+\infty]$. 
A buyer (she) and a seller (he) bargain over the price of an object. The buyer's value for the object is common knowledge, which we normalize to $1$.

Time $0$ consists of two stages. 
In the first stage, the seller decides whether to adopt a new technology at an \textit{adoption cost} $c>0$.
This adoption decision determines his cost of producing the object, which the buyer \textit{cannot} observe.
If he adopts the new technology, then his production cost is $\theta_1$. If he uses the default technology, then his production cost is $\theta_2$, with $0<\theta_1<\theta_2<1$. We extend our results to settings with three or more production technologies in Section \ref{sec5}.

In the second stage, the buyer offers a price $p_b \in [0,1]$.  The seller either accepts the offer and sells at price $p_b$, 
or rejects the offer and makes a counteroffer $p_s \in (p_b,1]$, after which players engage in a continuous-time war-of-attrition. If a player concedes, then players trade at the price offered by their opponent. If both players concede at the same time, then they trade at price $\frac{p_b+p_s}{2}$.

Players share the same discount rate $r>0$.
If trade happens at time $\tau \in [0,+\infty]$ and price $p \in [0,1]$, then the buyer's payoff is $e^{-r \tau} (1-p)$
and the seller's payoff is $e^{-r \tau} (p-\theta)-\widetilde{c}$, where $\theta$ stands for the seller's production cost and $\widetilde{c} \in \{0,c\}$ stands for his adoption decision. 
If players never trade, then $\tau=+\infty$, in which case
the buyer's payoff is $0$ and the seller's payoff is $-\widetilde{c}$.

Each player is rational with probability $1-\varepsilon$ and is one of the commitment types with probability $\varepsilon>0$. 
Each buyer-commitment-type is characterized by $p_b \in \mathbf{P}_b \subset [0,1]$, who offers $p_b$ and never accepts any price that is strictly greater than $p_b$. Each seller-commitment-type is characterized by $p_s \in \mathbf{P}_s \subset [0,1]$, who offers $p_s$ and never accepts any price that is strictly lower than $p_s$. Let $\mu_b \in \Delta (\mathbf{P}_b)$ and $\mu_s \in \Delta (\mathbf{P}_s)$ denote the distributions of players' commitment types conditional on them being committed, both of which have full support.  
We do not specify the commitment types' production costs 
since they are irrelevant for the rational types' incentives and behaviors.  

We assume that the buyer's and the seller's types are independently distributed and that the sets of commitment types $\mathbf{P}_b$ and $\mathbf{P}_s$ are \textit{compact} and \textit{countable}. 
We also assume that $\mathbf{P}_s$ is \textit{rich} in the sense that $1 \in \mathbf{P}_s$ and $\sup\mathbf{P}_s\setminus \{1\}=1$. That is to say, the seller \textit{can} build a reputation for demanding the entire surplus and also for demanding something strictly less than but close to the entire surplus.\footnote{Our assumption is satisfied, for example, when there exists $\nu \in (0,1)$ such that the set of seller-commitment-types \textit{contains} $1$ and $p^j \equiv 1-(1-\nu)^j$ for every $j \in \mathbb{N}$. Our assumption allows other commitment types to exist as well.}
We will discuss this richness assumption in Section \ref{sub2.1}. For future reference, 
let
\begin{equation}\label{richness}
\nu \equiv \inf \Big\{ \overline{\nu}>0 \Big|
\textrm{ for every $p \in [0,1]$, $(p-\overline{\nu},p+\overline{\nu}) \cap \mathbf{P}_s \neq \emptyset$ and $(p-\overline{\nu},p+\overline{\nu}) \cap \mathbf{P}_b \neq \emptyset$ } 
\Big\}.
\end{equation}
If $\nu$ is close to $0$, then for each price $p \in[0,1]$ and each player, there exists a commitment type for that player whose demand is close to $p$.

We adopt two modeling conventions which are common in the literature and are also adopted by APS and Fanning (2024). First, regardless of the seller's offer $p_s$, if he accepts the buyer's offer at time $0$ or concedes to the buyer at time $0$ before the buyer concedes, then we view his offer as $p_b$.
Second, if the buyer's strategy is to offer $p_b\notin\mathbf{P}_b$ and to accept any counteroffer from the seller, then we relabel her strategy as offering $\min\mathbf{P}_b$ and then accepting any counteroffer by the seller.


The public history consists of players' offers and whether any player has conceded. The buyer's private history consists of the public history and whether she is committed. The rational-type buyer's strategy consists of her offer $\sigma_b\in \Delta[0,1]$ and a mapping from players' offers to her concession time $\widetilde{\tau}_b:[0,1]^2\to \Delta[0,+\infty]$. The seller's private history consists of the public history, whether he is committed, and his adoption decision. The rational-type seller's strategy consists of his adoption decision, or equivalently, the distribution of his production cost $\pi \in \Delta \{\theta_1,\theta_2\}$, a mapping from his production cost and the buyer's offer to his offer $\sigma_s: \{\theta_1,\theta_2\} \times [0,1]\to \Delta[0,1]$, and a mapping from his production cost and players' offers to his concession time $\widetilde{\tau}_s: \{\theta_1,\theta_2\} \times [0,1]^2\to \Delta[0,+\infty]$. The solution concept is Perfect Bayesian equilibrium, or \textit{equilibrium} for short.

For future reference, let $\sigma_b(p_b)$ and $\sigma_s(p_s|\theta,p_b)$ denote the probabilities with which the buyer's strategy $\sigma_b$ and type-$\theta$ seller's strategy $\sigma_s(\theta)(p_b)$ assign to offers $p_b$ and $p_s$, respectively.
Let $\tau_b \in \Delta [0,+\infty]$ and $\tau_s \in \Delta [0,+\infty]$ denote the unconditional distributions of the buyer's and the seller's concession times, which can be computed from the equilibrium strategies and type distributions.

\subsection{Discussions on the Modeling Assumptions}\label{sub2.1}
We analyze a reputational bargaining model since our motivation is to revisit the hold-up problem when players may not want to change their bargaining postures due to their reputation concerns. Compared to bargaining models with incomplete information but \textit{without} commitment types such as Gul, Sonnenschein and Wilson (1986) and Gul (2001),  reputational bargaining models lead to sharp predictions on players' welfare even when both players have some bargaining power.
This sounds more realistic relative to the restriction that one of the players has no bargaining power.\footnote{It is well-known that bargaining models where the informed player can make offers are not tractable to analyze. As a result, most of the existing works focus on the case where the uninformed player makes all the offers (e.g., Gul, Sonnenschein and Wilson 1986, Gul 2001). An exception is Gerardi, H\"{o}rner and Maestri (2014) that characterizes the set of equilibrium payoffs when the informed player makes \textit{all} the offers. }

We assume that the \textit{uninformed} player (i.e., the buyer) makes an offer before the \textit{informed} player (i.e., the seller) does. This assumption is also made in APS and Fanning (2024). The uninformed buyer making the first offer seems plausible when a firm procures inputs from its upstream supplier (e.g., farmers), in which case the firm usually quotes a price before the negotiations. If the informed seller makes the first offer, then his incentive to signal his production cost complicates the analysis. We leave this case for future research.

We assume that the set of commitment types is \textit{rich} in the sense that there exists a seller-commitment-type who demands the entire surplus  as well as a sequence of seller-commitment-types whose demands are strictly less than but close to the entire surplus. Our richness requirement on the set of commitment types is violated in Fanning (2024) who assumes that \textit{no rational type is indifferent between accepting any offer made by any commitment type and taking the outside option}. Under his assumption, there exists no commitment type who demands the entire surplus.\footnote{Although in Abreu and Gul (2000)'s model, all commitment types' demands are strictly less than the entire surplus, their main result that players will trade immediately at the Rubinstein bargaining price \textit{extends} to the case where the set of commitment types satisfies our richness assumption. This is because in any equilibrium of their model, the rational type has no incentive to imitate the commitment type who demands the entire surplus.}

The motivation for our richness assumption is that a player should be able to build a reputation for being obstinate as long as they demand the same price over time, \textit{regardless of what their demand is}.\footnote{We assume that the number of commitment types is countable in order to circumvent measurability issues. This assumption is made in most of the existing reputational bargaining models, which include Abreu and Gul (2000).} 
We show that inefficient equilibria \textit{exist} as long as there \textit{exists} a commitment type who demands the entire surplus, and that 
\textit{all} equilibria are inefficient when there also \textit{exists} a sequence of commitment types whose demands are less than but are arbitrarily close to the entire surplus.
Our findings are robust to the inclusion of any additional stationary commitment type.
In that sense, they are in the spirit of the reputation results in Fudenberg and Levine (1989) and Abreu and Gul (2000) that these conclusions apply \textit{as long as a certain set of commitment types occur with strictly positive probability}, even when there might be other commitment types as well.\footnote{For instance, Fudenberg and Levine (1989) show that the patient player can secure their Stackelberg payoff as long as there \textit{exists} a commitment type who takes the Stackelberg action. Abreu and Gul (2000) show that players obtain their Rubinstein bargaining payoffs as long as there \textit{exist} commitment types who demand those payoffs.}

One caveat of this approach, however, is that the convergence of discrete-time alternating-offers bargaining games to the continuous-time war of attrition, as offers become very frequent (Abreu and Gul, 2000), fails under our richness assumptions on the set of commitment types. In particular, as Fanning (2024) points out, if time is discrete, no rational player would benefit from mimicking a commitment type that demands the entire surplus. Consequently, our results cannot be directly applied to characterize the equilibria of the discrete-time game.

Finally, we restrict attention to \textit{stationary commitment types} by requiring every commitment type to demand the same price over time. This restriction is used as a common benchmark in the reputational bargaining literature and is also assumed in Kambe (1999), Abreu and Gul (2000), Ekmekci and Zhang (2022), and Fanning (2024). The analysis of reputational bargaining with, even exogenous, private information about rational players' values in the presence of non-stationary commitment types remains an interesting open question.\footnote{Abreu and Pearce (2007), Wolitzky (2012), APS, and Fanning (2016, 2018) consider the role of non-stationary commitment types in various reputational bargaining settings, none of which include private information about rational players' values.}


\section{Analysis \& Results}\label{sec3}
Section \ref{sub2.2} analyzes a benchmark where the buyer can observe the seller's adoption decision. Section \ref{sub3.1} analyzes a reputational bargaining game where the seller's cost is drawn from an \textit{exogenous} distribution and is his private information. Section \ref{sub3.2} analyzes  reputational bargaining with \textit{endogenous} technology adoption. Our main result, Theorem \ref{Theorem2}, shows that costly delays and inefficient technology adoption can arise in equilibrium \textit{if and only if} there are large social gains from adoption. We also explain the subtleties when analyzing models with endogenous cost distributions. 

\subsection{Benchmark: Adoption Decision is Observable}\label{sub2.2}
Suppose the buyer \textit{can} observe the seller's adoption decision, that is, the buyer knows $\theta$. Proposition 3 in Abreu and Gul (2000) implies that as the probability of commitment types $\varepsilon$ and $\nu$ defined in (\ref{richness}) go to $0$, players will trade with no delay at a price close to $p_{\theta} \equiv \frac{1+\theta}{2}$. We call $p_{\theta}$ type-$\theta$ seller's \textit{Rubinstein bargaining price} since it is 
the equilibrium price in the bargaining game of Rubinstein (1982) between a buyer with value $1$ and a seller with cost $\theta$ when players make offers frequently.

The intuition is that the buyer can secure payoff $1-p_{\theta}$ by offering $p_{\theta}$ and the seller can secure payoff $p_{\theta}-\theta$ by demanding $p_{\theta}$. Their guaranteed payoffs are their equilibrium payoffs since the sum of these payoffs equals the social surplus from trade $1-\theta$. 
The seller's gain from adoption is
$(p_{\theta_1}-\theta_1) - (p_{\theta_2}-\theta_2)= \frac{\theta_2-\theta_1}{2}$, which implies that he will adopt 
only when
$c \leq \frac{\theta_2-\theta_1}{2}$.

Since it is socially efficient to adopt the new technology as long as $c<\theta_2-\theta_1$, the equilibrium adoption decision is \textit{inefficient} when $c \in \big(\frac{\theta_2-\theta_1}{2},\theta_2-\theta_1 \big)$. In summary, when the seller's adoption decision is observable, there is almost no delay in reaching an agreement but the adoption decision is socially inefficient under an open set of adoption costs.

\subsection{Reputational Bargaining with Exogenous Production Cost}\label{sub3.1}
This section analyzes a reputational bargaining game when the rational seller's production cost $\theta$ is drawn from an \textit{exogenous} full support distribution $\pi \in \Delta \{\theta_1,\theta_2\}$. We refer to the rational seller with production cost $\theta$ as \textit{type $\theta$}. We also use type $\theta_1$ (type $\theta_2$) and the \textit{low type} (\textit{high type}) interchangeably. 
We start from defining several strategies. Let $\underline{\sigma}_{b}$ denote the buyer's strategy of offering $\min \{p_{\theta_1},\theta_2\}$.
Let $\overline{\sigma}_{b}$ denote the buyer's strategy of offering $p_{\theta_2}$.  
By definition, $p_{\theta_2} > \min \{p_{\theta_1},\theta_2\}$.
Let $\sigma_s^* (\cdot) \equiv \big\{ \sigma_{s,\theta}^* (\cdot) \big\}_{\theta \in \Theta}$, where $\sigma_{s,\theta}^* (p_b)$ assigns probability $1$ to $p_{s,\theta}^* (p_b)$ with
\begin{align}\label{equilibriumoffers}
    p_{s,\theta}^*(p_b) \equiv 
    \begin{cases} 
    1, \quad & \text{if  } p_b\leq \theta,\\
    \max\big\{p_b,1+\theta_1-p_b\big\}, \quad & \text{if  } p_b\in (\theta_1,\theta_2] \text{  and  } \theta=\theta_1, \\ 
    \max\big\{p_b,1+\theta_2-p_b\big\}, \quad & \text{if  } p_b>\theta_2.
    \end{cases}
\end{align}
Later on, we will show in Theorem \ref{Theorem1} that $\sigma_{s,\theta}^* (\cdot)$ is type-$\theta$ seller's equilibrium counteroffer.

In order to understand the expression for $\sigma_{s,\theta}^*$, we start from explaining the intuition behind $\max\big\{p_b,1+\theta_i-p_b\big\}$. Recall that in a reputational bargaining game where it is common knowledge that $\theta=\theta_i$, for any pair of offers $p_b$ and $p_s$ with
$\theta_i<p_b<p_s<1$, the seller will concede at rate
\begin{equation}\label{sellerconcession}
\lambda_s \equiv \frac{r (1-p_s)}{p_s-p_b}, 
\end{equation}
and the buyer will concede at rate
\begin{equation}\label{buyerconcession}
\lambda_b^i \equiv \frac{r(p_b-\theta_i)}{p_s-p_b}. 
\end{equation} 
These are the rates that make the other player indifferent between conceding and not conceding. 

Proposition 3 in 
Abreu and Gul (2000) implies that as the probability of commitment types $\varepsilon$ vanishes, the player with a lower concession rate will concede at time $0$ with probability close to $1$. 
If $\theta_i$ is common knowledge, then
 (\ref{sellerconcession}) and (\ref{buyerconcession}) imply that players will concede at the same rate when the seller offers $1+\theta_i-p_b$. This implies that
the seller can secure a price of approximately 
$\max\big\{p_b,1+\theta_i-p_b\big\}$ 
\textit{either} by accepting the buyer's offer \textit{or} by offering something slightly below $1+\theta_i-p_b$ and inducing the buyer to concede at time $0$ with probability close to $1$. 

Let
\begin{align}\label{cutoff}
     \pi^*\equiv \min \Big\{1, \frac{p_{\theta_{2}}-\theta_{2}}{\min\{p_{\theta_1},\theta_{2}\}-\theta_1} \Big\},
 \end{align}
which by definition is strictly positive. One can verify that $\pi^*<1$ \textit{if and only if}
\begin{equation}\label{costdifference}
\theta_2-\theta_1>\frac{1-\theta_2}{2},
\end{equation}
that is, the difference between $\theta_1$ and $\theta_2$ is large relative to the surplus generated by the high-cost type. Theorem \ref{Theorem1} shows that for generic $\pi$, all equilibria converge to the same limit point when the sets of commitment types satisfy our richness assumption and the probability of commitment types vanishes.\footnote{Throughout the paper, we measure the distance between two distributions (e.g., two mixed actions) using the Prokhorov metric, defined in Billingsley (2013a). Intuitively, two distributions $\mu$ and $\mu'$ are close if for every Borel set $A$, the value of $\mu(A)$ is close to that of $\mu'(A')$ for some small neighborhood $A'$ of $A$.} It also characterizes the welfare properties of the unique limiting equilibrium.
\begin{Theorem}\label{Theorem1}
For every $\pi \in \Delta \{\theta_1,\theta_2\}$, there exists at least one equilibrium. 
Suppose in addition that $\pi$ satisfies $\pi (\theta_1) \notin \{0,\pi^*,1\}$. For every $\eta>0$, there exists $\bar{\nu}>0$ such that when $\nu<\bar{\nu}$, there exists $\bar{\varepsilon}_{\nu} >0$ such that for every $\varepsilon \in (0,\bar{\varepsilon}_{\nu})$ and every equilibrium $(\sigma_s,\sigma_b,\tau_s,\tau_b)$ under $(\varepsilon, \nu)$:
\begin{enumerate}
\item If $\pi(\theta_1)<\pi^*$, then $\sigma_b$ is $\eta$-close to $\overline{\sigma}_{b}$, and $\sigma_s(p_b)$ is $\eta$-close to $\sigma_{s}^*(p_b)$ for every $p_b$ such that $\sigma_b(p_b)>\eta$. The expected welfare loss from delay is less than $\eta$ conditional on every $\theta \in \Theta$.
\item If $\pi(\theta_1)>\pi^*$, then $\sigma_b$ is $\eta$-close to $\underline{\sigma}_{b}$, and $\sigma_s(p_b)$ is $\eta$-close to $\sigma_{s}^*(p_b)$ for every $p_b$ such that $\sigma_b(p_b)>\eta$. Conditional on $\theta=\theta_1$, the expected welfare loss from delay is less than $\eta$. Conditional on $\theta=\theta_2$, the buyer's equilibrium payoff is $0$ and the expected welfare loss from delay is $\eta$-close to
\begin{equation}\label{expecteddelay}
(1-\theta_2) \Bigg\{ 1-\frac{1-\min\{p_{\theta_1},\theta_2\}}{1-\theta_1}\Bigg\}. 
\end{equation}
\end{enumerate}
\end{Theorem}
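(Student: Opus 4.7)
The plan is to characterize the limiting equilibrium by combining an Abreu--Gul-style analysis of each continuation war of attrition with a comparison of the buyer's payoffs across her candidate initial offers. First, I would extend AG's limit analysis to the two-type setting: for any on-path pair of offers $(p_b,p_s)$ with $\theta<p_b<p_s<1$, the concession rates (\ref{sellerconcession})--(\ref{buyerconcession}) pin down each rational player's limit payoff, which equals her concession payoff. The crucial new feature is that when $p_b\in(\theta_1,\theta_2]$ the high-cost seller's concession payoff $p_b-\theta_2$ is nonpositive, giving him commitment power so that in the $p_s=1$ subgame the rational buyer's payoff against him is approximately $0$. Richness of $\mathbf{P}_s$ and $\mathbf{P}_b$ (through $\nu$ small and $\sup\mathbf{P}_s\setminus\{1\}=1$) then lets a rational player effectively target any counteroffer by pooling with a nearby commitment type.

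Second, I would derive each seller type's equilibrium counteroffer and then the buyer's expected payoff from each candidate initial offer. For $p_b>\theta_2$, both types optimally play the AG-equalizing demand $\max\{p_b,1+\theta-p_b\}$. For $p_b\in(\theta_1,\theta_2]$, the high type optimally demands $p_s=1$ (his commitment power plus the availability of commitment demands approaching $1$ from below rules out lower rational demands), while the low type strictly prefers to separate at $\max\{p_b,1+\theta_1-p_b\}$: pooling at $p_s=1$ would force her to concede for payoff $p_b-\theta_1$, whereas separation at the rate-equalizing demand yields at least the same plus a strict improvement from marginally adjusting through a nearby commitment type. Consequently, pooling at $p_{\theta_2}$ yields the buyer $\approx(1-\theta_2)/2$, while screening at $\min\{p_{\theta_1},\theta_2\}$ yields $0$ against the high type and, against the low type, $\approx(1-\theta_1)/2$ when $p_{\theta_1}\leq\theta_2$ or $\approx\theta_2-\theta_1$ when $p_{\theta_1}>\theta_2$. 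Equating the buyer's payoffs yields exactly the cutoff $\pi^{*}$ in (\ref{cutoff}); a check that no other $p_b$ strictly dominates (by exploiting the structure of $\sigma_s^{*}$ and the Rubinstein-like payoff envelope) completes the argument, so that for generic $\pi$ with $\pi(\theta_1)\neq\pi^{*}$ the buyer's unique limiting offer is identified.

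The welfare-loss expression (\ref{expecteddelay}) in part (2) is pinned down by the buyer's concession probability in the $p_s=1$ subgame, which in turn comes from the low type's indifference between her separating demand and a deviation that pools at $p_s=1$. Existence for every $\pi$ is verified by direct construction, with off-path beliefs chosen to sustain the proposed strategies as a PBE. I expect the welfare calculation to be the main obstacle: the $p_s=1$ subgame features ``double indifference'' (the high type cannot concede profitably and the buyer's concession payoff is $0$), so pinning down the concession probability---and hence the high type's surplus share $(\max\{p_{\theta_1},1-(\theta_2-\theta_1)\}-\theta_1)/(1-\theta_1)$---requires careful joint use of both seller types' off-path IC and the richness of commitment types near $1$, together with verification that alternative off-path belief specifications do not sustain different limits.
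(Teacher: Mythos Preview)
Your overall architecture matches the paper's: analyze the war-of-attrition continuation, back out the seller's counteroffer for each buyer offer, compare the buyer's payoffs across pooling and screening to obtain the cutoff $\pi^*$, and then pin down the delay in the $p_s=1$ subgame via the two seller types' incentive constraints. Two of your intermediate steps, however, are not yet correct and would not go through as written.

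First, the claim that for $p_b\in(\theta_1,\theta_2]$ the high type ``optimally demands $p_s=1$'' is the heart of the proof (the paper isolates it as Lemma~\ref{Lemma_4.5}), and your justification---``commitment power plus the availability of commitment demands approaching $1$ from below rules out lower rational demands''---does not explain why $\theta_2$ cannot demand some $p_s<1$ and simply never concede. Nothing in commitment power alone prevents this. The paper's argument is by contradiction and is more delicate: if type $\theta_2$ offers $p_s<1$ with positive probability, then for every $p_s'\in(p_s,1)\cap\mathbf{P}_s$ the low type must also offer $p_s'$ with positive probability (otherwise the buyer would concede immediately to $p_s'$, making it a profitable deviation for $\theta_2$). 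Combining $\theta_2$'s weak preference for $p_s$ over $p_s'$ with $\theta_1$'s optimality of $p_s'$ yields that the discounted concession probabilities satisfy $A>A'$ and hence $T'<T$; but $\theta_1$'s indifference between conceding at time $0$ after either offer forces $c_b\ge c_b'$, and since the buyer's concession rate is strictly higher after $p_s$ than after $p_s'$, this gives $T<T'$---a contradiction. Richness ($\sup\mathbf{P}_s\setminus\{1\}=1$) enters only to guarantee that $(p_s,1)\cap\mathbf{P}_s\neq\emptyset$, not to make $1$ directly attractive. Your sketch does not contain this mechanism.

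Second, for $p_b>\theta_2$ you write that ``both types optimally play the AG-equalizing demand $\max\{p_b,1+\theta-p_b\}$,'' which (read literally) has each type demanding its own equalizing price. That is not what happens: both types pool at the \emph{high} type's equalizing price $\max\{p_b,1+\theta_2-p_b\}$ (this is the content of the paper's $\sigma_s^*$ and Lemma~\ref{L.A4}). The low type strictly prefers to pool because the buyer's average concession rate is governed by $\theta_2$ once both types are present at the same offer. Relatedly, your reason for the low type separating after a screening offer---``pooling at $p_s=1$ would force her to concede for payoff $p_b-\theta_1$''---is not right: at $p_s=1$ the buyer is indifferent, so the buyer's concession behavior is not pinned down by the war of attrition and the low type is not ``forced'' to concede. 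The separation and the welfare formula~(\ref{expecteddelay}) are obtained instead by sandwiching $\mathbb{E}[e^{-r\tau_b}\mid\theta=\theta_2]$ between the bound from $\theta_1$'s IC (no profitable deviation to demanding $1$) and the bound from $\theta_2$'s IC (no profitable deviation to $p_s\nearrow1$), exactly as in (\ref{upperbound})--(\ref{limitinglowerbound}); you allude to this at the end, but your earlier one-sided ``low type's indifference'' framing would not pin down the limit.
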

The proof is in Appendix \ref{sub4.1} and an intuitive explanation is provided later in this section. 
According to Theorem \ref{Theorem1}, under generic $\pi$, there exists a unique limiting equilibrium. 
The qualitative features of the unique limiting equilibrium depend on (i) the difference $\theta_2-\theta_1$ between the two production costs and (ii) the exogenous distribution $\pi$ over production costs. In particular,  
\begin{enumerate}
    \item When the difference between $\theta_1$ and $\theta_2$ is small in the sense that $\theta_1$ and $\theta_2$ violate (\ref{costdifference}), the buyer offers a high price $p_{\theta_2}$ and the seller accepts immediately. 
The same limiting equilibrium arises when $\theta_1$ and $\theta_2$ satisfy (\ref{costdifference}) and the low type occurs with probability less than $\pi^*$.
    \item When $\theta_1$ and $\theta_2$ satisfy (\ref{costdifference}) and $\pi(\theta_1)> \pi^*$, the buyer offers a low price $\min \{p_{\theta_1},\theta_2\}$. The high type demands the entire surplus $1$ and the buyer concedes after some delay. This leads to an expected welfare loss of (\ref{expecteddelay}). The low type trades immediately either by accepting the buyer's offer or by offering $1-(\theta_2-\theta_1)$, depending on the comparison between $p_{\theta_1}$ and $\theta_2$.
\end{enumerate}

Theorem \ref{Theorem1} suggests that costly delays arise in equilibrium \textit{if and only if} the difference between the two production costs is large enough and the seller is likely to have a low production cost. Our inefficient bargaining result stands in contrast to the efficiency results in reputational bargaining games \textit{without} private payoff information (Kambe 1999 and Abreu and Gul 2000), as well as those in reputational bargaining games with one-sided private information about payoffs, but either the private information is about the discount rate (e.g., APS) or the set of commitment types violates our richness assumption (e.g., Fanning 2023). We discuss those models by the end of this section.

We argue that inefficient delays occur whenever the \textit{uninformed player}, i.e., the buyer, uses her offer to \textit{screen} the informed seller, that is, to induce sellers with different costs to demand different prices. Screening is \textit{feasible} when players can build reputations and the uninformed buyer faces uncertainty about her opponent's cost. Screening is \textit{profitable} for the uninformed buyer when both 
the probability of the low-cost type and 
the difference between the two costs are large enough.

In order to understand the intuition behind Theorem \ref{Theorem1}, we start from an auxiliary game where \textit{both types of the seller are required to demand the same price}. 
Suppose players' offers $p_b$ and $p_s$ satisfy $\theta_2<p_b<p_s<1$. In order to make the buyer indifferent between conceding and not conceding, the seller needs to concede at rate $\lambda_s \equiv \frac{r(1-p_s)}{p_s-p_b}$.
Since the seller's benefit from conceding is \textit{decreasing} in his production cost, the high-cost type will start to concede only after the low-cost type has finished conceding. Let $T_i$ denote the time at which type $\theta_i$ finishes conceding, and let $\hat{\varepsilon}$ denote the probability that the seller is committed conditional on offering $p_s$. By definition, 
\begin{equation}\label{time}
e^{-\lambda_s T_1} = \hat{\varepsilon} + \pi(\theta_2) (1-\hat{\varepsilon}) \textrm{ and }
e^{-\lambda_s (T_2-T_1)} = \frac{\hat{\varepsilon}}{\hat{\varepsilon} + \pi(\theta_2) (1-\hat{\varepsilon})}
\end{equation}
The buyer first concedes at rate $\lambda_b^1 \equiv \frac{r(p_b-\theta_1)}{p_s-p_b}$, which makes the low-cost type indifferent between conceding and not conceding. After the low-cost type finishes conceding at time $T_1$, the buyer concedes at a lower rate $\lambda_b^2 \equiv \frac{r (p_b-\theta_2)}{p_s-p_b}$, which makes the high-cost type indifferent. 
As $\varepsilon \rightarrow 0$, equation (\ref{time}) implies that $T_1$ is bounded above while $T_2$ diverges to $+\infty$. As a result, 
the buyer spends most of her time conceding at rate $\lambda_b^2$, so her time-average concession rate is close to $\lambda_b^2$.

Similar to Abreu and Gul (2000), both players face the trade-off that demanding more surplus will lower their concession rate and as $\varepsilon \rightarrow 0$, having a lower concession rate implies that they need to concede at time $0$ with probability close to $1$. Hence, the buyer can secure herself a payoff of approximately $1-p_{\theta_2}$ by offering $p_{\theta_2}$ and the seller with cost $\theta$ can secure himself a payoff of approximately $p_{\theta_2}-\theta$ by demanding $p_{\theta_2}$. The sum of players' secured payoffs equals the total surplus, which implies that their equilibrium payoffs are close to their secured payoffs. When the conditions in the first statement of Theorem \ref{Theorem1} are satisfied, this equilibrium in the auxiliary game remains an equilibrium in the original game and is the unique limit point 
as $\varepsilon$ and $\nu$ go to $0$.

When different types of the seller can offer \textit{different} prices, the buyer can \textit{screen} the seller by making a \textit{screening offer} that belongs to $(\theta_1,\theta_2]$. Compared to offering $p_{\theta_2}$ and inducing both types of the seller to trade immediately, each screening offer can induce different types of the seller to demand different prices. Although screening causes delays, the buyer may end up paying a lower price to the low-type seller, which is why she has an incentive to make such screening offers.

To elaborate, suppose the buyer offers $p_b \in (\theta_1,\theta_2]$. 
The low-cost seller obtains a strictly positive payoff from conceding, so he faces the usual trade-off identified in Abreu and Gul (2000) that demanding a higher price will lower his speed of reputation building. However, this trade-off is no longer relevant for the high-cost type since his payoff from conceding is non-positive. As a result, the high-cost type never benefits from conceding to the buyer and he prefers to demand a larger share of the surplus as long as doing so can convince the buyer that he has a high  cost.

The key step of our proof is stated as Lemma \ref{Lemma_4.5} in Appendix \ref{sec4}, that under our richness assumption on the set of commitment types, the high type must demand $1$ after the buyer makes any screening offer. This lemma also applies when the cost distribution is endogenous.

For some intuition, suppose by way of contradiction that the high type demands $p_s<1$ with strictly positive probability. The rational low-cost type must demand every price in  $\mathbf{P}_s \bigcap (p_s,1)$ with strictly positive probability. This is because otherwise, the buyer will assign probability $1$ to the high-cost type and the commitment type after observing the seller offering such a price, after which she will concede immediately. Anticipating this, the high-cost type will 
find it strictly profitable to deviate to one of these prices in $\mathbf{P}_s \bigcap (p_s,1)$ instead of offering $p_s$. We 
then establish a monotonicity result,  that for each pair of prices offered by the seller with positive probability, offering the higher price will lead to a longer expected delay and a higher expected trading price. This implies that when the low type has an incentive to demand a higher price $p_s' > p_s$, the high type should have no incentive to demand a strictly lower price $p_s$. Since the low type must demand every price in $\mathbf{P}_s \bigcap (p_s,1)$ with strictly positive probability and our richness assumption requires that $\sup \mathbf{P}_s \backslash \{1\}=1$, the high type must demand $1$ following any of the buyer's screening offer.

In summary, the buyer faces a trade-off when she chooses between making a screening offer $p_b \in (\theta_1,\theta_2]$ and the pooling offer $p_{\theta_2}$: Screening wipes out the surplus she can extract from the high type but may lower the price she pays to the low type. The latter is true if and only if the difference between $\theta_1$ and $\theta_2$ is large enough. This is because when $\theta_1$ and $\theta_2$ are too close, every screening offer $p_b \in (\theta_1,\theta_2]$ is too low relative to the Rubinstein bargaining price of the low type $p_{\theta_1}$. If $\theta_2+p_{\theta_2} \leq 2 p_{\theta_1}$ or equivalently $\theta_2-\theta_1 \leq \frac{1-\theta_2}{2}$, then for any $p_b \in (\theta_1,\theta_2]$, 
the low type can offer something greater than $p_{\theta_2}$ and induce the buyer to concede almost immediately, in which case screening is unprofitable for the buyer. This explains the logic behind (\ref{costdifference}). When $\theta_2-\theta_1 > \frac{1-\theta_2}{2}$, $\pi^*$ is the probability of the low type under which the buyer's benefit from screening equals her cost of screening, so the buyer prefers to make the screening offer if and only if $\pi(\theta_1) > \pi^*$.\footnote{When $\pi(\theta_1)=\pi^*$, although the buyer will be indifferent between the pooling offer $p_{\theta_2}$ and her optimal screening offer $p_b \in (\theta_1,\theta_2]$ \textit{in the limit} where $\varepsilon \rightarrow 0$, she will have a strict preference for one of these offers under generic $\varepsilon$. The cutoff $\pi^*$ will play a role in the game with endogenous technology adoption analyzed in the next section.}

In the last step, we compute
the expected delay $\pi(\theta_2) \Big\{ 1-\mathbb{E}[e^{-r \tau_b}|\theta=\theta_2] \Big\}$ and the expected welfare loss from delay $\pi(\theta_2) (1-\theta_2) \Big\{ 1-\mathbb{E}[e^{-r \tau_b}|\theta=\theta_2] \Big\}$ in the inefficient equilibria
by bounding the value of $\mathbb{E}[e^{-r \tau_b}|\theta=\theta_2]$.
Our bounds are derived via the two types of the seller's incentive constraints.
First, after the buyer makes a screening offer, type $\theta_1$ cannot find it profitable to demand $1$ in any equilibrium, 
 which leads to the following upper bound for $\mathbb{E}[e^{-r \tau_b}|\theta=\theta_2]$:
\begin{equation}\label{upperbound}
\underbrace{(1-\theta_1)\mathbb{E}[e^{-r \tau_b}|\theta=\theta_2] }_{\textrm{type $\theta_1$'s payoff from demanding $1$}} \leq \underbrace{(1+\theta_1-\min\{p_{\theta_1},\theta_2\})-\theta_1}_{\textrm{type $\theta_1$'s equilibrium payoff}}.
\end{equation}

Second, type $\theta_2$ cannot profit from demanding any $p_s$ that is strictly less than but close to $1$ and then never 
conceding to his opponent. In order to formally state this incentive constraint, we start by introducing a few extra notation. 
Fix players' offers $p_b=\min\{p_{\theta_1},\theta_2\}$ and $p_s$. Let $T_1$ denote the time it takes for type $\theta_1$ to finish conceding and let $c_b$ denote the probability with which the buyer concedes at time $0$, both of which depend on the buyer's posterior belief about the seller's type.
Type $\theta_2$'s incentive constraint implies that
\begin{align}\label{lowerbound}
\underbrace{(1-\theta_2)\mathbb{E}[e^{-r \tau_b}|\theta=\theta_2]}_{\textrm{type $\theta_2$'s equilibrium payoff}} \geq \underbrace{(p_s-\theta_2)  \Bigg(c_b+(1-c_b)\Big(1-e^{-(r+\lambda_b^{1})T_1} \Big)\frac{\min\{p_{\theta_1},\theta_2\}-\theta_1}{p_s-\theta_1}\Bigg)}_{\textrm{type $\theta_2$'s payoff from deviating to $p_s \approx 1$}}.
\end{align}
We show in Appendix \ref{sub4.1} that, as $p_s \rightarrow 1$ and $\varepsilon \rightarrow 0$, the right-hand-side of (\ref{lowerbound}) is at least
\begin{equation}\label{limitinglowerbound}
\frac{1-\min\{p_{\theta_1},\theta_2\}}{1-\theta_1} (1-\theta_2).
\end{equation}
Therefore, the upper and the lower bounds for $\mathbb{E}[e^{-r \tau_b}|\theta=\theta_2]$
coincide in the limit, which pin down the limiting value of
$\mathbb{E}[e^{-r \tau_b}|\theta=\theta_2]$. 
Our calculation also suggests that compared to the efficient equilibrium, the high-cost seller's payoff is weakly greater in the inefficient equilibrium. Therefore, the low-cost seller not only bears the ex-ante welfare loss from the buyer's screening but is also expropriated by the buyer.

\paragraph{Comparative Statics:} We apply Theorem \ref{Theorem1} to examine how the expected welfare loss and the expected delay of reaching agreement depend on the primitives, such as the distribution of the seller's production cost $\pi$, his production cost under the new technology $\theta_1$, and that under the default technology $\theta_2$. We measure the expected delay by $1-\mathbb{E}[e^{-r \min \{\tau_s,\tau_b\}}]$.
As in Theorem \ref{Theorem1}, we focus on the limit where $(\varepsilon,\nu) \rightarrow (0,0)$. We start by examining the effect of an increase in the fraction of sellers with a low production cost. 
\begin{Corollary}\label{cor1}
For every $\eta>0$, there exists $\bar{\nu}>0$ such that when $\nu<\bar{\nu}$, there exists $\bar{\varepsilon}_{\nu} >0$ such that for every $\varepsilon \in (0,\bar{\varepsilon}_{\nu})$, both the expected welfare loss and the expected delay are less than $\eta$ when $\pi (\theta_1) \in [0, \pi^*)$, but are strictly positive and are strictly decreasing in $\pi(\theta_1)$ when $\pi(\theta_1) \in (\pi^*,1)$. 
\end{Corollary}
Corollary \ref{cor1} suggests that the expected welfare loss from delay is maximized when $\pi(\theta_1)$ is slightly above $\pi^*$. 
Intuitively, bargaining is efficient when the low type occurs with probability no more than $\pi^*$. 
When $\pi(\theta_1)$ is above $\pi^*$, inefficient delay occurs only when the seller has a high production cost $\theta_2$, and 
conditional on $\theta=\theta_2$, the expected welfare loss from delay is independent of $\pi(\theta_1)$. 
Next, we examine the effect of an increase in the production cost of the low-cost type. 
\begin{Corollary}\label{cor2}
There exists $\bar{\nu}>0$ such that when $\nu<\bar{\nu}$, there exists $\bar{\varepsilon}_{\nu} >0$ such that for every $\varepsilon \in (0,\bar{\varepsilon}_{\nu})$, both the expected delay and the expected welfare loss are weakly decreasing in $\theta_1$.
\end{Corollary}
Intuitively, improving the efficiency of the new technology (i.e., a decrease in $\theta_1$) has two effects, both of which 
lead to a longer expected delay. 
First, a lower $\theta_1$ makes screening more profitable for the buyer, which expands the range of $\pi$ under which the buyer prefers to make the screening offer. Second,  when $\pi(\theta_1)>\pi^*$,
the expected delay after the high type offers $1$ weakly increases as $\theta_1$ decreases, and strictly increases whenever $\theta_2 \leq p_{\theta_1}$. This is driven by the two incentive constraints that pin down the expected delay: the low type's incentive constraint leads to a lower bound on the expected delay and the high type's incentive constraint leads to an upper bound.  
According to (\ref{upperbound}) and (\ref{lowerbound}), as $\theta_1$ decreases, the low type's gain from deviation increases and 
the high type's gain from deviation decreases. 
Hence, 
the expected delay that satisfies both incentive constraints increases. 
Next, we examine the effect of an increase in the production cost of the high-cost type. 
\begin{Corollary}\label{cor3}
There exists $\bar{\nu}>0$ such that when $\nu<\bar{\nu}$, there exists $\bar{\varepsilon}_{\nu} >0$ such that for every $\varepsilon \in (0,\bar{\varepsilon}_{\nu})$, the expected delay is weakly increasing in $\theta_2$, and the expected welfare loss is weakly increasing in $\theta_2$ when $\theta_2\in(\theta_1,p_{\theta_1})$ and is weakly decreasing in $\theta_2$ when $\theta_2 \in (p_{\theta_1},1)$.
\end{Corollary}
Intuitively, improving the efficiency of the default technology (i.e., a decrease in $\theta_2$) has two effects.
First, a lower $\theta_2$ makes screening less profitable, which reduces the range of $\pi$ under which the buyer prefers to make the screening offer. This decreases the expected delay as well as the expected welfare loss from delay. However, there is another effect, namely, a lower $\theta_2$ increases the surplus from trading with type $\theta_2$, which makes each unit of delay more costly in terms of social welfare. Overall, players will reach an agreement sooner when the default technology becomes more efficient, and the expected welfare loss also decreases if and only if $\theta_2$ is lower than $p_{\theta_1}$.

\paragraph{Remarks:} 
Theorem \ref{Theorem1} shows that bargaining is inefficient when (i) the set of commitment types is \textit{rich} and (ii) the cost difference between the two types of the seller is large enough. 
Section \ref{sec5}
extends these findings to cases with three or more production costs. The presence of bargaining inefficiencies stands in contrast to APS and Fanning (2024). APS assume that players only have private information about their discount rate, in which case there is no offer under which some type has a strict incentive to concede while other types have no incentive to concede. This explains why the uninformed player cannot induce different types of the informed player to offer different prices. 

Fanning (2024) studies the interaction between private information about values and outside options in reputational bargaining models with exogenous cost/value distributions. This is related to our analysis in the current section. 
He assumes that no rational type is indifferent between accepting and rejecting any commitment type's offer. This rules out for example, a commitment type who demands the entire surplus, which is required to \textit{exist} for our result. The motivation for our requirement is explained in Section \ref{sub2.1}. 
Intuitively, when it is \textit{infeasible} for the seller to build a reputation for demanding the entire surplus, the buyer \textit{cannot} screen the seller in any equilibrium since it is not optimal for her to concede after delay knowing that the seller will never concede.

\subsection{Reputational Bargaining with Endogenous Technology Adoption}\label{sub3.2}
This section analyzes the reputational bargaining game in which the seller's production cost is \textit{endogenously} determined by his adoption decision before the bargaining stage and the buyer \textit{cannot} observe whether he has adopted. Recall the definition of $\pi^*$ in (\ref{cutoff}) and that $\pi^*<1$ if and only if $(\theta_1,\theta_2)$ satisfies (\ref{costdifference}). If $(\theta_1,\theta_2)$ also satisfies a stronger condition that $p_{\theta_1}< \theta_2$, then $\pi^*= \frac{1-\theta_2}{1-\theta_1}$. 

\begin{figure}[h]
    \centering
    \includegraphics[width=.7\textwidth]{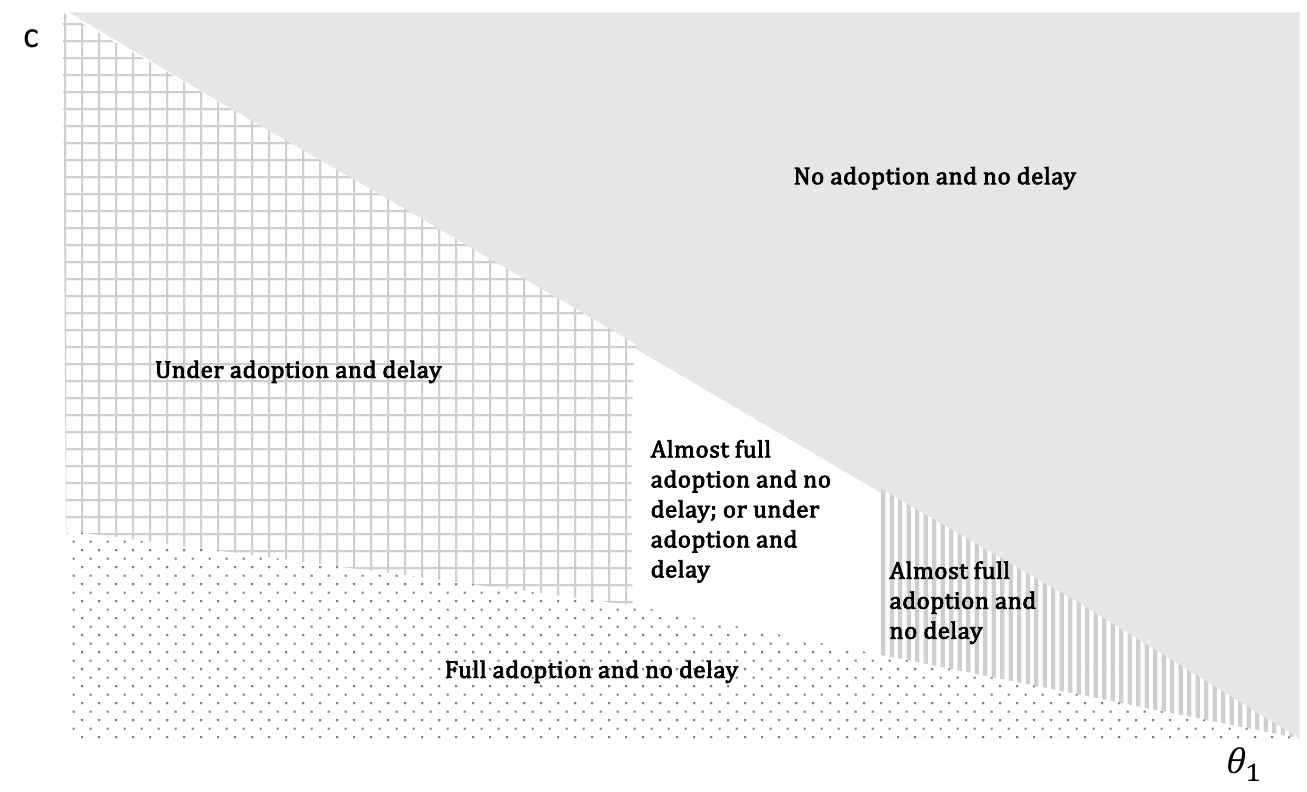}
\caption{Limiting equilibria in the reputational bargaining game with endogenous technology adoption when $\theta_2>1/3$. Aside from the white region, there is a unique limiting equilibrium. In the white region, there are two possible limiting equilibria; one has efficient investment and negligible delay in reaching agreement, and another has inefficient investment and significant delay in reaching agreement. \textit{Almost full adoption and no delay} means that the limiting outcome is efficient but for every $\varepsilon>0$, the probability of adoption is strictly less than one and the expected delay is strictly positive. If $\theta_2<1/3$, the region in which an inefficient equilibrium exists is empty.}
 \label{fig:eqm}
\end{figure}
We state the interesting parts of our equilibrium characterization as Theorem \ref{Theorem2}. A more detailed description can be found 
in Lemmas \ref{lemmab2}-\ref{lemmab5} in Appendix \ref{sub4.2}, which we depict graphically in Figure  \ref{fig:eqm}.
\begin{Theorem}\label{Theorem2}
There exists at least one equilibrium. For every $\eta>0$, there exists $\bar{\nu}>0$ such that when $\nu<\bar{\nu}$, there exists $\bar{\varepsilon}_{\nu} >0$ such that for every $\varepsilon \in (0,\bar{\varepsilon}_{\nu})$:
\begin{enumerate}
    \item Suppose $(\theta_1,\theta_2)$ violates (\ref{costdifference}). In every equilibrium, the expected delay is less than $\eta$, and the adoption probability is less than $\eta$ if  $c> \theta_2-\theta_1$ and is more than $1-\eta$ if $c< \theta_2-\theta_1$. 
    \item Suppose $(\theta_1,\theta_2)$ satisfies (\ref{costdifference}). There exists a non-empty open interval $(\underline{c},\overline{c}) \subset \Big( \frac{\theta_2-\theta_1}{2},\theta_2-\theta_1 \Big)$ such that for every $c \in (\underline{c},\overline{c})$, there exists an equilibrium where the adoption probability is within an $\eta$-neighborhood of $\pi^*$, and the expected delay is bounded above $0$.
    \item Suppose $(\theta_1,\theta_2)$ satisfies $p_{\theta_1}< \theta_2$. If $c \in \Big(\frac{\theta_2-\theta_1}{2},\theta_2-\theta_1 \Big)$, then in all equilibria, the adoption probability is within an $\eta$-neighborhood of $\pi^*$, and the expected delay is bounded above $0$.
\end{enumerate}
\end{Theorem}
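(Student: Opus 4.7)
The plan is to reduce the adoption game to a fixed-point problem in the endogenous adoption probability $\pi(\theta_1)$, using Theorem \ref{Theorem1} as a black box for the bargaining subgame at each exogenous belief. The key accounting identity is the rational seller's limit gain from adoption, $G(\sigma_b)$, when the buyer plays $\sigma_b$ and the seller best responds. Two values drive the whole argument: under the pooling strategy $\overline{\sigma}_b$ both cost types trade immediately at $p_{\theta_2}$, so $G(\overline{\sigma}_b) = \theta_2 - \theta_1$ equals the full social benefit; under the screening strategy $\underline{\sigma}_b$, combining (\ref{upperbound})--(\ref{limitinglowerbound}) gives $G(\underline{\sigma}_b) = (\max\{p_{\theta_1}, 1-\theta_2+\theta_1\} - \theta_1) \cdot \frac{\theta_2-\theta_1}{1-\theta_1}$, which is strictly less than $\theta_2-\theta_1$ and simplifies to exactly $\frac{\theta_2-\theta_1}{2}$ under the hypothesis $p_{\theta_1} < \theta_2$ used in Part 3. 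A rational seller then (strictly) adopts only if $c < G(\sigma_b)$.

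Part 1 is almost immediate: when (\ref{costdifference}) fails, (\ref{cutoff}) gives $\pi^* = 1$, so Theorem \ref{Theorem1}(1) applies for \emph{every} endogenous $\pi(\theta_1) \in (0, 1)$ and the buyer's equilibrium strategy is uniformly close to $\overline{\sigma}_b$ regardless of $\pi$. Consequently, $G$ is close to $\theta_2 - \theta_1$, the seller strictly prefers the dominant action $\mathbf{1}\{c < \theta_2 - \theta_1\}$, and the pooling branch delivers no delay. For Part 2, (\ref{costdifference}) holds and $\pi^* < 1$. I construct an equilibrium in which the seller's adoption probability equals $\pi^*$ in the limit (making the buyer indifferent between $\overline{\sigma}_b$ and $\underline{\sigma}_b$) and the buyer mixes with weight $q$ on $\underline{\sigma}_b$ and $1-q$ on $\overline{\sigma}_b$. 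Seller indifference then reads $c = q\,G(\underline{\sigma}_b) + (1-q)(\theta_2 - \theta_1)$, which has a unique solution $q \in (0,1)$ exactly when $c$ lies in the open interval $(\underline{c}, \overline{c}) := \bigl(G(\underline{\sigma}_b),\, \theta_2 - \theta_1\bigr)$, contained in $(\tfrac{\theta_2-\theta_1}{2}, \theta_2 - \theta_1)$ by the bound on $G(\underline{\sigma}_b)$ derived above. A Kakutani fixed-point argument on the joint (adoption probability, buyer mixing weight, war-of-attrition continuation play) produces an actual equilibrium at small $(\varepsilon, \nu)$; the positive-weight screening branch gives expected delay bounded above $0$ via (\ref{expecteddelay}) conditional on $\theta = \theta_2$.

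For Part 3, the uniqueness claim under $p_{\theta_1} < \theta_2$, I argue by contradiction along any sequence $\varepsilon \to 0$. If some subsequence has $\pi_\varepsilon(\theta_1) \leq \pi^* - \delta$, a uniform version of Theorem \ref{Theorem1}(1) on compact subsets of beliefs bounded away from $\{0, \pi^*, 1\}$ forces the buyer's strategy to $\overline{\sigma}_b$, hence $G$ to $\theta_2-\theta_1 > c$, hence the seller's adoption probability to exceed $\pi^*$, contradicting $\pi_\varepsilon(\theta_1) < \pi^*$. A symmetric argument using Theorem \ref{Theorem1}(2) together with $G(\underline{\sigma}_b) = \tfrac{\theta_2-\theta_1}{2} < c$ (the crucial simplification under $p_{\theta_1} < \theta_2$) rules out $\pi_\varepsilon(\theta_1) \geq \pi^* + \delta$. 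Hence every equilibrium has adoption probability within an $\eta$-neighborhood of $\pi^*$, and because $\pi^* \in (0,1)$ the buyer must put positive weight on the screening offer in the limit, which yields the strictly positive expected delay.

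The main obstacle is that Theorem \ref{Theorem1} explicitly excludes the knife-edge $\pi^*$, which is precisely the belief selected in Parts 2 and 3. I therefore cannot pick off-the-shelf continuation play at $\pi(\theta_1) = \pi^*$: for every fixed $\varepsilon > 0$ the buyer strictly prefers one of $\underline{\sigma}_b$ or $\overline{\sigma}_b$, and indifference only emerges in the limit. The Part 2 construction must build the mixed equilibrium by hand, selecting the buyer's mixing weight and the post-offer continuation beliefs to simultaneously sustain the buyer's and the seller's indifferences at each small $\varepsilon$ and then passing to the limit. Relatedly, the Part 3 argument requires an $\varepsilon$-uniform strengthening of Theorem \ref{Theorem1}'s conclusion in the exogenous belief $\pi$, which is suggested by but not explicit in the statement and will have to be read off its proof. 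Verifying that the chosen continuation play actually constitutes a Perfect Bayesian equilibrium of the whole two-stage game, with off-path beliefs chosen to deter deviations at the adoption stage, is the remaining technical content and is relegated to Lemmas 7--10.
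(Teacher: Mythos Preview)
Your overall architecture matches the paper's: decompose by parameter region, use the exogenous-belief bargaining analysis to compute the seller's gain from adoption $G$ under pooling versus screening, and close the adoption probability via contradiction or indifference. Your computation of $G(\underline{\sigma}_b)$ and the identification of the open interval $(\underline{c},\overline{c})=(G(\underline{\sigma}_b),\theta_2-\theta_1)$ are exactly what the paper obtains in Lemmas~\ref{lemmab3} and~\ref{lemmab4}, and your Part~3 contradiction argument is essentially the proof of Lemma~\ref{lemmab3}.

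There is, however, a genuine gap in Part~1 --- and it is precisely the subtlety the paper singles out in Section~\ref{sub3.2}. You write that when (\ref{costdifference}) fails, ``Theorem~\ref{Theorem1}(1) applies for \emph{every} endogenous $\pi(\theta_1)\in(0,1)$ and the buyer's equilibrium strategy is uniformly close to $\overline{\sigma}_b$ regardless of $\pi$,'' and you conclude delay vanishes because ``the pooling branch delivers no delay.'' Both assertions are false. Theorem~\ref{Theorem1} is a fixed-$\pi$, $\varepsilon\to 0$ statement; the uniformity you invoke holds only on compact subsets of beliefs bounded away from $1$. When $c\in\big(\tfrac{\theta_2-\theta_1}{2},\theta_2-\theta_1\big)$ and (\ref{costdifference}) is violated, the actual equilibrium has $\pi(\theta_1)\to 1$ at a rate tied to $\varepsilon$, and in that regime the buyer does \emph{not} play $\overline{\sigma}_b$: she offers a price near $p_{\theta_1}$, type $\theta_1$ accepts with probability close to $1$, and type $\theta_2$ counteroffers and trades with nontrivial delay pinned down by the seller's adoption-stage indifference. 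Your contradiction argument does establish $\pi(\theta_1)>1-\eta$ (once you restrict the uniform version of Theorem~\ref{Theorem1} to $\pi(\theta_1)\leq 1-\eta$), but the delay conclusion does not follow from ``pooling delivers no delay.'' The correct argument, which is the paper's in Lemma~\ref{lemmab5}, is that the war-of-attrition analysis (Lemmas~\ref{L.A1}--\ref{L.A3}) gives negligible delay conditional on $\theta=\theta_1$ \emph{regardless of the buyer's offer}, and since $\pi(\theta_1)\to 1$ the expected delay vanishes even though conditional on $\theta=\theta_2$ it does not.

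You correctly flag the knife-edge issue at $\pi^*$ for Parts~2 and~3, but you miss that the same order-of-limits problem bites at the boundary $\pi(\theta_1)=1$ in Part~1 --- and there it changes the description of the equilibrium, not just the proof technique.
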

The proof is in Appendix \ref{sub4.2}. 
Theorem \ref{Theorem2} shows that when the probability of commitment types vanishes, the seller's adoption decision can be socially inefficient \textit{if and only if} the social benefit from adoption $\theta_2-\theta_1$ is large enough. Under a stronger condition that $\theta_2$ is greater than the Rubinstein bargaining price under the production cost of the new technology $p_{\theta_1} \equiv \frac{1+\theta_1}{2}$, inefficient adoption and costly delay occur in \textit{all} equilibria as long as the adoption cost $c$ is between half of the social benefit from adoption $\frac{\theta_2-\theta_1}{2}$ and the entire social benefit from adoption $\theta_2-\theta_1$.

Theorem \ref{Theorem2} has two implications. First, inefficient adoption occurs in equilibrium if and only if  \textit{in expectation}, there are significant delays in reaching agreement. This stands in contrast to the benchmark scenario where the buyer can observe the seller's adoption decision, in which case there is negligible delay in reaching agreement but the seller's adoption decision is socially inefficient. Second, an increase in the social benefit from adoption $\theta_2-\theta_1$ can lower the probability of adoption. This is because inefficient adoption and costly delays can arise only when the social benefit from adoption is large enough. We will formally state this finding as Corollary \ref{cor4} later in this section.

One useful observation that is \textit{not} explicitly stated in Theorem \ref{Theorem2} is that when the parameter values belong to the striped region of Figure \ref{fig:eqm}, the buyer will offer something close to the Rubinstein bargaining price of the low-cost type $p_{\theta_1}$ and costly delay will arise in \textit{all} equilibria \textit{conditional on} the seller having a high production cost. This is the case even when $\theta_1$ and $\theta_2$ violate (\ref{costdifference}), under which for \textit{any exogenous full support distribution} over production costs, the buyer will offer something close to $p_{\theta_2}$
and there is almost no delay in reaching an agreement.

In order to understand Theorem \ref{Theorem2} as well as how to square the above observation with Theorem \ref{Theorem1}, we start from a heuristic explanation for Theorem \ref{Theorem2} using our result for reputational bargaining games with an \textit{exogenous} cost distribution. Then we point out a contradiction that results from this line of reasoning and explain why Theorem \ref{Theorem1} \textit{cannot} be directly applied to settings where the cost distribution is \textit{endogenous}. Then we provide the correct explanation and explain the differences between games with endogenous adoption and ones with exogenous cost distributions.

First, let us examine the seller's equilibrium share of the social surplus from technology adoption. 
Fix any $\pi \in \Delta \{\theta_1,\theta_2\}$ that has full support and satisfies $\pi(\theta_1) \neq \pi^*$. Theorem \ref{Theorem1} implies that as $\varepsilon \rightarrow 0$ (i) in every efficient equilibrium,
the difference between the low-cost type's equilibrium payoff and that of the high-cost type's is approximately $\theta_2-\theta_1$, and (ii) in every inefficient equilibrium, the difference in equilibrium payoffs is approximately 
$ (\theta_2-\theta_1) \alpha$ where
\begin{displaymath}
\alpha \equiv \left\{ \begin{array}{ll}
\frac{1}{2} & \textrm{if } p_{\theta_1}< \theta_2\\
\frac{1-\theta_2}{1-\theta_1} & \textrm{if } p_{\theta_1} \geq \theta_2 \textrm{ and } (\theta_1,\theta_2) \textrm{ satisfies } (3.5).
\end{array} \right.
\end{displaymath}
Intuitively, in every efficient equilibrium, both types of the seller trade immediately at the same price, in which case the seller captures the entire surplus from adoption. In every inefficient equilibrium, 
the seller only captures part of the surplus due to the buyer's screening offer.

Next, let us examine the seller's equilibrium adoption probability when his 
adoption cost $c$ is \textit{strictly} between  $(\theta_2-\theta_1) \alpha$ and $\theta_2-\theta_1$. The seller cannot adopt with zero probability since the buyer will offer a high price $p_{\theta_2}$. If this is the case, then the seller's gain from adoption is $\theta_2-\theta_1$, which will provide him a \textit{strict} incentive to adopt the technology. 
This contradicts the hypothesis that he adopts with zero probability. 
He cannot adopt for sure since the buyer will then offer $p_{\theta_1}$ in which case type $\theta_2$'s payoff is at least $\frac{1-\theta_2}{2}$ when he demands $p_s \approx 1$ and never concedes. The seller's gain from adoption is at most $\frac{\theta_2-\theta_1}{2}$. As long as the cost of adoption is strictly more than that, the seller will have no incentive to adopt  which leads to a contradiction. 
We can also rule out interior adoption probabilities that are not $\pi^*$, 
since the seller's gain from adoption will not equal his adoption cost, in which case he will have no incentive to mix at the adoption stage.

Therefore, we have ruled out all adoption probabilities except for $\pi^*$. 
However, when $(\theta_1,\theta_2)$ violates (\ref{costdifference}), or equivalently $\pi^*=1$, if
the seller adopts with probability $\pi^*=1$ and his adoption cost satisfies $(\theta_2-\theta_1)\alpha<c< \theta_2-\theta_1$, then our earlier reasoning suggests that his benefit from adoption will be strictly lower than his cost, which again leads to a contradiction. This contradiction seems to rule out all possible adoption probabilities under certain parameter values.

Such a contradiction arises due to a mistake in the above argument, that is, Theorems \ref{Theorem1} and \ref{Theorem2} are different in terms of their orders of taking limits, making Theorem \ref{Theorem1} and other existing results on reputational bargaining inapplicable to settings where $\pi$ is endogenous. Specifically, Theorem \ref{Theorem1} characterizes the set of equilibria under a \textit{fixed cost distribution $\pi$} in the limit where $\varepsilon \rightarrow 0$. The same order of limits also applies to the results in APS and Fanning (2024). However, 
$\pi$ depends on the probability of commitment types $\varepsilon$ when the adoption probability is \textit{endogenous} and the probability that the seller has a high production cost may also vanish as $\varepsilon \rightarrow 0$.


This is indeed what happens when $(\theta_1,\theta_2)$ violates (\ref{costdifference}) and  $c \in \big(\frac{\theta_2-\theta_1}{2}, \theta_2-\theta_1 \big)$. We show that for every $\xi>0$, there exists $\overline{\varepsilon}>0$ such that Theorem \ref{Theorem1} applies for all $\varepsilon< \overline{\varepsilon}$ and $\pi$ with $\pi(\theta_1) \in [0,1-\xi]$. However, the qualitative features of the equilibria are different for any fixed $\varepsilon >0$ as $\pi(\theta_1) \rightarrow 1$. 
Although for any fixed $\pi \in \Delta \{\theta_1,\theta_2\}$, the buyer strictly prefers the pooling offer $p_{\theta_2}$ as $\varepsilon \rightarrow 0$, she will prefer to offer $p_{\theta_1}$ for any small but fixed $\varepsilon>0$ as $\pi(\theta_1)$ goes to $1$. In response to the buyer's pooling offer, type-$\theta_2$ seller will demand a higher price and trade with delay, with the expected delay pinned down by the seller's indifference condition at the adoption stage. Type-$\theta_1$ seller will accept the buyer's offer with probability close to $1$, and pool with type $\theta_2$ with probability close to $0$. Although there are significant delays conditional on the seller's production cost being $\theta_2$, these delays have negligible payoff consequences from an ex ante perspective since the probability with which the seller's production cost is $\theta_2$ will go to $0$ as $\varepsilon \rightarrow 0$.

On the other hand, when $c \in \big(\frac{\theta_2-\theta_1}{2}, \theta_2-\theta_1 \big)$ and $(\theta_1,\theta_2)$ satisfies not only (\ref{costdifference}) but also a stronger condition that $p_{\theta_1}<\theta_2$, one can no longer sustain an approximately efficient outcome where the seller adopts the technology with probability close to $1$. This is because when the buyer offers $p_{\theta_1}$, the seller has no incentive to concede when his cost is $\theta_2$. As a result, the seller can secure a payoff of $\frac{1-\theta_2}{2}$ by not adopting the technology and demanding something close to $1$. This guaranteed payoff $\frac{1-\theta_2}{2}$ is strictly greater than his payoff from adopting the technology and accepting the buyer's offer $p_{\theta_1}$, which contradicts the hypothesis that the seller adopts the technology with probability close to $1$. Therefore, in every equilibrium, the seller will adopt with probability bounded below $1$.  As $\varepsilon \rightarrow 0$, the equilibrium adoption probability is close to $\pi^*$, since it is the only adoption probability that can make the buyer indifferent between the pooling offer $p_{\theta_2}$ and her optimal screening offer.

When $(\theta_1,\theta_2)$ satisfies (\ref{costdifference}) but $p_{\theta_1} \geq \theta_2$, and $c \in (\underline{c},\overline{c})\equiv\big(\frac{(1-\theta_2)(\theta_2-\theta_1)}{1-\theta_1}, \theta_2-\theta_1 \big)$, there exist inefficient equilibria where the seller adopts with probability close to $\pi^*$ since Theorem \ref{Theorem1} applies uniformly to all $\pi$ with $\pi(\theta_1)$ bounded below $1$. However, there may also exist efficient equilibria where the seller adopts with probability close to $1$. We explain why there can be multiple limit points in Appendix \ref{sub4.2}.

The above explanation also sheds light on why inefficient adoption occurs \textit{if and only if} there are significant delays in bargaining. Intuitively, if the equilibrium adoption decision is inefficient, then it cannot be the case that both types of the seller trade immediately. This is because otherwise, both types must trade at the same price and the seller can capture all the gains from adoption, providing him a strict incentive to make the efficient adoption decision. If the adoption decision is approximately efficient, then the probability with which the seller does not adopt must be arbitrarily close to zero when $c<\theta_2-\theta_1$.\footnote{In our model, inefficient adoption is caused by the hold-up problem, so the seller will not adopt when his adoption cost is greater than the social benefit. This implies that inefficiency can only take the form of under-adoption.} 
Since inefficient delay \textit{cannot} occur when the seller has a low cost, the welfare losses from delay must be negligible from an ex ante perspective.


\paragraph{Adoption Probability \& Welfare:} We provide sufficient conditions under which an increase in the social benefit from adoption lowers the probability of adoption and leads to a longer expected delay. 
\begin{Corollary}\label{cor4}
For every $\theta_1$, $\theta_2$, and $c$ that satisfy
\begin{equation*}
\theta_2-\theta_1>\frac{1-\theta_2}{2}, \text{ and } \max\Big\{\frac{1}{2},\frac{1-\theta_2}{1-\theta_1}\Big\}(\theta_2-\theta_1) <c< \theta_2-\theta_1,
\end{equation*}
and every $\widehat{\theta}_1< \theta_1$ that satisfies $\frac{1+\widehat{\theta}_1}{2}<\theta_2$ and $\widehat{\theta_1} \in (\theta_2-2c,\theta_2-c)$,
there exists $\overline{\nu}>0$ such that for every $\nu < \overline{\nu}$, there exists $\overline{\varepsilon}_\nu>0$ such that if $\varepsilon<\overline{\varepsilon}_\nu$,
\begin{enumerate}
    \item The probability of adoption in any equilibrium under $(\theta_1,\theta_2,c,\varepsilon,\nu)$ is strictly greater than the probability of adoption in any equilibrium under $(\widehat{\theta}_1,\theta_2,c,\varepsilon,\nu)$. 
\item The expected delay in any equilibrium under $(\theta_1,\theta_2,c,\varepsilon,\nu)$ is strictly less than the expected delay in any equilibrium under $(\widehat{\theta}_1,\theta_2,c,\varepsilon,\nu)$. 
\item The expected welfare in any equilibrium under $(\theta_1,\theta_2,c,\varepsilon,\nu)$ is weakly greater than the expected welfare in any equilibrium under $(\widehat{\theta}_1,\theta_2,c,\varepsilon,\nu)$. 
\end{enumerate}
\end{Corollary}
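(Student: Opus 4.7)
The plan is to invoke Theorem \ref{Theorem2} for each of the two parameter configurations, pin down the limiting adoption probabilities and expected delays from the equilibrium indifference conditions, and then compare the two limits by direct algebra. The premise $\widehat{\theta}_1\in(\theta_2-2c,\theta_2-c)$ is equivalent to $c\in(\tfrac{\theta_2-\widehat{\theta}_1}{2},\theta_2-\widehat{\theta}_1)$, and combined with $p_{\widehat{\theta}_1}<\theta_2$ places $(\widehat{\theta}_1,\theta_2,c)$ squarely in the regime of Theorem \ref{Theorem2}(3). Thus for sufficiently small $(\varepsilon,\nu)$, every equilibrium there has adoption probability within $\eta$ of $\pi^*(\widehat{\theta}_1,\theta_2)=\tfrac{1-\theta_2}{1-\widehat{\theta}_1}$ and expected delay within $\eta$ of a positive constant $\widehat{D}>0$. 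For $(\theta_1,\theta_2,c)$, I split into Case A ($p_{\theta_1}<\theta_2$, where Theorem \ref{Theorem2}(3) applies and $\pi^*(\theta_1,\theta_2)=\tfrac{1-\theta_2}{1-\theta_1}$) and Case B ($p_{\theta_1}\geq\theta_2$, where Theorem \ref{Theorem2}(2) yields inefficient equilibria with $\pi^*(\theta_1,\theta_2)=\tfrac{1-\theta_2}{2(\theta_2-\theta_1)}$ and possibly efficient equilibria with adoption tending to $1$).

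For the adoption comparison, Case A gives $\pi^*(\theta_1,\theta_2)>\pi^*(\widehat{\theta}_1,\theta_2)$ directly from $\widehat{\theta}_1<\theta_1$. In Case B, $p_{\theta_1}\geq\theta_2$ means $\theta_1\geq 2\theta_2-1$, so $1-\theta_1\geq 2(1-\theta_2)\geq 2(\theta_2-\theta_1)$; combined with $\widehat{\theta}_1<\theta_1$ this yields $1-\widehat{\theta}_1>2(\theta_2-\theta_1)$, hence $\pi^*(\theta_1,\theta_2)>\pi^*(\widehat{\theta}_1,\theta_2)$. Efficient equilibria in Case B have adoption approaching $1$ and trivially exceed $\pi^*(\widehat{\theta}_1,\theta_2)<1$.

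For the delay comparison, the seller's indifference at the adoption stage, together with the war-of-attrition characterization from Theorem \ref{Theorem1}, pins down the buyer's screening probability $q$ and yields the limiting expected delays
\[
g_A(y)=\frac{2y^2(y-c)}{(a+y)(y^2+a^2)}, \qquad g_B(y)=\frac{y-c}{2y},
\]
in Case A and Case B respectively, where $y\equiv\theta_2-\theta_1$ and $a\equiv 1-\theta_2$. Setting $\widehat{y}\equiv\theta_2-\widehat{\theta}_1$, substitution gives $g_A(a)=g_B(a)=\tfrac{a-c}{2a}$; $g_B$ is strictly increasing on $(0,\infty)$; $p_{\widehat{\theta}_1}<\theta_2$ gives $\widehat{y}>a$; and the inequalities $c<\theta_2-\theta_1$ and $c<\theta_2-\widehat{\theta}_1$ place both $y$ and $\widehat{y}$ in $(c,\infty)$. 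The crux is to show $g_A$ is strictly increasing on $(c,\infty)$: the sign of $g_A'$ equals that of $P(y)\equiv(a+c)y^3+2a^2y^2+a^2y(3a-c)-2a^3c$, and a direct calculation gives $P(c)=c(a+c)(a^2+c^2)>0$ while $P'(y)=3(a+c)y^2+4a^2y+a^2(3a-c)$ is a quadratic with positive leading coefficient satisfying $P'(c)=3(a+c)c^2+3a^2c+3a^3>0$, which forces any real roots of $P'$ to lie below $c$. Given this monotonicity, in Case A the chain $c<y<\widehat{y}$ gives $g_A(y)<g_A(\widehat{y})$; in Case B the chain $c<y\leq a<\widehat{y}$ gives $g_B(y)\leq g_B(a)=g_A(a)<g_A(\widehat{y})$; and efficient equilibria in Case B have delay tending to $0<\widehat{D}$.

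\textbf{Main obstacle.} The delay comparison is the technically involved step, particularly verifying that $g_A$ is monotonic on $(c,\infty)$ via the polynomial analysis above. Once all three strict inequalities hold in the limit, I choose $\eta$ strictly smaller than the minimum of the resulting gaps and then take $\overline{\nu}$ and $\overline{\varepsilon}_\nu$ as in Theorem \ref{Theorem2}, so that every actual equilibrium under $(\varepsilon,\nu)$ is close enough to its limit that the strict inequalities propagate to the corollary's claims.
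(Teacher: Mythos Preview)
Your overall strategy---invoke Theorem \ref{Theorem2} (via Lemmas \ref{lemmab3} and \ref{lemmab4}) to pin down the limiting adoption probabilities and delays in each parameter regime, then compare the limits---is correct and is the natural route. The adoption comparison is also essentially right, modulo a sign slip: from $p_{\theta_1}\geq\theta_2$ you get $1-\theta_1\leq 2(1-\theta_2)$, not $\geq$; but all you actually need is $a\geq y$ (i.e., $1-\theta_1\geq 2(\theta_2-\theta_1)$), which is just the Case~B condition itself, and then $1-\widehat{\theta}_1>1-\theta_1\geq 2(\theta_2-\theta_1)$ gives the claim.

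The genuine gap is in the delay comparison: your formulas $g_A$ and $g_B$ are not the limiting expected delays. From the proof of Lemma~\ref{lemmab3}, when $p_{\theta_1}<\theta_2$ the limiting delay is
\[
(1-\pi^*)(1-\rho^*)\cdot\tfrac{1}{2}=\frac{\theta_2-\theta_1-c}{1-\theta_1}=\frac{y-c}{a+y},
\]
and from Lemma~\ref{lemmab4}, when $p_{\theta_1}\geq\theta_2$ the inefficient-equilibrium delay is
\[
\frac{(3\theta_2-1-2\theta_1)(\theta_2-\theta_1-c)}{2(\theta_2-\theta_1)^2}=\frac{(2y-a)(y-c)}{2y^2}.
\]
Neither matches your $g_A(y)=\frac{2y^2(y-c)}{(a+y)(y^2+a^2)}$ or $g_B(y)=\frac{y-c}{2y}$; in particular the entire polynomial analysis of $P(y)$ is built on an incorrect expression and is unnecessary. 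With the correct formulas the argument is much shorter: set $f(y)=\frac{y-c}{a+y}$ and $h(y)=\frac{(2y-a)(y-c)}{2y^2}$. Then $f'(y)=\frac{a+c}{(a+y)^2}>0$, so Case~A is immediate from $y<\widehat{y}$. For Case~B, $h'(y)$ has the sign of $(a+2c)y-2ac$, and one checks $\frac{2ac}{a+2c}<\max\{c,\tfrac{a}{2}\}$ (split on $c\gtrless a/2$), so $h$ is increasing on the Case~B range $y>\max\{c,a/2\}$. Since $h(a)=f(a)=\frac{a-c}{2a}$ and $y\leq a<\widehat{y}$, you get $h(y)\leq h(a)=f(a)<f(\widehat{y})$, and the efficient equilibria in Case~B have delay near $0$.
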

\begin{figure}[h]
    \centering
    \subfloat[Adoption probability]{\includegraphics[width=.45\textwidth]{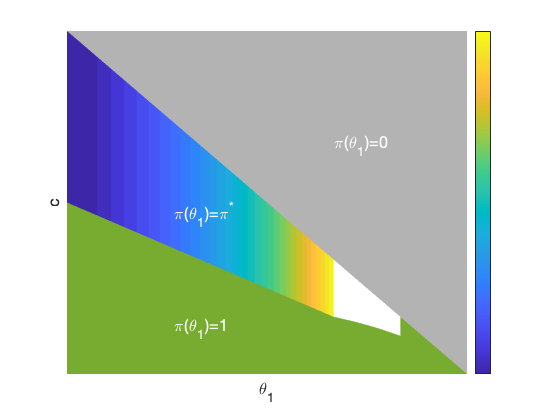} \label{fig:adoption}}
    \subfloat[Expected delay]{\includegraphics[width=.45\textwidth]{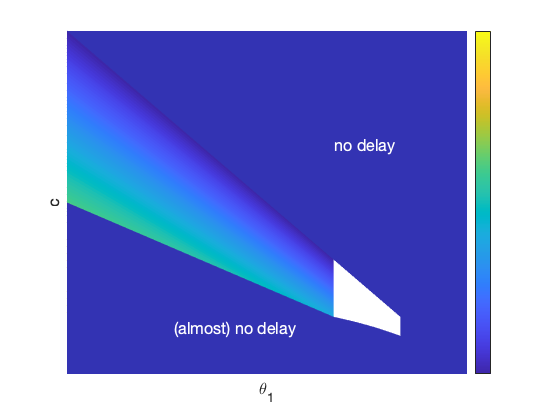} \label{fig:delay}}
    \caption{Comparative statics on the equilibrium outcomes. The white region represents the set of parameter values under which there are multiple limiting equilibria. In regions where the unique limiting equilibrium is inefficient, the values of  $\pi(\theta_1)$ and $1-\mathbb{E}[e^{-r \min \{\tau_s,\tau_b\}}]$ are depicted in panels (a) and (b), respectively, in ascending vertical order according to the color bar on the right of the figure. 
   That is, aside from the white region, a deeper color represents a lower adoption probability (in the left panel) and a lower expected delay (in the right panel) in the unique limiting equilibrium. 
    In the remaining regions, the seller's adoption decision is socially efficient: The adoption probability is $0$ when $c>\theta_2-\theta_1$, and is $1$ when $c<\theta_2-\theta_1$, and the expected delay is zero in the limit.}
\end{figure}
The proof is in Online Appendix F. We depict the complete comparative statics on the adoption probability and the expected delay 
in Figures \ref{fig:adoption} and \ref{fig:delay}, where the white region represents parameter values under which there are multiple limiting equilibria. Corollary \ref{cor4} implies that when the production cost under the new technology decreases from $\theta_1$ to $\widehat{\theta}_1$, i.e., adoption becomes more socially beneficial, the probability of adoption decreases as long as $\theta_2-\widehat{\theta}_1$ is intermediate: It is large enough so that the buyer has an incentive to screen the seller, but is not too large relative to the adoption cost $c$ so that the seller has no incentive to adopt if he knew that the buyer will offer $p_{\widehat{\theta}_1}$. It also implies that a decrease from $\theta_1$ to $\widehat{\theta}_1$ when $\theta_2-\widehat{\theta}_1$ is intermediate can also lead to a longer expected delay, which leads to further efficiency losses.


\section{Extension: Choosing Between Multiple New Technologies}\label{sec5}
This section extends our theorems to settings where the seller chooses a production technology from $\{1,2,...,n\}$ before bargaining with the buyer, where $\theta_j$ stands for the production cost of technology $j$ and $c_j$ stands for the cost of adopting technology $j$. Let $\Theta\equiv\{\theta_1,...,\theta_n\}$ and $C\equiv\{c_1,...,c_n\}$.

We assume that $0<\theta_1<...<\theta_n<1$ and $c_1>...>c_n=0$. This implies that (i) there exists a default technology $\theta_n$ that is costless to adopt, (ii) all new technologies $\theta_1,...,\theta_{n-1}$ are costly to adopt but lead to lower production costs compared to the default one, and (iii) technologies that have higher adoption costs have lower production costs. 
These assumptions are without loss of generality since a technology will never be adopted in any equilibrium if it costs more than a more efficient technology. We make a generic assumption that there is a unique \textit{socially efficient technology} and focus on the interesting case that the socially efficient technology is not the default one. Formally, we assume that there exists $j^o<n$ such that $\{j^o\} = \argmin_{k \in \{1,2,...,n\}} \Big\{ \theta_k+c_k \Big\}$.

First, we consider a reputational bargaining game where the distribution over production cost $\pi\in\Delta(\Theta)$ is exogenous. Theorem \ref{Theorem3} characterizes players' equilibrium strategies in the limit as $\nu$ and $\varepsilon$ go to zero. Let $\sigma^*_{b,i}\in\Delta[0,1]$ denote the buyer's strategy of offering $\min\{p_{\theta_i},\theta_{i+1}\}$. For $p_b>\theta_1$, let
\begin{align}\label{equilibriumoffers}
    p_{s,\theta}^*(p_b) \equiv 
    \begin{cases} 
    1, \quad & \text{if  } p_b\leq \theta, \\
    \max\big\{p_b,1+\max\{\hat{\theta}\in \Theta:p_b>\hat{\theta}\}-p_b\big\}, \quad & \text{if  } p_b>\theta, 
    \end{cases}
\end{align}
and let $\sigma_{s,\theta}^*(p_b)\in \Delta[0,1]$ be the strategy for type $\theta$ that assigns probability $1$ to $p_{s,\theta}^*(p_b)$. That is, for every $p_b>\theta_1$ and $\theta_j=\max\{\hat{\theta}\in\Theta:p_b>\hat{\theta}\}$, $\sigma^*_s\equiv (\sigma_{s,\theta}^*)_{\theta\in\Theta}$ prescribes all types with production cost strictly greater than $\theta_j$ to demand the entire surplus $1$, and all types with production cost no more than $\theta_j$ to offer a price under which the buyer and the seller have the same concession rate when the seller's production cost is known to be $\theta_j$.

For any $i,j\in \{1,...,n\}$ such that $i<j$, let $\pi[\theta_i,\theta_j]$ be the probability that $\theta\in [\theta_i,\theta_j]$. Theorem \ref{Theorem3} characterizes the unique limiting equilibrium of the reputational bargaining game with an exogenous cost distribution under the generic conditions that
\begin{equation} \label{optcutoff}
\argmax_{i\in\{1,...,n\}}\pi[\theta_1,\theta_i] \Big(\min\{p_{\theta_i},\theta_{i+1}\}-\theta_i \Big) 
\end{equation}
is a singleton with its unique element denoted by $i^*$, and that the cost distribution $\pi$ is interior. By an analogous logic to \ref{sub3.1}, $\pi[\theta_1,\theta_i] \Big(\min\{p_{\theta_i},\theta_{i+1}\}-\theta_i \Big)$ is the buyer's limiting payoff when she screens all sellers with type below $\theta_i$---i.e., when she follows the strategy $\sigma^*_{b,i}$. Thus, $i^*$ represents the uniquely optimal screening cutoff for the buyer.\footnote{We show this formally in the proof of Theorem \ref{Theorem3}.}
\begin{Theorem} \label{Theorem3}
There exists at least one equilibrium. Suppose $\pi \in \Delta (\Theta)$ is such that $\pi(\theta)>0$ for all $\theta\in\Theta$. For every $\eta>0$, there exists $\bar{\nu}>0$ such that when $\nu<\bar{\nu}$, there exists $\bar{\varepsilon}_{\nu} >0$ such that for every $\varepsilon \in (0,\bar{\varepsilon}_{\nu})$ and every equilibrium $(\sigma_s,\sigma_b,\tau_s,\tau_b)$ under $(\varepsilon, \nu)$,
\begin{enumerate}
    \item $\sigma_b$ is $\eta$-close to $\sigma_{b,i^*}$, and $\sigma_s(p_b)$ is $\eta$-close to $\sigma_{s}^*(p_b)$ for every $p_b$ such that $\sigma_b(p_b)>\eta$.
    \item Conditional on $\theta\leq \theta_{i^*}$, the expected welfare loss from delay is less than $\eta$. 
    \item Conditional on $\theta>\theta_{i^*}$, the buyer's payoff is $0$, and the expected welfare loss from delay is $\eta$-close to
\begin{equation}\label{expecteddelay2}
(1-\theta) \Bigg\{ 1-\frac{1-\min\Big\{p_{\theta_{i^*}},\theta_{i^*+1}\Big\}}{1-\theta_{i^*}}\Bigg\}. 
\end{equation}
\end{enumerate}
\end{Theorem}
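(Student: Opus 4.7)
The plan is to extend the proof of Theorem~\ref{Theorem1} to the $n$-type setting by showing that, after any of the buyer's offers $p_b\in(\theta_j,\theta_{j+1}]$, the continuation game decomposes into the same two-tier structure as in the binary case: every rational type with cost above $\theta_j$ demands the entire surplus (since its concession payoff is negative, it retains full commitment power), while every rational type with cost at most $\theta_j$ pools on a single counter-offer determined by the highest conceding type $\theta_j$. Existence of equilibrium follows from a standard fixed-point argument analogous to Abreu and Gul (2000) and requires no new ideas.

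The key technical step is a multi-type analogue of Lemma~\ref{Lemma_4.5}: for every $p_b\in(\theta_j,\theta_{j+1}]$ played on path, in the limit $(\varepsilon,\nu)\to 0$ every rational type $\theta>\theta_j$ demands $p_s=1$. I would prove this by induction from the top down. The highest cost $\theta_n$ has the strongest commitment power, so the monotone-incentive argument sketched after equations~(\ref{sellerconcession})--(\ref{buyerconcession}) in the text---if $\theta_n$ placed weight on some $p_s<1$, any lower type in the same pool would strictly prefer to deviate upward into $\mathbf{P}_s\cap(p_s,1)$, which is nonempty by richness---forces $\theta_n$ to demand $1$; conditional on that, the same argument applies to $\theta_{n-1}$ against the residual pool, and so on down to $\theta_{j+1}$. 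Given this pooling on $1$, every type $\theta\le\theta_j$ faces the same limiting buyer concession rate $r(p_b-\theta_j)/(p_s-p_b)$, since buyer indifference is eventually pinned down by the highest conceding cost $\theta_j$ (time spent at lower rates $\lambda_b^k$ for $k<j$ vanishes as $\varepsilon\to 0$). Each such type then solves the same one-dimensional trade-off and optimally counter-offers $\max\{p_b,1+\theta_j-p_b\}$, reproducing $\sigma^*_s$ on $(\theta_j,\theta_{j+1}]$.

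Given the cutoff structure, the buyer's payoff from offering $p_b\in(\theta_j,\theta_{j+1}]$ equals $\pi[\theta_1,\theta_j]\bigl(1-\max\{p_b,1+\theta_j-p_b\}\bigr)$, which is maximized over the interval at $p_b=\min\{p_{\theta_j},\theta_{j+1}\}$ with value $\pi[\theta_1,\theta_j]\bigl(\min\{p_{\theta_j},\theta_{j+1}\}-\theta_j\bigr)$. Optimizing over $j$ identifies the unique $i^*$ by the genericity assumption; off-grid offers are weakly dominated by the nearest grid point by the same computation. The expected delay conditional on $\theta>\theta_{i^*}$ is then pinned down by incentive bounds adapted from (\ref{upperbound})--(\ref{lowerbound}): type $\theta_{i^*}$ must weakly prefer its equilibrium payoff $\max\{p_{\theta_{i^*}},1-(\theta_{i^*+1}-\theta_{i^*})\}-\theta_{i^*}$ to demanding $1$ (upper bound on $\mathbb{E}[e^{-r\tau_b}]$), while any type $\theta>\theta_{i^*}$ must weakly prefer demanding $1$ to a slight underbid that fully reveals rationality (lower bound). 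Both bounds coincide in the limit and yield (\ref{expecteddelay2}) after multiplying by $(1-\theta)$.

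The main obstacle is the iterated top-down argument for the generalized Lemma~\ref{Lemma_4.5}: with $n>2$, several types above the cutoff can simultaneously mix across prices strictly less than $1$, and one must rule out \emph{all} such configurations rather than a single deviation as in the binary case. The induction hinges on single-crossing of concession incentives being preserved when the reputation pool near $1$ contains multiple rational types, which is ultimately driven by the fact that the buyer's best response against a deviator demanding $p_s<1$ is determined by the lowest cost in the deviator's pool. Making this precise uniformly in $(\varepsilon,\nu)\to 0$---in particular, ensuring that the induction can be closed in the correct order of limits---will be the most delicate estimate. Once the lemma is in hand, the remaining arguments are routine adaptations of the binary case, and convergence to a unique limit follows from the uniqueness of $i^*$ and of the pinned-down delay.
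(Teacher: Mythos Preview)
Your proposal is correct and follows essentially the paper's approach: the paper states that the proof of Theorem~\ref{Theorem3} mirrors that of Theorem~\ref{Theorem1} once a multi-type version of Lemma~\ref{Lemma_4.5} is established (relegated to an online appendix), which is precisely the structure you outline, and you correctly flag this generalized lemma as the main technical hurdle. One minor sharpening: the paper identifies the binding lower-bound constraint on the delay as coming specifically from type $\theta_{i^*+1}$ (the lowest non-conceding type), rather than from an arbitrary $\theta>\theta_{i^*}$.
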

The proof is in Online Appendix D, which is similar to the one for Theorem \ref{Theorem1} except that we need to establish a general version of Lemma \ref{Lemma_4.5} that allows for three or more production costs. The details can be found in Online Appendix C. 
According to Theorem \ref{Theorem3}, the qualitative features of the equilibrium in this general environment are similar to the ones in which there are two costs. By making an offer $p_b$ that belongs to $(\theta_i,\theta_{i+1}]$ with $i\in\{1,...,n-1\}$, the buyer is able to screen the seller by providing incentives to all types with cost weakly lower than $\theta_i$ to trade with negligible delay, and all types with cost strictly greater than $\theta_i$ to separate and demand the entire surplus. Using the same arguments as those in the proof of Theorem \ref{Theorem1}, the optimal way in which the buyer can screen types with cost no more than $ \theta_i$ is by offering approximately $\min\{p_{\theta_i},\theta_{i+1}\}$. This ensures the buyer a payoff of $\pi[\theta_1,\theta_i](\min\{p_{\theta_i},\theta_{i+1}\}-\theta_i)$ in the limit.

According to our theorem,
the buyer will screen the seller in equilibrium if and only if $i^* <n$, in which case she will offer a price that is lower than $\theta_n$ instead of offering a price that is close to $p_{\theta_n} \equiv \frac{1+\theta_n}{2}$. Conditional on the buyer offering $p_b$ in a $\nu$-neighborhood of $\min\{p_{\theta_{i^*}},\theta_{i^*+1}\}\leq \theta_n$, there will be inefficient delay whenever the seller's production cost satisfies $\theta>\theta_{i^*}$. This is because delay is necessary in order to satisfy the low-cost types' incentive constraints. The expected delay in (\ref{expecteddelay2}) is pinned down by the conditions that after the buyer offers $\min\{p_{\theta_{i^*}},\theta_{i^*+1}\}$, (i) type $\theta_{i^*}$ does not benefit from demanding $1$, and (ii) type $\theta_{i^*+1}$ does not profit from deviating to making an offer slightly below $1$ and waiting for the buyer to concede. 

In the complementary case in which $i^*=n$, the buyer strictly prefers to make a pooling offer  $p_{\theta_n}$ after which all types of the seller accept immediately, than to make any screening offer. 
As in the two-cost environment, trade happens immediately at a price of approximately $p_{\theta_n}$. 

Next, we describe the limiting equilibria in the game with \textit{endogenous} technology adoption. Under the assumption that $j^o<n$, the investment and the bargaining outcome may be inefficient provided that the buyer finds it beneficial to screen the seller. We provide a condition on the gap between different types of the seller's production costs under which the buyer may benefit from making screening offers \textit{under some distribution over cost types}. This is the case if and only if
\begin{align}
   \theta_n-\theta_1>\frac{1-\theta_{n}}{2}. \label{costdifference2}
\end{align}
As we explain later, if (\ref{costdifference2}) is violated, then the buyer can never benefit from screening the seller by making an offer below $\theta_{j^o}$, since any such offer is strictly dominated by offering $p_{\theta_{n}}$. 
\begin{Theorem} \label{Theorem4}
    There exists at least one equilibrium. For every $\eta>0$, there exists $\bar{\nu}>0$ such that when $\nu<\bar{\nu}$, there exists $\bar{\varepsilon}_{\nu} >0$ such that for every $\varepsilon \in (0,\bar{\varepsilon}_{\nu})$
\begin{enumerate} 
\item If (\ref{costdifference2}) is violated, the seller adopts $\theta_{j^o}$ with probability greater than $1-\eta$, and the expected welfare loss from delay is less than $\eta$, in any equilibrium.
\item If (\ref{costdifference2}) is satisfied, there exists an open set of adoption costs such that there exists an equilibrium where the seller adopts $\theta_{j^o}$ with probability bounded below one, and the expected delay is bounded above zero. 
\item If $p_{\theta_{j^o}}<\theta_n$, there exists an open set of adoption costs such that, in all equilibria, the seller adopts $\theta_{j^o}$ with probability bounded below one, and the expected delay is bounded above $0$. 
\end{enumerate}
\end{Theorem}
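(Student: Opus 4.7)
The plan is to mimic the architecture of the proof of Theorem \ref{Theorem2}, using Theorem \ref{Theorem3} in place of Theorem \ref{Theorem1} as the building block that pins down continuation play in the bargaining stage for any fixed $\pi \in \Delta(\Theta)$, and then closing the loop by requiring $\pi$ to be consistent with the seller's adoption incentives. Existence of an equilibrium would be obtained by a standard fixed-point argument on the compact simplex $\Delta(\Theta)$, mapping each $\pi$ to the set of adoption best responses when the continuation play is selected from the equilibria identified by Theorem \ref{Theorem3}, paralleling the construction used in Appendix \ref{sub4.2} for the two-type game.

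For Part 1, I would observe that if (\ref{costdifference2}) is violated, then for every $\pi \in \Delta(\Theta)$ the optimal screening index $i^*$ of Theorem \ref{Theorem3} equals $n$: for each $i < n$, the inequality $\theta_{i+1} - \theta_i \leq \theta_n - \theta_1 \leq \frac{1 - \theta_n}{2} \leq \frac{1 - \theta_i}{2} = p_{\theta_i} - \theta_i$ yields $\min\{p_{\theta_i}, \theta_{i+1}\} = \theta_{i+1}$, and hence $\pi[\theta_1, \theta_i](\min\{p_{\theta_i}, \theta_{i+1}\} - \theta_i) \leq \theta_n - \theta_1 \leq \frac{1-\theta_n}{2} = p_{\theta_n} - \theta_n$. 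Theorem \ref{Theorem3} therefore implies that along any sequence $(\varepsilon, \nu) \to 0$ the buyer offers approximately $p_{\theta_n}$ and all types trade with vanishing delay. The seller's ex post payoff from adopting $\theta_j$ is then approximately $p_{\theta_n} - \theta_j - c_j$, uniquely maximized at $j = j^o$ by the generic assumption on $C$, so the probability of adopting $\theta_{j^o}$ tends to one and the expected welfare loss from delay tends to zero. As in the Theorem \ref{Theorem2} Part 1 argument, I would verify that this conclusion is uniform in $\pi$ over a neighborhood of the distribution putting mass close to one on $\theta_{j^o}$.

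For Parts 2 and 3, I would fix an index $i \in \{j^o, \dots, n-1\}$ with $\min\{p_{\theta_i}, \theta_{i+1}\} \in (\theta_{j^o}, \theta_n]$ and look at distributions for which this $i$ is the unique maximizer defining $i^*$. Theorem \ref{Theorem3} then dictates that types with $\theta \leq \theta_i$ trade almost immediately near the screening price $\min\{p_{\theta_i}, \theta_{i+1}\}$, while types with $\theta > \theta_i$ demand the entire surplus and trade with delay quantified by (\ref{expecteddelay2}). Consequently the seller's private gain from adopting $\theta_{j^o}$ rather than the default $\theta_n$ is strictly below the social gain $\theta_n - \theta_{j^o}$. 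Taking $c_{j^o}$ in the open interval strictly between the two, the buyer's indifference between screening and pooling together with the seller's indifference between $\theta_{j^o}$ and $\theta_n$ jointly determines an interior adoption probability and a nontrivial buyer mixture; intermediate technologies can be ruled out from equilibrium play by ordering the $c_j$'s so that every other technology is strictly dominated by $\theta_{j^o}$ or $\theta_n$ over the chosen parameter range. For Part 3, the strengthened hypothesis $p_{\theta_{j^o}} < \theta_n$ rules out the efficient candidate by the argument used in Theorem \ref{Theorem2} Part 3: if $\theta_{j^o}$ were adopted with probability close to one the buyer would offer approximately $p_{\theta_{j^o}}$, but then a deviation to the default $\theta_n$ paired with a demand close to $1$ secures payoff approximately $\frac{1-\theta_n}{2}$, which strictly exceeds $p_{\theta_{j^o}} - \theta_{j^o} - c_{j^o}$ whenever $c_{j^o} > \frac{\theta_n - \theta_{j^o}}{2}$, contradicting equilibrium.

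The main obstacle is the uniqueness statement in Part 3 once $n > 2$: I must rule out every candidate efficient equilibrium, including ones in which the seller mixes over multiple intermediate technologies rather than just over $\{\theta_{j^o}, \theta_n\}$. For each such candidate $\pi$ I would identify the buyer's induced offer via the maximizer $i^*(\pi)$ in Theorem \ref{Theorem3} and then exhibit a strictly profitable deviation, typically to a higher-cost technology paired with a demand near $1$, exploiting the gap between the buyer's offer and $\theta_n$; this requires a case analysis by the relative position of $p_{\theta_j}$ and $\theta_{j+1}$ for various $j$. A secondary technical difficulty is the fixed-point step underlying existence: the map $\pi \mapsto i^*(\pi)$ jumps across the hypersurfaces where two screening indices are tied, and I would lean on the generic uniqueness assumption on $\argmax_{i} \pi[\theta_1, \theta_i](\min\{p_{\theta_i}, \theta_{i+1}\} - \theta_i)$ together with a measurable selection to restore upper hemicontinuity on the relevant open set.
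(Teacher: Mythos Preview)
Your architecture mirrors the paper's: reduce the endogenous game to the exogenous one via Theorem \ref{Theorem3}, then close the loop with the seller's adoption indifference. Parts 2 and 3 of your sketch are essentially what the paper does; in particular, the paper also constructs the Part 2 equilibrium by having the seller mix only over $\{\theta_{j^o},\theta_n\}$ and the buyer mix over $p_{\theta_n}$ and $\min\{p_{\theta_{j^o}},\theta_n\}$, and the Part 3 contradiction via the deviation to $\theta_n$ and a demand near $1$ is identical.

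There is, however, a genuine gap in your Part 1. You invoke Theorem \ref{Theorem3} ``for every $\pi\in\Delta(\Theta)$'' and conclude that the buyer offers approximately $p_{\theta_n}$. But Theorem \ref{Theorem3} is stated only for $\pi$ with full support on $\Theta$, and the equilibrium adoption distribution will \emph{not} have full support (indeed, in the limit it is a point mass on $\theta_{j^o}$). If $\theta_n\notin\mathrm{supp}(\pi)$ the buyer has no reason to offer $p_{\theta_n}$; she offers the Rubinstein price of the highest cost actually present, $p_{\hat\theta_m}$, which is strictly below $p_{\theta_n}$. Your subsequent computation of the seller's payoff as $p_{\theta_n}-\theta_j-c_j$ is therefore unjustified. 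The paper handles this correctly by relabeling the support as $\{\hat\theta_1,\dots,\hat\theta_m\}$, showing the screening inequality relative to $p_{\hat\theta_m}$ rather than $p_{\theta_n}$, and then arguing that the seller's equilibrium payoff $p_{\hat\theta_m}-\hat\theta_m-\hat c_m$ is weakly below the deviation payoff $p_{\hat\theta_m}-\theta_{j^o}-c_{j^o}$ from adopting $\theta_{j^o}$ and simply accepting the buyer's offer, with strict inequality unless $\hat\theta_m=\theta_{j^o}$. This pins the support to $\{\theta_{j^o}\}$. The fix is mechanical once you see it, but as written your Step 1 applies a theorem outside its hypotheses and reaches a conclusion (buyer offers $p_{\theta_n}$) that is false for the relevant distributions.

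A smaller point on Part 3: the ``main obstacle'' you flag is slightly misdescribed. You need not rule out equilibria in which the seller mixes over intermediate technologies per se, since those already have $\pi(\theta_{j^o})<1$; what you must rule out is any equilibrium with $\pi(\theta_{j^o})\to 1$, and separately verify that the remaining equilibria carry positive expected delay. The first is exactly your deviation argument; the second follows because any mass the seller places on $\theta_j$ with $j<j^o$ is strictly dominated by $\theta_{j^o}$ once the buyer pools, so off-$\theta_{j^o}$ mass must sit above $\theta_{j^o}$, and those types demand $1$ after a screening offer.
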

The proof is in Online Appendix E. For an intuitive description of Theorem \ref{Theorem4}, consider first the case in which (\ref{costdifference2}) is violated. Let 
$\pi\in\Delta(\Theta)$ be the distribution over production costs that the seller chooses in equilibrium. 
We relabel the elements of $\Theta$ so that $\text{supp}(\pi)=\{\hat{\theta}_1,...,\hat{\theta}_m\}$ with $\hat{\theta}_1<...<\hat{\theta}_m$. Let $\hat{c}_j$ denote the adoption cost of $\hat{\theta}_j$. If (\ref{costdifference2}) 
is violated, then 
\begin{equation*}
\hat{\theta}_{i+1}\leq \theta_1+1-p_{\theta_n}<\hat{\theta}_i+1-p_{\hat{\theta}_i}=p_{\hat{\theta}_i}
\textrm{ for every } i \in \{1,...,m-1\}.
\end{equation*}
As in the baseline model, the buyer may offer  $\hat{\theta}_{i+1}$ in order to screen the seller with cost less than $\hat{\theta}_i$.
If the buyer screens the seller by offering $\hat{\theta}_{i+1}$ for some $i\in \{1,...,m-1\}$, then her payoff is 
\begin{equation*}
\pi[\hat{\theta}_1,\hat{\theta}_{i}](\hat{\theta}_{i+1}-\hat{\theta}_i)\leq \pi[\hat{\theta}_1,\hat{\theta}_i](1-p_{\theta_n})\leq \pi[\hat{\theta}_1,\hat{\theta}_i](1-p_{\hat{\theta}_m})<1-p_{\hat{\theta}_m}.
\end{equation*}
This implies that any such offer is strictly dominated by offering $p_{\hat{\theta}_m}$. 
Given that the equilibrium price offered by the buyer is arbitrarily close to $p_{\hat{\theta}_m}$, the seller's equilibrium payoff converges to $p_{\hat{\theta}_m}-\hat{\theta}_m-\hat{c}_m\leq p_{\hat{\theta}_m}-\theta_{j^o}-c_{j^o}$, with strict inequality if $\hat{\theta}_m\neq \theta_{j^o}$. This implies that, if \eqref{costdifference2} is violated, the seller must adopt the socially efficient technology $j^o$ with probability converging to $1$. An argument analogous to the one in Theorem \ref{Theorem2} implies that the bargaining outcome conditional on the seller adopting the socially efficient technology must feature negligible delay.

Conversely, condition (\ref{costdifference2}) is sufficient for inefficiencies to arise in equilibrium under an open set of production costs. In particular, if (\ref{costdifference2}) is satisfied and the adoption costs are such that $j^o=1$, we can construct an equilibrium where the seller mixes between adopting $\theta_{j^o}$ and using the default technology $\theta_n$, in an analogous way as in Section \ref{sub3.2}. To do this, the seller's adoption strategy must be such that, in the limit, the buyer is indifferent between offering $p_{\theta_n}$ and $\min\{p_{\theta_{j^o}},\theta_n\}$, which yields that $\pi(\theta_{j^o})$ converges to $\frac{p_{\theta_n}-\theta_n}{\min\{p_{\theta_{j^o}},\theta_n\}-\theta_{j^o}}$. Condition (\ref{costdifference2}) implies that $\pi(\theta_{j^o})<1$. 

The discussions above imply that the seller with production cost $\theta_n$ trades with delay in equilibrium when the buyer offers $\min\{p_{\theta_{j^o}},\theta_n\}$. Moreover, if $c_{j^o}>\max\big\{\frac{1}{2},\frac{1-\theta_n}{1-\theta_{j^o}}\big\}(\theta_n-\theta_{j^o})$, then there exists a mixed strategy over bargaining postures for the buyer that assigns probability to $p_{\theta_n}$ and $\min\{p_{\theta_{j^o}},\theta_n\}$ that guarantees that the seller is indifferent between choosing $\theta_n$ and $\theta_{j^o}$. An additional condition, which we derive in the online appendix, ensures that he does not benefit from deviating to an alternative technology $\theta\notin\{\theta_{j^o},\theta_n\}$. Thus, an equilibrium with inefficient investment and bargaining delay exists under an open set of adoption costs when (\ref{costdifference2}) is satisfied. 

Finally, if $p_{\theta_{j^o}}<\theta_n$, then any equilibrium must be inefficient if $c_{j^o}\in\big(\frac{\theta_n-\theta_{j^o}}{2},\theta_n-\theta_{j^o}\big)$. This is because, if he invests efficiently, the seller's limiting equilibrium payoff is $p_{\theta_{j^o}}-\theta_{j^o}-c_{j^o}$. Given that $\theta_n>p_{\theta_{j^o}}$, he can deviate to $\theta_n$ and demand the entire surplus, which ensures a payoff that is arbitrarily close to $\frac{1-\theta_n}{2}$. This limiting value is strictly greater than $p_{\theta_{j^o}}-\theta_{j^o}-c_{j^o}$ whenever $c_{j^o}>\frac{\theta_n-\theta_{j^o}}{2}$.

\section{Concluding Remarks}\label{sec6}
We study a reputational bargaining model where a seller's production cost is determined \textit{endogenously} by whether he adopts an innovative technology before he bargains with a buyer. We show that in equilibrium there will be inefficient adoption and costly delays in reaching agreement, and that these inefficiencies arise if and only if there are large enough social gains from adopting the technology. Our analysis highlights the differences between models with exogenous distributions over production costs and ones where the distribution over production costs is endogenous. It also highlights 
the qualitative differences in the equilibrium outcomes when a player's private information is about their cost or value, compared to the case analyzed in APS  in which a player's private information is about their  discount rate. 
We conclude by discussing several extensions of our results once we vary our modeling assumptions.

\paragraph{The Timing of Offers:} In an earlier draft, we study the case where players make their initial offers \textit{simultaneously} and obtain similar results: In the game with an exogenous distribution over production costs, bargaining is efficient in all limiting equilibria when the difference between adjacent types' production costs is small, and efficient and inefficient limiting equilibria co-exist when the difference between adjacent types' production costs is large. The intuition is that when players make offers simultaneously, the seller does not know whether the buyer will make a screening offer or a pooling offer, which explains why both can arise in equilibrium. But in the game with endogenous technology adoption, there is a limiting equilibrium, in which the seller's adoption decision is socially inefficient if and only if the benefit from adoption is large enough.

Our main results partially extend to a model where the order with which players make offers is endogenous and, as in Kambe (1999), each player becomes committed with positive probability after making their initial offer. 
In this game, there \textit{exists} an equilibrium where the buyer makes an offer before the seller does, and players' equilibrium strategies coincide with those in the baseline model. Nevertheless, there also exist other equilibria due to the seller's incentive to signal his production cost. 
In particular, the off-path belief about the seller's cost has a significant effect on players' incentives when the seller can make an offer before the buyer does.

\paragraph{Bargaining Power:} In our model, players' bargaining powers are determined by the \textit{ratio} of their discount rates and a player has more bargaining power when they are more patient relative to their opponent. Our baseline model assumes that players share the same bargaining power. We discuss an extension of our results to the case where players have  
\textit{different discount rates}, which sheds light on the effects of bargaining power on investment incentives and delays.

We use $r_b$ to denote the buyer's discount rate and $r_s$ to denote the seller's discount rate. 
When the buyer's value for the object is $1$ and the seller's cost is $\theta$,
the equilibrium price in the Rubinstein bargaining game is $p_{\theta} \equiv \frac{r_b}{r_s+r_b} + \frac{r_s}{r_s+r_b} \theta$. 
Since the seller obtains a fraction $\frac{r_b}{r_b+r_s}$ of the total surplus $1-\theta$, his bargaining power is 
$\frac{r_b}{r_b+r_s}$ and the buyer's bargaining power is $\frac{r_s}{r_b+r_s}$.

Suppose first that $r_b/r_s$ is small enough so that $p_{\theta_1}<\theta_2$, which is the case when the buyer's bargaining power is relatively high. If $\pi(\theta_1)$ is above $\pi^*$, then the limiting equilibrium in the game with exogenous production costs is inefficient, in which the buyer offers $p_{\theta_1}$ and trades with delay. Otherwise, the limiting equilibrium is efficient. When the seller's adoption decision is endogenous, the adoption probability is bounded below one and there is significant delay in reaching agreement \textit{if and only if} $c\in\bigg(\frac{r_b}{r_b+r_s}(\theta_2-\theta_1),\theta_2-\theta_1\bigg)$. In particular, if $c<\theta_2-\theta_1$, in which case adopting the technology is socially optimal, and $r_b/r_s$ is arbitrarily small, the unique limiting equilibrium is inefficient except for an interval of adoption costs of vanishing Lebesgue measure.

As in the baseline model, the intuition behind the bargaining inefficiencies comes from the uninformed buyer's incentive to screen the informed seller. Screening is more attractive for the buyer when she has more bargaining power. By making a screening offer, the buyer will lower the price by approximately $\frac{r_s}{r_b+r_s}(\theta_2-\theta_1)$, which is a decreasing function of $r_b/r_s$. 

Conversely, in the case where $r_b/r_s$ is large enough so that $p_{\theta_1}>\theta_2$, the welfare properties of the limiting equilibrium will hinge on the size of $\theta_2-\theta_1$ in a similar way as in Theorems \ref{Theorem1} and \ref{Theorem2}. Specifically, if $\theta_2-\theta_1<\frac{r_b}{r_b+r_s}(1-\theta_2)$, then the unique limiting equilibrium is efficient, both under an exogenous distribution of production costs and under endogenous distributions of production costs. Otherwise, there may exist an inefficient equilibrium with under-adoption and delay for intermediate values of $c$, but the welfare loss vanishes as $r_b/r_s$ grows.

The above discussion highlights a stark contrast once we compare equilibrium welfare in the extreme cases where one of the player's discount rate is arbitrarily greater than their opponent's. In order to see this, let us focus on the case in which $c<\theta_2-\theta_1$. If the buyer is arbitrarily more patient than the seller, then in the unique limiting equilibrium, there is under-adoption and costly delay for almost all values of $c$. If the seller is arbitrarily more patient than the buyer, then every equilibrium is approximately efficient.

\end{spacing}
\appendix
\section{Lemma: An Implication of a Rich Set of Commitment Types}\label{sec4}
We establish a lemma which is implied by our richness assumption on the set of commitment types, that is, $1 \in \mathbf{P}_b$ and $\sup\mathbf{P}_s\setminus \{1\}=1$.
This lemma applies both in the case with an exogenous cost distribution and the case with endogenous technology adoption. 
Our conclusion in this lemma is extended to the case with multiple cost levels in the online appendix. 
For any $p_b\in \mathbf{P}_b$, let $\hat{\varepsilon}_b(p_b)$ 
denote the posterior probability with which the buyer is committed after she offers $p_b$. 
\begin{Lemma} \label{Lemma_4.5}
Fix any equilibrium. For any $p_b$ that satisfies $p_b \in \mathbf{P}_b\cap (\theta_1,\theta_2]$ and $\hat{\varepsilon}_b(p_b)<1$,
type-$\theta_2$ seller will demand $1$ with probability $1$ after observing the buyer offers $p_b$. 
\end{Lemma}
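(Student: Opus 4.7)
The plan is to argue by contradiction. Suppose type $\theta_2$ offers some $p_s^* < 1$ with positive probability after the buyer posts $p_b$. Since $p_b \leq \theta_2$, conceding gives him a non-positive payoff, so he weakly prefers never to concede in the continuation war of attrition; I assume he does so without loss of generality. His payoff from any counteroffer $p_s \geq p_b$ is then $U_2(p_s) = (p_s - \theta_2)\phi(p_s)$, where $\phi(p_s) \in [0,1]$ is the expected discounted probability that the rational buyer concedes in the continuation after $(p_b, p_s)$; this quantity depends only on the buyer's strategy (hence her posterior about the seller), not on the seller's realized type. The hypothesis $\hat{\varepsilon}_b(p_b) < 1$ ensures that $\phi(p_s)$ can be strictly positive, so the deviation exercise is non-trivial.

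Using the richness condition $\sup \mathbf{P}_s \setminus \{1\} = 1$, I fix $p_s' \in \mathbf{P}_s \cap (p_s^*, 1)$ and consider type $\theta_2$'s deviation from $p_s^*$ to $p_s'$. If no rational seller type offers $p_s'$ in equilibrium, then Bayes' rule at $(p_b, p_s')$ places full posterior mass on never-conceding types (commitment, plus the high type if he offers $p_s'$), so the buyer's best reply is to concede at time $0$. This yields $\phi(p_s') = 1$ and $U_2(p_s') = p_s' - \theta_2 > p_s^* - \theta_2 \geq U_2(p_s^*)$, contradicting high-type optimality. This reduces the problem to the case in which the low type offers $p_s'$ with positive probability.

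In that case, the standard war-of-attrition indifference condition pins down $\phi(p_s') = (p_b - \theta_1)/(p_s' - \theta_1)$, and analogously $\phi(p_s^*) = (p_b - \theta_1)/(p_s^* - \theta_1)$ whenever the low type also offers $p_s^*$. Substituting these into the high type's optimality $(p_s^* - \theta_2)\phi(p_s^*) \geq (p_s' - \theta_2)\phi(p_s')$ collapses to the single-crossing inequality
\begin{equation*}
\frac{p_s^* - \theta_2}{p_s^* - \theta_1} \;\geq\; \frac{p_s' - \theta_2}{p_s' - \theta_1},
\end{equation*}
which fails because the map $p \mapsto (p-\theta_2)/(p-\theta_1)$ is strictly increasing in $p$ when $\theta_2 > \theta_1$. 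If instead the low type does not offer $p_s^*$, then his deviation to $p_s^*$ exploits a buyer posterior that (by Bayes' rule on equilibrium strategies) excludes him: the buyer concedes at time $0$, yielding him $p_s^* - \theta_1 > p_b - \theta_1 = U_1^*$, which contradicts low-type optimality.

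The main subtlety I expect lies in cases where the low type's continuation strategy at $p_s'$ is degenerate --- always conceding at time $0$, or never conceding. I would handle these by noting that if the low type never concedes then $\phi(p_s') = 1$ and the contradiction from the first case reapplies, while if he always concedes then $U_1^* = p_b - \theta_1$ and the deviation-to-$p_s^*$ argument still goes through. The structural driver behind all cases is that type $\theta_2$'s commitment-from-preferences makes $U_2(p_s) = (p_s - \theta_2)\phi(p_s)$ strictly more responsive to increases in $p_s$ than the low type's effective payoff $\max\{(p_s - \theta_1)\phi(p_s),\, p_b - \theta_1\}$, which forces the high type's offer distribution to concentrate at $p_s = 1$.
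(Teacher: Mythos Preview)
Your intuition is right—single-crossing is what drives the result—but the execution has a genuine gap. The formula $\phi(p_s) = (p_b - \theta_1)/(p_s - \theta_1)$ is not what the low type's indifference pins down. In the continuation after $(p_b,p_s)$ the rational buyer may concede with an atom $c_b$ at time $0$, and she finishes conceding at some finite time $T$ while the commitment-type buyer never concedes. The low type's indifference between conceding at $0^+$ and at $T$ gives
\[
\phi(p_s)(p_s-\theta_1) + \hat{\varepsilon}_b(p_b)\,e^{-rT}(p_b-\theta_1) \;=\; c_b(p_s-p_b) + (p_b-\theta_1),
\]
so $\phi(p_s)$ depends on $c_b$ and on $T$, not just on $(p_b,p_s,\theta_1)$. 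Plugging your simplified $\phi$ into the high type's incentive constraint therefore does not yield the clean single-crossing contradiction you write down; the correction terms can go either way. The same problem bites in your branch ``low type does not offer $p_s^*$'': the assertion $U_1^* = p_b - \theta_1$ presumes $c_b' = 0$ at $p_s'$, which is not established, and if $c_b'>0$ the low type's equilibrium payoff $c_b'(p_s'-p_b)+(p_b-\theta_1)$ may well exceed his deviation payoff $(1-\hat{\varepsilon}_b)(p_s^*-\theta_1)+\hat{\varepsilon}_b(p_b-\theta_1)$.

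The paper's proof closes exactly this gap. It keeps the correction terms and shows, from the two incentive constraints combined, that $A>A'$ (your $\phi(p_s^*)>\phi(p_s')$) forces $T'<T$, hence $T>0$, hence the low type must also offer $p_s^*$. Then the low type's \emph{time-zero} indifference across $p_s^*$ and $p_s'$ gives $c_b(p_s^*-p_b)=c_b'(p_s'-p_b)$, so $c_b\geq c_b'$, and together with $\lambda_b(p_s^*)>\lambda_b(p_s')$ this yields $T<T'$, contradicting $T'<T$. Your plan can be completed along these lines, but as written it skips the step that converts the single-crossing observation into a contradiction once atoms and the $\hat{\varepsilon}_b$ term are present.
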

\begin{proof}
Suppose by way of contradiction that type $\theta_2$ offers $p_s<1$ with positive probability after the buyer offers $p_b \in \mathbf{P}_b\cap (\theta_1,\theta_2]$ with $\hat{\varepsilon}_b(p_b)<1$.
Our richness assumption that $\sup\mathbf{P}_s\setminus \{1\}=1$ implies that $ (p_s,1) \cap \mathbf{P}_s$ is non-empty. 
Since $p_b\leq \theta_2$, type $\theta_1$ must offer every $p_s'\in (p_s,1) \cap \mathbf{P}_s$ with positive probability.
This is because otherwise, the buyer's posterior assigns zero probability to type $\theta_1$ after observing $p_s'$ and will then concede immediately, in which case type $\theta_2$ has a strict incentive to deviate to $p_s'$.

After the seller offers $p_s$ and $p_s'$, let $T$ and $T'$ denote the times at which the rational-type buyer finishes conceding, let $c_b$ and $c_b'$ denote the buyer's concession probabilities at time $0$, and let $A$ and $A'$ denote the discounted probability of trade conditional on the seller never conceding and the buyer being the rational type. 
On the one hand, type $\theta_2$ weakly prefers $p_s$ to $p_s'$, which implies that
\begin{equation}\label{IC_theta2}
    (p_s-\theta_2)A\geq (p_s'-\theta_2)A'. 
\end{equation}
On the other hand, it is optimal for type-$\theta_1$ seller to offer $p_s'$ and to concede at time $T'$, so he prefers this strategy to offering $p_s$ and conceding at time $T$. This incentive constraint implies that:
\begin{equation*}
    e^{-rT'}\hat{\varepsilon}_b(p_b)(p_b-\theta_1)+(1-\hat{\varepsilon}_b(p_b))A'(p_s'-\theta_1)\geq e^{-rT}\hat{\varepsilon}_b(p_b)(p_b-\theta_1)+(1-\hat{\varepsilon}_b(p_b))A(p_s-\theta_1),
\end{equation*}
or
\begin{equation}\label{4.9}
 (e^{-rT'}-e^{-rT})\frac{\hat{\varepsilon}_b(p_b)}{1-\hat{\varepsilon}_b(p_b)}(p_b-\theta_1)\geq A(p_s-\theta_1)-A'(p'_s-\theta_1).
\end{equation}
We can bound the right-hand-side of (\ref{4.9}) using (\ref{IC_theta2}), which gives:
\begin{equation}\label{4.10}
   A(p_s-\theta_1)-A'(p'_s-\theta_1) =A(p_s-\theta_2)-A'(p_s'-\theta_2)+(A-A')(\theta_2-\theta_1)\geq (A-A')(\theta_2-\theta_1). 
\end{equation}

Since $p_s<p_s'$, inequality (\ref{IC_theta2}) implies that $A>A'$. According to (\ref{4.9}) and (\ref{4.10}), $A>A'$ implies that $T'<T$, which further implies that $T>0$. As a result, type $\theta_1$ must offer $p_s$ with positive probability. This is because otherwise, the buyer will concede immediately following $p_s$, which contradicts our earlier conclusion that $T>0$. Therefore, type $\theta_1$ must be indifferent between (i) offering $p_s$ and conceding at time $0$ and (ii) offering $p_s'$ and conceding at time $0$. This implies that $c_b(p_s-p_b)+p_b-\theta_1=c_b'(p_s'-p_b)+p_b-\theta_1$, or equivalently,
$c_b(p_s-p_b)=c_b'(p_s'-p_b)$.
Since $p_s'>p_s$, the fact that $c_b(p_s-p_b)=c_b'(p_s'-p_b)$ implies that $c_b \geq c_b'$.

The buyer's concession rate is $\lambda_b=\frac{r(p_b-\theta_1)}{p_s-p_b}$ when the seller offers $p_s$ and is $\lambda_b'=\frac{r(p_b-\theta_1)}{p'_s-p_b}$ after the seller offers $p_s'$. Since $p_s'>p_s$, we have $\lambda_b > \lambda_b'$. The expressions for $T$ and $T'$ imply that
\begin{align*}
    T=\frac{\log((1-c_b)/\hat{\varepsilon}_b(p_b))}{\lambda_b}\leq\frac{\log((1-c'_b)/\hat{\varepsilon}_b(p_b))}{\lambda'_b}=T'.
\end{align*}
This contradicts our earlier conclusion that $T'<T$.
\end{proof}

\section{Proof of Theorem 1}\label{sub4.1}
Our proof proceeds in five steps. In the first step, we characterize the unique equilibrium in the war-of-attrition game after players offer $(p_b,p_s) \in \mathbf{P}_b \times \mathbf{P}_s$. In the second step, we use players' continuation values in the war-of-attrition game to characterize the seller's equilibrium offer after observing the buyer's offer. In the third step, we use the buyer's sequential rationality constraint to show that in the limit, her offer must be close to either $p_{\theta_2}$ or $\min\{p_{\theta_1},\theta_2\}$, depending on the comparison between the probability of the low-cost type $\pi(\theta_1)$ and the cutoff $\pi^*$. In the fourth step, we characterize the expected delay in reaching an agreement. These four steps together imply that there is \textit{at most one} limiting equilibrium under generic $\pi$ and establish its properties stated in Theorem \ref{Theorem1}. The fifth step can be found in Online Appendix A, where we show that there exists at least one equilibrium.

We start from introducing some notation. Fix the cost distribution $\pi \in \Delta \{\theta_1,\theta_2\}$ and an equilibrium $(\sigma_b,\sigma_s,\tau_b,\tau_s)$. Let $\hat{\varepsilon}_b(p_b)$ and $\hat{\varepsilon}_s(p_b,p_s)$ denote the posterior probabilities with which the buyer and the seller, respectively, are commitment types after observing offers $(p_b,p_s)$. Let $\hat{\pi}_j(p_b,p_s)$ denote the probability that the seller is the rational type with production cost $\theta_j$ after observing offers $(p_b,p_s)\in \mathbf{P}_b\times \mathbf{P}_s$. According to Bayes' rule, the following equations hold in equilibrium:
\begin{align}
   & \hat{\varepsilon}_b(p_b) \equiv \frac{\varepsilon \mu_b(p_b)}{\varepsilon \mu_b(p_b)+(1-\varepsilon)\sigma_b(p_b)} \label{epsilonb} \\ & \hat{\varepsilon}_s(p_b,p_s) \equiv \frac{\varepsilon \mu_s(p_s)}{\varepsilon \mu_s(p_s)+(1-\varepsilon)[\pi\{\theta_1\} \sigma_s(p_s|\theta_1,p_b)+\pi\{\theta_2\}\sigma_s(p_s|\theta_2,p_b)]} \label{epsilons} \\
   & \hat{\pi}_j(p_b,p_s) \equiv \frac{(1-\varepsilon)\pi\{\theta_j\}\sigma_s(p_s|\theta_j,p_b)}{\varepsilon \mu_s(p_s)+(1-\varepsilon)[\pi\{\theta_1\} \sigma_s(p_s|\theta_1,p_b)+\pi\{\theta_2\}\sigma_s(p_s|\theta_2,p_b)]}. \label{pi}
\end{align}

\paragraph{Step 1.} In this step, we characterize the unique equilibrium in the war-of-attrition game after players offer $(p_b,p_s)\in \mathbf{P}_b\times \mathbf{P}_s$ that satisfy $1>p_s>p_b>\theta_1$.

Fix $(p_b,p_s)$ as well as the posterior beliefs about players' types $(\hat{\varepsilon}_b,\hat{\varepsilon}_s,\hat{\pi}_1,\hat{\pi}_2)$ after observing these offers. We denote the resulting continuation game by $\Gamma(p_b,p_s,\hat{\varepsilon}_b,\hat{\varepsilon}_s,\hat{\pi})$ and a pair of equilibrium strategies for the buyer and type-$\theta$ seller by $\tau_b\in\Delta(\mathbb{R}_+\cup \{+\infty\})$ and $\tau_s:\Theta\to \Delta(\mathbb{R}_+\cup \{+\infty\})$, respectively. 
Let 
$m\equiv \max\{j\in\{1,2\}: \theta_j<p_b\}$, 
which is well defined given that $p_b>\theta_1$. Recall that
$\lambda_s\equiv \frac{r(1-p_s)}{p_s-p_b}$
is the seller's concession rate that keeps the buyer indifferent between conceding and waiting. For every $j\in \{1,...,m\}$, recall that
$\lambda_b^j\equiv\frac{r(p_b-{\theta}_j)}{p_s-p_b}$
is the buyer's concession rate that keeps  type-${\theta}_j$ seller indifferent between conceding and waiting. Let $\lambda_b^{m+1}=\hat{\pi}_3=0$. 

If type-$\theta_j$ seller concedes with zero probability at time $0$ and concedes at rate $\lambda_s$ over the interval $(T^{j-1},T^j)$, with $0=T^0\leq T^1\leq ...\leq T^m$, then the probability with which the buyer's belief assigns to the event that \textit{the seller is either committed \textit{or} has a production cost that is strictly above $\theta_j$} will reach $1$ at time
\begin{equation}
    T_s^j\equiv \frac{-\log(\hat{\varepsilon}_s+\sum_{i>j}\hat{\pi}_i)}{\lambda_s} \label{T_s}.
\end{equation}
Likewise, if the buyer concedes with zero probability at time $0$ and concedes at rate $\lambda_b^j$ over the time interval $(T^{j-1},T^j)$, then the buyer finishes conceding at time
\begin{align*}
    T_b\equiv\frac{-\log(\hat{\varepsilon}_b)-\sum_{j=1}^{m-1} (\lambda_b^j-\lambda_b^{j+1})T^j}{\lambda_b^{m}}.
\end{align*}
In equilibrium, both players must finish conceding at the same time. Therefore, one of them will concede with strictly positive probability at time $0$ as long as $T_b\neq T_s^m$. Let
\begin{equation} \label{Lratio}
    L\equiv \frac{-\lambda_s\log\hat{\varepsilon}_b}{-\sum\limits_{j=1}^m (\lambda_b^j-\lambda_b^{j+1})\log(\hat{\varepsilon}_s+\hat{\pi}_{j+1})}.
\end{equation}
One can verify that $L<1$ if and only if $T_b<T_s^m$. Therefore, the seller concedes with strictly positive probability at time $0$ if and only if $L<1$ and the buyer concedes with strictly positive probability at time $0$ if and only if $L>1$. We refer to the player who concedes at time $0$ with strictly positive probability as the \emph{weak player}. In order to derive the probability with which the weak player concedes to their opponent at time $0$, let
\begin{equation*}
    \hat{c}_s^i \equiv 1-\Bigg(\hat{\varepsilon}_b^{-\lambda_s}\prod\limits_{j=i}^m (\hat{\varepsilon}_s+\hat{\pi}_{j+1})^{\lambda_b^j-\lambda_b^{j+1}}\Bigg)^{1/\lambda_b^i} \quad \textrm{for every } i \in \{1,...,m\}
\end{equation*}
\begin{equation*}
\hat{c}_b=1-\hat{\varepsilon}_b e^{\sum\limits_{j=1}^m \lambda_b^j(T^j_s-T^{j-1}_s)} .
\end{equation*}
Let 
$j^*\equiv \min\{j\in\{1,...,m\}:\hat{c}^j_s<\sum_{i\leq j}\hat{\pi}_i\}$. 
Suppose the buyer is the weak player. Then $\hat{c}_b$ is the probability that the buyer concedes at time $0$ so that the rational-type buyer finishes conceding at time $T_s^m$. Likewise, if the seller is the weak player and $j^*=1$, then $\hat{c}_s^1$ is the probability with which type $\theta_1$ concedes at time $0$ so that the buyer's belief that the seller is either committed or that $\theta\geq p_b$ reaches $1$ at time $T_b$. However, if $j^*>1$, it is not sufficient to have type $\theta_1$ conceding with probability $1$ at time $0$ to make both players finish conceding at the same time. Instead, we need all types strictly below $\theta_{j^*}$ to concede at time $0$ with probability $1$ and possibly type $\theta_{j^*}$ to concede at time $0$ with positive probability. As a result, when the seller is the weak player, his concession probability at time $0$ equals $\hat{c}_s^{j*}$. 
Lemma \ref{L.A1} summarizes these findings:
\begin{Lemma}\label{L.A1}
Fix any pair of offers $(p_b,p_s)\in\mathbf{P}_b\times \mathbf{P}_s$ that satisfy $1>p_s>p_b>\theta_1$. In any equilibrium of $\Gamma(p_b,p_s,\hat{\varepsilon}_b,\hat{\varepsilon}_s,\hat{\pi})$, the buyer concedes with positive probability at time zero if and only if $L>1$ and the seller concedes with positive probability at time $0$ if and only if $L<1$. Players' concession probabilities at time $0$ are $c_b\equiv\max\{0,\hat{c}_b\}$ and $c_s\equiv\max\{0,\hat{c}_s^{j^*}\}$, respectively. 
\end{Lemma}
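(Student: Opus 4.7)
The plan is to extend the standard characterization of the two-sided war-of-attrition from Abreu and Gul (2000) to accommodate multiple conceding seller types, proceeding in three steps. Fix offers $(p_b,p_s)$ and posteriors throughout. First, I would pin down the qualitative structure of any equilibrium of the continuation game via standard war-of-attrition arguments: neither player puts an atom at any $t>0$ (otherwise the opposing player would strictly prefer to wait slightly), both must concede continuously on each subinterval until one reputation hits $1$, and mixing requires indifference. Indifference pins the seller's hazard rate to $\lambda_s$ and, whenever type $\theta_j$ is the active conceding type, the buyer's rate to $\lambda_b^j$. Types with $\theta_i\geq p_b$, namely $i>m$, never concede since conceding yields nonpositive payoff.

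Second, I would argue via single-crossing that conceding types exhaust themselves in increasing order of cost, $\theta_1$ before $\theta_2$ before $\ldots$ before $\theta_m$. This is because $\lambda_b^j$ is strictly decreasing in $j$, so a lower-cost type is willing to wait precisely when the buyer concedes fastest, which occurs earlier. Within each subinterval $(T^{j-1},T^j)$, only type $\theta_j$ concedes at rate $\lambda_s$ and the buyer concedes at rate $\lambda_b^j$. Integrating these rates, the buyer's posterior on the event ``committed or of cost strictly above $\theta_j$'' reaches $1$ at $T_s^j=-\log(\hat{\varepsilon}_s+\sum_{i>j}\hat{\pi}_i)/\lambda_s$, and the buyer's reputation reaches $1$ at a time $T_b$ obtained by summing $\lambda_b^j$-length contributions across subintervals. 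Equilibrium requires $T_b=T_s^m$, else the player whose reputation would still lie below $1$ at the other's exhaustion time would strictly prefer to concede immediately. Rearranging this synchronization condition into the ratio defined in (\ref{Lratio}) gives $L>1$ iff the buyer would otherwise exhaust first and must therefore be the weak player, and symmetrically $L<1$ iff the seller is weak.

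Third, I would derive the time-$0$ concession probabilities that restore synchronization. When the buyer is weak, her time-$0$ mass $\hat{c}_b$ is determined by requiring that her rescaled initial reputation $\hat{\varepsilon}_b/(1-\hat{c}_b)$ exhausts exactly at $T_s^m$, and inverting this equation yields the displayed $\hat{c}_b$. When the seller is weak, type $\theta_1$ concedes first at time $0$; but if the synchronization requirement exceeds $\hat{\pi}_1$, then type $\theta_1$ concedes with probability $1$ at time $0$ and type $\theta_2$ concedes partially at time $0$, and so on up the ladder. The index $j^*$ identifies the marginal type, and $\hat{c}_s^{j^*}$ captures the cumulative time-$0$ concession by types $\theta_1,\ldots,\theta_{j^*}$, derived by inverting the truncated reputation equation. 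I expect the main obstacle to be this last piece: verifying that the threshold rule $\hat{c}_s^{j}<\sum_{i\leq j}\hat{\pi}_i$ selects a unique $j^*$ consistent with equilibrium requires tracking how cascading full-concession atoms by the lowest-cost types reset the downstream subinterval lengths and exhaustion times, and establishing monotonicity properties of $\hat{c}_s^j$ in $j$ to rule out ambiguity in the selection of $j^*$.
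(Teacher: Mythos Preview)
The paper does not supply its own proof of this lemma; it simply states that ``A formal proof of Lemma~\ref{L.A1} can be found in APS, which we omit in order to avoid repetition,'' and the preceding discussion in the text is purely a heuristic setup of the quantities $T_s^j$, $T_b$, $L$, $\hat c_b$, $\hat c_s^i$, and $j^*$. Your three-step outline---(i) standard war-of-attrition structure (no interior atoms, continuous concession, indifference pinning hazard rates), (ii) single-crossing to order the conceding seller types, (iii) synchronization of terminal times and derivation of the time-$0$ atom via the $j^*$ cascade---is exactly the APS argument the paper is invoking, so your approach coincides with what the paper relies on.

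One minor correction to your wording in step~2: your sentence ``a lower-cost type is willing to wait precisely when the buyer concedes fastest'' is inverted. The correct logic is that $\lambda_b^j$ is decreasing in $j$, so when the buyer concedes at the fast rate $\lambda_b^1$, type $\theta_1$ is indifferent while all higher-cost types strictly prefer to \emph{wait}; hence type $\theta_1$ is the one conceding first, and the buyer subsequently slows to $\lambda_b^2$ once $\theta_1$ is exhausted. This is consistent with the ordering $T^0\le T^1\le\cdots\le T^m$ in the paper's setup, and your overall structure is unaffected. Your anticipated obstacle---monotonicity of $\hat c_s^j$ to guarantee that $j^*$ is well defined---is real but routine: it follows from $\lambda_b^j$ being strictly decreasing in $j$ together with the nested exponent structure in the definition of $\hat c_s^i$.
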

A formal proof of Lemma \ref{L.A1} can be found in APS, which we omit in order to avoid repetition.

Next, consider the continuation game in which no player concedes at time $0$. In equilibrium, for every
$j\in\{j^*,...,m-1\}$, 
type ${\theta}_j$ will finish conceding at time
\begin{equation} \label{Tdef}
    T^j= T_s^j+\frac{\log(1-c_s)}{\lambda_s}. 
\end{equation}
In addition, the rational types of both players will finish conceding at the same time
\begin{equation}
    T^m\equiv \min\left\{\frac{-\log(\hat{\varepsilon}_b)-\sum_{j=j^*}^{m-1} (\lambda_b^j-\lambda_b^{j+1})T^j_s}{\lambda_b^m},T_s^m\right\}.
\end{equation}
Lemma \ref{L.A2} characterizes players' equilibrium strategies in the war-of-attrition game:
\begin{Lemma}\label{L.A2}
In every equilibrium of the war-of-attrition game $\Gamma(p_b,p_s,\hat{\varepsilon}_b,\hat{\varepsilon}_s,\hat{\pi})$ with $(p_b,p_s)\in\mathbf{P}_b\times\mathbf{P}_s$ and $1>p_s>p_b>\theta_1$, the buyer's and the seller's concession times $\tau_b$ and $\tau_s(\theta)$  satisfy:
\begin{enumerate}
    \item For every $j \in \{j^*,...,m\}$, the buyer concedes at rate $\lambda_b^j$
    when $t \in (T^{j-1},T^j)$ with $T^{j^*-1}=0$.
    \item The seller with cost $\theta\in\{{\theta}_{j^*},...,{\theta}_m\}$ concedes at rate $\lambda_s$ when $t \in (T^{j-1},T^j)$ with $T^{j^*-1}=0$. 
    \item The seller never concedes if his production cost is strictly greater than ${\theta}_m$.
\end{enumerate}
\end{Lemma}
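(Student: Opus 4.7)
The plan is to apply the standard war-of-attrition argument adapted to a multi-type setting, combining the rational players' indifference conditions with a monotonicity argument across seller cost types. First, for any seller type with $\theta > p_b$, conceding yields the strictly negative flow payoff $p_b - \theta < 0$, while never conceding guarantees a nonnegative payoff since the buyer still concedes with positive probability. Hence, all such types strictly prefer never conceding, which establishes part (3). Second, on any open interval where the rational buyer concedes at a positive rate, her indifference between conceding (receiving $1-p_s$) and waiting a short interval $dt$ requires the total rate at which rational seller types concede to equal $\lambda_s = r(1-p_s)/(p_s-p_b)$. Symmetrically, on any interval where type $\theta_j$ mixes between conceding and continuing, his indifference forces the buyer's concession rate to equal $\lambda_b^j = r(p_b-\theta_j)/(p_s-p_b)$.

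The next step is to show that at most one cost type is actively mixing at any given moment and that the order of concession runs from lowest to highest cost. If types $\theta_j < \theta_k$ with $\theta_k < p_b$ mixed simultaneously over some interval, their respective indifference conditions would require the buyer to concede at both $\lambda_b^j$ and $\lambda_b^k$, which is impossible since $\lambda_b^j > \lambda_b^k$. Moreover, if $\theta_k$ is mixing while some mass of $\theta_j$ remains, type $\theta_j$ strictly prefers to concede immediately: the buyer's concession rate $\lambda_b^k < \lambda_b^j$ makes the flow benefit from waiting smaller than the flow cost $r(p_b-\theta_j)$ of delay. Hence, lower cost types must exhaust their conceding mass before the next one begins. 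Combined with Lemma B.1's characterization of the time-$0$ atoms --- which already depletes all types $\theta_i$ for $i < j^*$ and possibly part of $\theta_{j^*}$ --- this generates precisely the nested intervals $(T^{j-1}, T^j)$ for $j \in \{j^*, \ldots, m\}$ stated in the lemma.

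Finally, I would pin down the transition times $T^j$ by consistency with the evolving beliefs. Within $(T^{j-1}, T^j)$, the remaining rational probability of type $\theta_j$ decays exponentially at rate $\lambda_s$, so $T^j$ is determined by the condition that it reaches zero; solving from the initial mass (which accounts for $c_s$ from Lemma B.1 and the prior exhaustion of lower-cost types) yields the formula in $(B.5)$. Simultaneously, the buyer's rationality probability decays at $\lambda_b^j$ in each interval, and the definition of $T^m$ as the minimum in the displayed expression ensures both players finish at the same time. The main obstacle, and the point at which routine calculation must be handled carefully, is ruling out atoms at interior times $t > 0$: any such atom by one player would make the other's strict continuation optimal just before the atom (breaking indifference), contradicting equilibrium. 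This no-interior-atoms argument is standard but must be applied separately within each interval $(T^{j-1}, T^j)$ and across the transition points, accounting for the discrete jumps in the buyer's conjecture about which cost type is currently active.
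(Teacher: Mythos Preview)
The paper does not actually supply its own proof of this lemma: it states the result immediately after the setup and, together with Lemma~B.1, defers to APS for the formal argument. Your sketch is a correct rendition of that standard multi-type war-of-attrition analysis---indifference pins down the hazard rates, incompatibility of two simultaneous indifference conditions forces at most one active type per interval, the strict-preference comparison orders the types from low cost to high, and the no-interior-atom argument rules out lumps after time~$0$. So your approach is essentially the same as what the paper relies on by reference.

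One small correction in part~(3): the claim that ``never conceding guarantees a nonnegative payoff'' is not literally needed and need not be true (if $p_s<\theta$, waiting for the buyer to concede also yields a negative payoff). The cleaner argument is simply that since $p_s>p_b$, conceding at any instant yields $p_b-\theta$, which is strictly worse than the continuation value of waiting (which is a convex combination of $p_s-\theta$ and the future continuation value); hence a type with $\theta\geq p_b$ strictly prefers to postpone concession indefinitely regardless of the sign of $p_s-\theta$.
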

Next, we characterize players' concession probabilities at time $0$ in the limit where $\varepsilon \rightarrow 0$. 
Formally, consider an infinite sequence of strictly positive numbers $\{\varepsilon^k\}_{k=0}^{+\infty}$ that satisfy
 $\lim_{k \rightarrow +\infty}\varepsilon^k= 0$. Let $(\sigma_b^k,\sigma_s^k)$ be players' equilibrium strategies in the war-of-attrition game when the probability of commitment type is $\varepsilon^k$. Without loss of generality, we focus on the case where $(\sigma_b^k,\sigma_s^k)$ converges to $(\sigma_b^\infty,\sigma_s^\infty)$.\footnote{This is because otherwise, we can apply the Helly's selection theorem (Billingsley, 2013b), that $\Delta[0,1]$ is sequentially compact in the topology of weak convergence, and find a converging subsequence and focus on that subsequence.} 
Let $(\hat{\varepsilon}^k_b,\hat{\varepsilon}^k_s,\hat{\pi}^k)$ be given by (\ref{epsilonb}), (\ref{epsilons}) and (\ref{pi}) using $(\varepsilon^k,\sigma_b^k,\sigma_s^k)$, and let $\lim_{k\to \infty}\hat{\pi}_j^k=\hat{\pi}_j^\infty$ for every $j \in \{1,2\}$ and $\hat{\varepsilon}_i^\infty \equiv \lim_{k\to \infty}\hat{\varepsilon}_i^k$ for every $i \in \{b,s\}$.

\begin{Lemma} \label{L.A3}
Suppose $\{\varepsilon^k\}_{k=1}^\infty$ is such that $\varepsilon^k\to 0$ as $k\to \infty$. Let $(c_b^k,c_s^k)_{k=1}^\infty$ be given according to Lemma \ref{L.A1} in the game $\Gamma(p_b,p_s,\hat{\varepsilon}^k_b,\hat{\varepsilon}^k_s,\hat{\pi}^k)$ with $(p_b,p_s)\in\mathbf{P}_b\times\mathbf{P}_s$ and $1>p_s>p_b>\theta_1$, and let $(c_b^\infty,c_s^\infty)$ be the limit of this sequence as $k\to \infty$. Then,
\begin{itemize}
    \item [1.] If $\sigma^\infty_s(p_s|p_b,\theta_1)+\sigma^\infty_s(p_s|p_b,\theta_2)>0$ and $p_b>\theta_2$, and if $\hat{\varepsilon}^\infty_b(p_b)>0$ or $\lambda_b^2>\lambda_s$, then $c_s^\infty(p_b,p_s)=1$.
    \item [2.] If $\sigma^\infty_b(p_b)>0$ and $\hat{\pi}^\infty_2(p_b,p_s)>0$, and if $\lambda_s>\lambda_b^2$ or $p_b\leq \theta_2$, then $c_b^\infty(p_b,p_s)=1$.  
 \item [3.] If $\sigma^\infty_b(p_b)>0$, and if $\hat{\varepsilon}_s^\infty(p_b,p_s)>0$ or $\lambda_s>\lambda_b^1$, then $c_b^\infty(p_b,p_s)= 1$.
    \item [4.] If $\sigma^\infty_s(p_s|p_b,\theta_1)>0$ and $\hat{\pi}_2^\infty(p_b,p_s)=0$, and if $\hat{\varepsilon}^\infty_b(p_b)>0$, or if $\lim_{k\to \infty}(\hat{\varepsilon}_b^k(p_b)/\hat{\pi}_2^k(p_b,p_s))>0$ and $\lambda_b^1>\lambda_s$, then $c_s^\infty(p_b,p_s)=1$.
\end{itemize}
\end{Lemma}


\paragraph{Step 2.} In this step, we use Lemmas \ref{L.A1}, \ref{L.A2}, and \ref{L.A3} 
to derive the seller's equilibrium offers when $\varepsilon$ and $\nu$ are close to $0$. For $j \in \{1,2\}$, let $p_j(p_b) \equiv \max\{p\in\mathbf{P}_s: p\leq 1+\theta_j-p_b\}$. The compactness of $\mathbf{P}_s$ ensures that this is well-defined. First, we show that when the buyer offers a price $p_b\in (\theta_2,p_{\theta_2}]$ with a probability bounded above $0$,  
both types of the seller will offer the same price that is close to $p_2(p_b)$ in every equilibrium. 
\begin{Lemma} \label{lemma5}
For every $\eta>0$ and $\nu>0$, there exists $\bar{\varepsilon}_{\nu} >0$ such that for every $\theta\in\Theta$, every $\varepsilon\in(0,\varepsilon_\nu)$, and every $p_b\in(\theta_2,p_{\theta_2}]$ such that $\sigma_b(p_b)>\eta$, $\sigma_s(\cdot|p_b,\theta)$ is $\eta$-close to the Dirac measure on $p_2(p_b)$. 
\end{Lemma}
\begin{proof}

Fix $\eta>0$ and a small enough $\nu>0$. Suppose by way of contradiction that for all $\overline{\varepsilon}>0$, there exist $p_b$ and $\varepsilon$ with $\sigma_b(p_b)>\eta$ and $\varepsilon\in (0,\overline{\varepsilon})$ such that $\sigma_{s}(p_s'|p_b,\theta_2)>\eta$ for some $p'_s<p_2(p_b)$. Without loss of generality, we can take $\overline{\varepsilon}$ to be arbitrarily small. We argue that it must be the case that type $\theta_1$ offers $p_2(p_b)$ with probability bounded above zero and that type $\theta_2$ does so with vanishing probability. Suppose by way of contradiction that this is not the case. According to Parts 2 and 3 of Lemma \ref{L.A3}, for every $\delta>0$, there is $\overline{\varepsilon}_\delta>0$ such that $\varepsilon<\overline{\varepsilon}_\delta$ implies that the buyer's time-zero concession probability after the seller offers $p_2(p_b)$ is at least $1-\delta$. Thus, type $\theta_2$'s payoff when he offers $p_2(p_b)$ is at least $(1-\delta)(p_2(p_b)-\theta_2)$. Since $\delta>0$ is arbitrary, we can take it to be sufficiently small, in which case we get that, for $\varepsilon$ sufficiently small, the high type's payoff when offering $p_2(p_b)$ is strictly higher than the upper bound on his equilibrium payoff $p_s'-\theta_2$. This leads to a contradiction when $\overline{\varepsilon}\leq\overline{\varepsilon}_\delta$. 

 On the other hand, by Part 2 of Lemma \ref{L.A3}, $p_s'<p_2(p_b)$ implies that for every $\delta>0$, there exists $\overline{\varepsilon}_\delta$ such that $\varepsilon<\overline{\varepsilon}_\delta$ implies that the buyer's time zero concession probability after the seller offers $p_s'$ is at least $1-\delta$. Let $T_1$ be the time at which the rational buyer finishes conceding against the low-type seller (as defined in \eqref{Tdef}) after the seller offers $p_2(p_b)$, and let $c_b$ be the associated time-zero concession probability for the buyer. Note that $T_1\to+\infty$ as $\varepsilon\to 0$. For $\varepsilon$ sufficiently small, the low type's incentive to offer $p_2(p_b)$ instead of $p_s'$ implies that
$c_b(p_2(p_b)-\theta_1)+(1-c_b)(p_b-\theta_1)\geq p_s'-\theta_1$.
Moreover, type $\theta_2$'s payoff from deviating to $p_2(p_b)$ and waiting until $T_1$ for the buyer to concede is at least
  \begin{equation*}
      \bigg(c_b+(1-c_b)\frac{p_b-\theta_1}{p_2(p_b)-\theta_1}\bigg)(p_2(p_b)-\theta_2)\geq \frac{p_s'-\theta_1}{p_2(p_b)-\theta_1}(p_2(p_b)-\theta_2)>p_s'-\theta_2.
  \end{equation*}
Thus, for $\overline{\varepsilon}$ sufficiently small and any $\varepsilon<\overline{\varepsilon}$, type $\theta_2$ strictly benefits from deviating to $p_2(p_b)$, which is a contradiction.  

If for all $\overline{\varepsilon}>0$, there exists $p_b$ and $\varepsilon$ with $\sigma_b(p_b)>\eta$ and $\varepsilon\in (0,\overline{\varepsilon})$ such that $\sigma_{s}(p_s'|p_b,\theta_2)>\eta$ for some $p'_s>p_2(p_b)$, then Part 1 of Lemma \ref{L.A3} implies that type $\theta_2$ concedes with positive probability at time zero for $\varepsilon$ sufficiently small, and therefore his equilibrium payoff is $p_b-\theta_2$. An analogous argument to the one in the previous paragraph then implies that for a sufficiently small $\varepsilon$, type $\theta_2$ obtains a weakly higher payoff from deviating to $p_2(p_b)$ (strictly so unless $p_b=p_2(p_b)=p_{\theta_2}$). Thus, type $\theta_2$'s strategy is $\eta$-close to the Dirac measure on $p_2(p_b)$. Given this, Part 2 of Lemma \ref{L.A3} implies that for all $\delta>0$ there exists $\overline{\varepsilon}_\delta>0$ such that for any $\varepsilon\in(0,\overline{\varepsilon}_\delta)$, type $\theta_1$'s payoff from offering $p_2(p_b)$ is at least $p_2(p_b)-\theta_1-\delta$. If he offers $p_s>p_2(p_b)$ with probability greater than $\eta$, then for sufficiently small $\varepsilon$, type $\theta_1$ receives a payoff of $p_b-\theta_1$, which is strictly dominated by offering $p_2(p_b)$ whenever $p_b<p_2(p_b)$. Thus, type $\theta_1$'s offer converges to the Dirac measure on $p_2(p_b)$ as well. 
\end{proof}

Next, we characterize the low-cost seller's offer conditional on the buyer offering $p_b\leq \min\{p_{\theta_1},\theta_2\}$.

\begin{Lemma} \label{lemma4.6}
For every $\eta>0$ and $\nu>0$, there exists $\bar{\varepsilon}_{\nu} >0$ such that for every $\varepsilon\in(0,\varepsilon_\nu)$, and every $p_b\in(\theta_1,\min\{p_{\theta_1},{\theta_2}\}]$ such that $\sigma_b(p_b)>\eta$, $\sigma_s(\cdot|p_b,\theta_1)$ is $\eta$-close to the Dirac measure on $p_1(p_b)$. 
\end{Lemma}

\begin{proof}
 Fix $\eta>0$ and an arbitrarily small $\nu>0$. Let $p_b\in \mathbf{P}_b\cap (\theta_1,\min\{p_{\theta_1},\theta_2\}]$ with $\sigma_b(p_b)>\eta$. First, we show that the low type has to concede immediately if he demands $1$ with positive probability. To see this, let $T_1 \geq 0$ denote the last instant at which the low type concedes after demanding $1$. Such $T_1$ is well-defined since all rational types finish conceding in finite time. The buyer's payoff from conceding is $0$, and therefore, she has a strict incentive to wait until $T_1$ before conceding. If $T_1>0$, the fact that $p_b>\theta_1$ and that the buyer does not concede before $T_1$ implies that the low type has a strict incentive to concede at time $0$. This implies that $T_1=0$, so the low type must concede at time $0$ with probability $1$ after demanding $1$, which by our convention in Section \ref{sec2}, is equivalent to making a counteroffer of $p_b$.  

Second, we consider the case in which the low type offers $p_s<1$ after the buyer offers $p_b$. For $\nu$ small enough, $p_b\leq p_{\theta_1}$ implies that $p_b\leq p_1(p_b)$. Part 3 of Lemma \ref{L.A3} implies that for all $\delta>0$ there exists $\overline{\varepsilon}_\delta$ such that $\varepsilon<\overline{\varepsilon}_\delta$ implies that type $\theta_1$ can induce the buyer to concede with probability greater than $1-\delta$ by offering $p_1(p_b)$. If type $\theta_1$ offers anything greater than $p_1(p_b)$ with probability greater than $\eta$, then by Lemma \ref{Lemma_4.5} and Part 4 of Lemma \ref{L.A3}, there exists $\overline{\varepsilon}_\eta>0$ such that when $\varepsilon\in(0,\overline{\varepsilon}_\eta)$, he will concede at time $0$ with positive probability, which is weakly dominated by offering $p_1(p_b)$, and it is strictly dominated by offering $p_1(p_b)$ unless $p_b=p_1(p_b)$. Therefore, type $\theta_1$ will offer $p_1(p_b)$ with probability converging to $1$ as $\varepsilon\to 0$.  
\end{proof}

\paragraph{Step 3.} In this step, we characterize the buyer's equilibrium offer. For every $\eta>0$, there exists $\gamma>0$ such that when $0<\nu,\varepsilon<\gamma$, the buyer's payoff from offering any $p_b$ weakly greater than $p_{\theta_2}$ is no less than $1-p_b-\eta$. This is because the seller's concession rate is strictly less than the buyer's concession rate if he demands 
anything greater than $p_b$ and, by Part 1 of Lemma \ref{L.A3}, the seller's probability of concession at time $0$ converges to $1$ 
as $\varepsilon \rightarrow 0$. Combining this with Lemma \ref{lemma5}, we know that for every $\eta>0$ there exists $\overline{\nu}>0$ such that when $\nu<\overline{\nu}$ there exists $\overline{\varepsilon}_\nu$ such that when $\varepsilon\in(0,\overline{\varepsilon}_\nu)$, any offer $p_b\in\mathbf{P}_b$ that satisfies $p_b>\theta_2$ and $\sigma_b(p_b)>\eta$ yields the buyer a payoff that is at least $1-\max\{p_b,p_2(p_b)\}-\eta$.

Next, we derive the buyer's payoff after she offers a price $p_b$ that satisfies $p_b\in\mathbf{P}_b\cap(\theta_1,\theta_2]$. First, suppose the buyer offers $p_b$ with zero probability in equilibrium. Then, the seller's posterior belief assigns probability $1$ to the commitment type after observing $p_b$, after which type $\theta_1$ will concede immediately and the buyer's payoff is at least $\pi(\theta_1)(1-p_b)$. Second, suppose the buyer offers $p_b$ with positive probability in equilibrium, in which case Lemma \ref{Lemma_4.5} implies that type-$\theta_2$ seller will counteroffer $1$ with probability $1$. If $p_b>p_{\theta_1}$, then Part 4 of Lemma \ref{L.A3} implies that taking ${\varepsilon}$ to be sufficiently small, the low type has to concede with probability greater than $1-\eta$ following any offer in the support of his equilibrium strategy, and thus the buyer's payoff is at least $\pi(\theta_1)(1-p_b)-\eta$. If $p_b\leq p_{\theta_1}$ and $\sigma_b^\infty(p_b)>0$, Lemma \ref{lemma4.6} says that for $\varepsilon$ sufficiently small, the low type's offer is $p_1(p_b)$ with probability greater than $1-\eta$. If $p_b\leq p_{\theta_1}$ and $\sigma_b^\infty(p_b)=0$, then Part 4 of Lemma \ref{L.A3} implies that the low type has to concede with probability greater than $1-\eta$ if he offers anything greater than $p_1(p_b)$ with probability greater than $\eta$, and therefore the buyer's payoff from offering $p_b$ in this case is at least $\pi(\theta_1)(1-p_1(p_b))-\eta$. Combining these cases, for $\varepsilon$ sufficiently small, the buyer's payoff when she offers $p_b\in\mathbf{P}_b\cap(\theta_1,\theta_2]$ is bounded below by $\pi(\theta_1)(1-\max\{p_1(p_b),p_b\})$ and the lower bound is attained in the limit as $\varepsilon\to 0$ if the buyer offers $p_b$ with non-vanishing probability in equilibrium. 

Let $p_1^\nu=\argmax\limits_{p\in \mathbf{P}_b\cap [\theta_1,\theta_2]} (1-\max\{p_1(p_b),p_b\})$ and $p_2^\nu=\argmax\limits_{p\in \mathbf{P}_b\cap [\theta_2,1]} (1-\max\{p_2(p_b),p_b\})$. Our earlier arguments imply that for a fixed $\nu>0$, as $\varepsilon$ goes to zero, $\sigma_b$ must be arbitrarily close to a distribution belonging to $\Delta(p_1^\nu\cup p_2^\nu)$. Moreover, for $\nu>0$ sufficiently small, if $\pi(\theta_1)<\pi^*$, the buyer strictly prefers offering $p\in p_2^\nu$ for a limiting payoff of $1-p_{\theta_2}$ over offering $p\in p_1^\nu$ for a limiting payoff of $\pi(\theta_1)(\min\{p_{\theta_1},\theta_2\}-\theta_1)$, and thus limiting equilibrium strategies in this case are characterized by the first part in Theorem \ref{Theorem1}. If $\pi(\theta_1)>\pi^*$, the buyer strictly prefers to offer $p\in p_1^\nu$. Therefore, the equilibrium strategies must converge to those described in part two of Theorem \ref{Theorem1}. 

\paragraph{Step 4.} In this step, we compute the expected delay in reaching an agreement. If $\pi(\theta_1)<\pi^*$, by the above arguments, the resulting equilibrium outcome is approximately efficient, with an agreement being reached with negligible delay at a price arbitrarily close to $p_{\theta_2}$.  

If $\pi(\theta_1)>\pi^*$ and $p_{\theta_1}< \theta_2$, then in the limit, the buyer offers $p_{\theta_1}$ and type $\theta_1$ accepts. Otherwise, the buyer offers $\theta_2$ in equilibrium, after which type $\theta_1$ raises the price to $p_1(\theta_2)>\theta_2$ after which players play the war-of-attrition game. However, Part 3 of Lemma \ref{L.A3} shows that the expected delay in the resulting war-of-attrition vanishes as $\varepsilon \rightarrow 0$ and thus the equilibrium outcome, conditional on the seller's cost is $\theta_1$, is approximately efficient. If the seller's cost is $\theta_2$, he will respond to the buyer's equilibrium offer by demanding the entire surplus $1$. In order to deter a deviation from the low type, the buyer must wait a considerable amount of time before conceding to this demand. As argued in Section \ref{sub3.1}, the incentive constraints of the seller pin down the limiting expected welfare loss from delay to be given by (\ref{expecteddelay}). To complete the argument provided there, we show how to derive (\ref{limitinglowerbound}) from (\ref{lowerbound}). In order to avoid the buyer from conceding immediately after the seller demands $p_s\in\mathbf{P}_s$ with $p_s\in(1-\varepsilon,1)$, which would in turn give rise to a profitable deviation for the seller, it must be that type $\theta_1$ demands $p_s$ with positive probability (by Lemma \ref{L.A3}). As a result, type $\theta_1$'s incentive constraint requires that $c_b p_s+(1-c_b)\min\{p_{\theta_1},\theta_2\}-\theta_1\geq1-\min\{p_{\theta_1},\theta_2\}$. Plugging this into (\ref{lowerbound}) and using the fact that $T_1\to +\infty$ as $\varepsilon\to 0$, we obtain (\ref{limitinglowerbound}).


\section{Proof of Theorem 2}\label{sub4.2}
In this appendix, we establish some necessary conditions for equilibrium. We establish the existence of equilibrium in Online Appendix B. Throughout the proof, we use $V_\theta$ to denote the equilibrium payoff of type $\theta$ at the bargaining stage (net of adoption costs) and we use $\pi(\theta_1)$ to denote the seller's equilibrium adoption probability. The following series of lemmas provide necessary conditions for the limiting equilibria under every parameter configuration, which, together with the existence result in Online Appendix B, establishes Theorem \ref{Theorem2}.

First, consider the case in which $c>\theta_2-\theta_1$, that is, the cost of adoption is strictly greater than the social benefit from adoption. We show that the seller adopts with zero probability in every equilibrium. This in turn implies that the buyer  has no incentive to offer anything that is strictly lower than $\theta_2$. As a result, her offer in equilibrium converges to $p_{\theta_2} \equiv \frac{1+\theta_2}{2}$ as $\varepsilon \rightarrow 0$. 
\begin{Lemma} \label{lemmab2}
   If $c>\theta_2-\theta_1$, then for every $\eta>0$, there exists $\bar{\nu}>0$ such that when $\nu<\bar{\nu}$, there exists $\bar{\varepsilon}_{\nu} >0$ such that for every $\varepsilon \in (0,\bar{\varepsilon}_{\nu})$, the adoption probability is $0$ and the expected delay is less than $\eta$ in every equilibrium.
\end{Lemma}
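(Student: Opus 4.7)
The plan is two-step. First, I would show that the rational seller never adopts: the private gain from adoption is bounded above by $\theta_2-\theta_1 < c$, so not adopting strictly dominates adopting regardless of the buyer's strategy. Second, I would invoke the arguments in Abreu and Gul (2000) for the resulting complete-information bargaining game (where the seller's cost is commonly known among rational types to be $\theta_2$) to obtain that trade occurs at approximately $p_{\theta_2}$ with negligible delay.

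For the first step, fix any equilibrium and let $V^b_\theta$ denote the expected discounted payoff from the bargaining stage alone (not netting out the adoption cost) for a rational seller of type $\theta$. Because the buyer does not observe the seller's type or adoption decision, her bargaining behavior along any realized play of the game is identical regardless of the seller's true type. Hence type $\theta_2$ can replicate the same joint distribution over trade time $\tau$ and trade price $p$ as type $\theta_1$ induces on the equilibrium path, simply by using type $\theta_1$'s bargaining strategy. The resulting payoff from this mimicking deviation is
\[
\mathbb{E}\!\left[e^{-r\tau}(p-\theta_2)\right] \;=\; V^b_{\theta_1} - (\theta_2-\theta_1)\, \mathbb{E}[e^{-r\tau}] \;\geq\; V^b_{\theta_1} - (\theta_2-\theta_1),
\]
so sequential rationality gives $V^b_{\theta_2}\geq V^b_{\theta_1}-(\theta_2-\theta_1)$. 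The rational seller is willing to adopt only if $V^b_{\theta_1}-c\geq V^b_{\theta_2}$, i.e.\ $V^b_{\theta_1}-V^b_{\theta_2}\geq c$. This contradicts $V^b_{\theta_1}-V^b_{\theta_2}\leq \theta_2-\theta_1<c$, so $\pi(\theta_1)=0$ in every equilibrium.

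For the second step, once $\pi(\theta_1)=0$, the rational seller's cost is commonly known to be $\theta_2$ (the buyer's belief conditional on the seller being rational is pinned down by the prior since the adoption decision is unobservable). The bargaining stage then reduces to a standard reputational bargaining game between a buyer with value $1$ and a seller with cost $\theta_2$, with each player committed with probability $\varepsilon$. The analysis of Abreu and Gul (2000, Proposition 3), extended to our rich set of commitment types as discussed in Section \ref{sub2.1}, implies that for all $\nu$ and $\varepsilon$ sufficiently small, the buyer offers a price close to $p_{\theta_2}$, the seller accepts almost immediately, and the total expected delay $1-\mathbb{E}[e^{-r\min\{\tau_s,\tau_b\}}]$ is less than $\eta$. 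The main subtlety is in formalizing the mimicking argument; it amounts to noting that the buyer's information set in the bargaining stage depends only on the public history of offers and concessions, not on the seller's unobserved type or adoption decision, so type $\theta_2$'s mimicking deviation induces the same conditional distribution over buyer play as type $\theta_1$'s on-path strategy. The rest of the argument is an immediate application of results already established.
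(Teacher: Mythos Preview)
Your proposal is correct and takes essentially the same approach as the paper: both establish $\pi(\theta_1)=0$ via the mimicking argument that type $\theta_2$ can imitate type $\theta_1$'s bargaining strategy, bounding the private gain from adoption by $\theta_2-\theta_1<c$. You are in fact slightly more thorough than the paper's own proof, which stops after deriving the contradiction and leaves the ``expected delay is less than $\eta$'' claim implicit; your explicit reduction to the complete-information Abreu--Gul game is the right way to close that gap.
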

\begin{proof}
 Suppose by way of contradiction that the seller adopts with strictly positive probability. His payoff in the bargaining stage after he adopts is $V_{\theta_1} \equiv \mathbb{E}[e^{-r\tau}(p-\theta_1)|\theta=\theta_1]$, where $\tau$ is the time of trade and $p$ is the trading price. If the seller deviates to not adopting and uses type $\theta_1$'s strategy in the war-of-attrition game, then he can secure a payoff of $\mathbb{E}[e^{-r\tau}(p-\theta_2)|\theta=\theta_1]$. As a result, 
 \begin{equation*}
 V_{\theta_1}-c-V_{\theta_2}\leq \mathbb{E}[e^{-r\tau}|\theta=\theta_1](\theta_2-\theta_1)-c\leq (\theta_2-\theta_1)-c<0,
 \end{equation*}
which implies that the seller strictly prefers not to adopt. This leads to a contradiction. 

Given that the seller adopts with zero probability, Proposition 3 in Abreu and Gul (2000) implies that in the limit, in any equilibrium, players trade at a price of $p_{\theta_2}$ with negligible delay.
\end{proof}

The rest of this proof considers the case in which $c<\theta_2-\theta_1$. We start from the subcase in which $p_{\theta_1}<\theta_2$ and $c\in\big(\frac{\theta_2-\theta_1}{2},\theta_2-\theta_1\big)$. The first condition ensures that the gap between $\theta_2$ and $\theta_1$ is large enough so that the buyer benefits from screening under certain values of $\pi(\theta_1)$. The second condition implies that the adoption cost is neither too high nor too low, which ensures that the seller is willing to mix at the adoption stage with probabilities that make the buyer indifferent between the screening offer $p_{\theta_1}$ and the pooling offer $p_{\theta_2}$. The buyer's mixing probabilities over $p_{\theta_1}$ and $p_{\theta_2}$ are chosen in order to make the seller indifferent at the adoption stage. Lemma \ref{lemmab3} characterizes the unique limiting equilibrium outcome under these two conditions. 
\begin{Lemma} \label{lemmab3}
Suppose $p_{\theta_1}<\theta_2$ and $c\in \bigg(\frac{\theta_2-\theta_1}{2},\theta_2-\theta_1\bigg)$. For every $\eta>0$, there exists $\bar{\nu}>0$ such that when $\nu<\bar{\nu}$, there exists $\bar{\varepsilon}_{\nu} >0$ such that for every $\varepsilon \in (0,\bar{\varepsilon}_{\nu})$, the adoption probability is $\eta$-close to $\pi^*$ and the expected delay is bounded above $0$ in every equilibrium.
\end{Lemma}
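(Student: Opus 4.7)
My plan is to show that under $p_{\theta_1}<\theta_2$ and $c\in\bigl(\tfrac{\theta_2-\theta_1}{2},\theta_2-\theta_1\bigr)$ the seller cannot adopt with probability close to $0$ or $1$, so he must mix at the adoption stage; for this to be consistent with equilibrium the buyer must mix between the screening offer $\min\{p_{\theta_1},\theta_2\}=p_{\theta_1}$ and the pooling offer $p_{\theta_2}$, which via the buyer's indifference pins down $\pi(\theta_1)\to\pi^*=\tfrac{1-\theta_2}{1-\theta_1}$. First, suppose toward contradiction that the seller adopts with probability at least $1-\eta$. Since $p_{\theta_1}<\theta_2$ implies $\pi^*<1$, for small $\varepsilon$ Part 2 of Theorem \ref{Theorem1} applies: the buyer offers approximately $p_{\theta_1}$, type $\theta_1$ accepts and earns $\tfrac{1-\theta_1}{2}$, while a deviation to the default technology lets type $\theta_2$ demand $1$ by Lemma \ref{Lemma_4.5} and collect, using $p_{\theta_1}<\theta_2$ so that the $\max$ in \eqref{expecteddelay} equals $p_{\theta_1}$, a payoff approximately $\tfrac{1-\theta_2}{2}$. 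The net gain from adopting converges to $\tfrac{\theta_2-\theta_1}{2}-c<0$, a contradiction. Symmetrically, if the seller adopts with probability at most $\eta$, Part 1 of Theorem \ref{Theorem1} yields the buyer offering $p_{\theta_2}$ and both types accepting, so the gain from adoption converges to $\theta_2-\theta_1-c>0$, again a contradiction.

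Thus $\pi(\theta_1)$ lies in a compact subinterval of $(0,1)$ uniformly in small $\varepsilon$. Along any convergent subsequence with limit $\bar\pi\in(0,1)$, Theorem \ref{Theorem1} applied to any $\bar\pi\neq\pi^*$ yields a pure buyer offer in the limit, and the same two-sided argument as above would push the seller strictly to one extreme of the adoption decision, contradicting $\bar\pi\in(0,1)$. Hence every subsequential limit equals $\pi^*$, so $\pi(\theta_1)\to\pi^*$.

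To show the expected delay is bounded above $0$, I use the two limiting indifference conditions. The buyer's indifference at $\pi(\theta_1)=\pi^*$ between $\pi^*(1-p_{\theta_1})$ and $1-p_{\theta_2}$ is automatic by definition of $\pi^*$. The seller's indifference at the adoption stage then pins down the buyer's mixing probability $q_1$ on the screening offer: because type $\theta_2$'s continuation payoff is $\tfrac{1-\theta_2}{2}$ whether he trades immediately at $p_{\theta_2}$ or demands $1$ against $p_{\theta_1}$, the quantity $V_{\theta_2}$ is $q_1$-independent, and expanding $V_{\theta_1}-c=V_{\theta_2}$ gives $q_1=2-\tfrac{2c}{\theta_2-\theta_1}$, which lies strictly in $(0,1)$ precisely on the assumed range of $c$. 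Conditional on the joint event of the screening offer and type $\theta_2$---which occurs with ex ante probability close to $q_1(1-\pi^*)>0$---Part 2 of Theorem \ref{Theorem1} gives $\mathbb{E}[e^{-r\tau_b}\mid\theta=\theta_2]\to\tfrac{1}{2}$, so the unconditional expected delay $1-\mathbb{E}[e^{-r\min\{\tau_s,\tau_b\}}]$ is bounded above $0$.

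The main obstacle is that Theorem \ref{Theorem1} is a fixed-$\pi$ result taken as $\varepsilon\to 0$, while here $\pi(\theta_1)$ is endogenous and may a priori co-vary with $\varepsilon$; the argument must therefore either work along converging subsequences and invoke Theorem \ref{Theorem1} at the fixed limit $\bar\pi$, or use a version of Theorem \ref{Theorem1} that is uniform over $\pi(\theta_1)$ in any compact subset of $(0,\pi^*)\cup(\pi^*,1)$. The hypothesis $p_{\theta_1}<\theta_2$ is essential here: it both ensures $\pi^*<1$ and makes type $\theta_2$'s off-path payoff $\tfrac{1-\theta_2}{2}$ after a screening offer strictly exceed his acceptance payoff $p_{\theta_1}-\theta_2$, thereby excluding the near-efficient $\pi(\theta_1)\approx 1$ configuration that, as flagged in the body of the paper, gives rise to multiple limiting equilibria when $p_{\theta_1}\geq\theta_2$.
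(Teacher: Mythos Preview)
Your overall strategy mirrors the paper's proof: rule out $\pi(\theta_1)$ bounded away from $\pi^*$ on either side by exhibiting a strict adoption preference for the seller, then use the buyer's and seller's indifference to pin down $\pi(\theta_1)\to\pi^*$, the mixing probability $q_1=2-\tfrac{2c}{\theta_2-\theta_1}$ (equivalently the paper's $\rho^*=1-q_1$), and the strictly positive limiting expected delay $(1-\pi^*)q_1\cdot\tfrac12=\tfrac{\theta_2-\theta_1-c}{1-\theta_1}$. Your computations and the role of the hypothesis $p_{\theta_1}<\theta_2$ are all correct.

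The one substantive difference is exactly the one you flag: you invoke Theorem~\ref{Theorem1} as a black box, whereas the paper explicitly warns that this is illegitimate when $\pi$ is endogenous and may covary with $\varepsilon$. The paper therefore does \emph{not} cite Theorem~\ref{Theorem1}; it re-derives what it needs from Lemmas~\ref{Lemma_4.5} and~\ref{lemma4.6} and a direct incentive bound that is valid at each fixed $\varepsilon$. In particular, for the case $\pi(\theta_1)>\pi^*$ (which includes $\pi(\theta_1)\to 1$, the regime where Theorem~\ref{Theorem1} fails), the paper argues as follows: after the buyer's offer $p_b^*\approx p_{\theta_1}$, take any $p_s\in\mathbf P_s$ close to $1$. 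Either type~$\theta_1$ does not offer $p_s$, in which case Lemma~\ref{L.A3} forces the buyer to concede at time~$0$ with probability close to~$1$ and the non-adopting seller collects nearly $1-\theta_2$; or type~$\theta_1$ does offer $p_s$, in which case type~$\theta_1$'s indifference between $p_s$ and $\max\{p_b^*,p_1(p_b^*)\}$ pins down the buyer's discounted concession probability $A\to\tfrac12$, yielding the lower bound $\tfrac{1-\theta_2}{2}$ for $V_{\theta_2}$. This case split is precisely what makes the argument uniform over $\pi(\theta_1)\in(\pi^*,1]$ and thereby closes the gap you identified; replacing your citation of \eqref{expecteddelay} with this direct bound would make your proof complete.
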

\begin{proof}
Fix any $\nu>0$ that is small enough. Suppose by way of contradiction that there exists $\eta>0$ such that for all $\overline{\varepsilon}>0$ there exists $\varepsilon\in(0,\overline{\varepsilon})$ such that $|\pi(\theta_1)-\pi^*|>\eta$. Without loss, take $\overline{\varepsilon}$ to be arbitrarily small. If $\pi(\theta_1)>\pi^*+\eta$, we show that the buyer's offer is arbitrarily close to $p_{\theta_1}$ when $\varepsilon\in(0,\overline{\varepsilon})$. According to Lemma \ref{Lemma_4.5}, if the buyer offers $p_b\leq \theta_2$ with positive probability, then type $\theta_2$ will offer $1$. As we showed in Step 3 of the proof of Theorem \ref{Theorem1}, for every $\delta>0$ there exists $\overline{\varepsilon}_\delta>0$ such that $\varepsilon\in(0,\overline{\varepsilon}_\delta)$ implies that the buyer can secure a payoff of at least $\pi(\theta_1)(1-\max\{p_b,p_1(p_b)\})-\delta$ by offering $p_b\leq \theta_2$ and that the bound is attained for any $p_b$ satisfying $\sigma_b(p_b)>\delta$, which for $\delta$ sufficiently small, is maximized by choosing $p^*_b\in \mathbf{P}_b\cap (p_{\theta_1}-\nu,p_{\theta_1}+\nu)$. If $1-\eta>\pi(\theta_1)>\pi^*+\eta$ and $\nu$ is sufficiently small, the resulting payoff is strictly greater than the buyer's highest payoff when she offers $p_b>\theta_2$, which is arbitrarily close to $1-p_{\theta_2}$. If $\pi(\theta_1)>1-\eta$ for $\eta$ arbitrarily small, the payoff from offering $p_b>\theta_2$ is at most $1-p_b<\pi(\theta_1)(1-p_{\theta_1})$. Therefore, for every $\eta>0$, there exist $\nu$ and $\varepsilon$ close enough to $0$ so that in every equilibrium with $\pi(\theta_1)>\pi^*+\eta$, the buyer's offer will belong to $\mathbf{P}_b\cap (p_{\theta_1}-\eta,p_{\theta_1}+\eta)$ with probability more than $1-\eta$, and therefore the seller's equilibrium payoff conditional on adopting is arbitrarily close to $p_{\theta_1}-\theta_1$. 
    
Moreover, for each $\delta>0$ there exists $\overline{\varepsilon}_\delta>0$ such that if $\varepsilon\in(0,\overline{\varepsilon}_\delta)$, then by not adopting the technology, the seller can secure a payoff of at least $\frac{1-\theta_2}{2}-\frac{1-\theta_2}{1-\theta_1}\nu-\delta$. To show this, suppose that the seller does not adopt, after which he offers $p_s\in\mathbf{P}_s\cap (1-{\varepsilon},1)$ following any $p_b^*$ such that $\sigma_b(p_b^*)>\eta$. By Parts 2 and 3 of Lemma \ref{L.A3}, there exists $\overline{\varepsilon}_\delta>0$ such that if $\varepsilon\in(0,\overline{\varepsilon}_\delta)$ the buyer will concede at time $0$ with probability greater than $1-\delta$ after the  seller offers $p_s$ unless type $\theta_1$ offers $p_s$ with positive probability. 
Type $\theta_1$'s incentive to offer $p_s$ after the buyer offers $p_b^*$ implies that
    \begin{equation*}
    p_b^*-\theta_1+c_b^*(\max\{p^*_b,p_1(p^*_b)\}-p_b^*)= e^{-rT}\hat{\varepsilon}_b(p_b^*)(p_b^*-\theta_1)+(1-\hat{\varepsilon}_b(p_b^*))A(p_s-\theta_1),
    \end{equation*}
    where $c_b^*$ is the 
    probability that the buyer concedes at time $0$
    after the seller offers $\max\{p^*_b,p_1(p^*_b)\}$,\footnote{If $\max\{p^*_b,p_1(p^*_b)\}=p_b^*$, then we can set $c_b^*=1$.} 
    and $A$ and $T$ are the discounted concession probabilities of the rational-type buyer and the time at which the rational-type buyer finishes conceding following the seller's offer $p_s$. 
    
Therefore, taking the limit as $p_s\to 1$, our above arguments imply that for every $\delta>0$, there exists $\overline{\varepsilon}_\delta>0$ such that $\varepsilon<\overline{\varepsilon}_\delta$ implies that the seller's payoff from not adopting the technology is at least $\frac{p_{\theta_1}-\nu-\theta_1}{1-\theta_1}(1-\theta_2)-\delta$. By setting $\nu$ and $\delta$ to be sufficiently low, 
    this payoff lower bound is strictly greater than $p_{\theta_1}-\theta_1-c$ whenever $c>\frac{\theta_2-\theta_1}{2}$. This contradicts $\pi(\theta_1)>0$. 

Next, suppose by way of contradiction that $\pi(\theta_1)<\pi^*-\eta$ for some sufficiently small $\varepsilon$. For every $\delta>0$ there exists $\overline{\nu}$ such that for all $\nu<\overline{\nu}$ there exists $\overline{\varepsilon}_\nu$ such that when $\varepsilon\in(0,\overline{\varepsilon}_\nu)$, the buyer's incentive constraint requires that any price that she offers with probability greater than $\eta$ satisfies $p_b^*\in(p_{\theta_2}-\eta,p_{\theta_2}+\eta)$, and therefore $V_{\theta_1}-c$ is within $\eta$-distance from $p_{\theta_2}-\theta_1-c$, and $V_{\theta_2}$ is within $\eta$-distance from $p_{\theta_2}-\theta_2$. Hence, for sufficiently small $\nu>0$ and $\varepsilon>0$, we obtain that $V_{\theta_1}-c>V_{\theta_2}$, where the inequality follows from $c<\theta_2-\theta_1$. This contradicts $\pi(\theta_1)<1$.

Therefore, in any equilibrium, the seller must adopt with probability close to $\pi^*$. As in the proof of Theorem \ref{Theorem1}, for every $\delta>0$ there exists $\overline{\varepsilon}_\delta>0$ such that if $\varepsilon\in(0,\overline{\varepsilon}_\delta)$, then the buyer's sequential rationality constraint requires that any price that she offers with probability greater than $\eta$ must belong to $(p_{\theta_1}-\nu,p_{\theta_1}+\nu)\cup (p_{\theta_2}-\nu,p_{\theta_2}+\nu)$. After the buyer offers $p\in (p_{\theta_2}-\nu,p_{\theta_2}+\nu)$, trade happens with delay smaller than $\eta$ at this price. After she offers $p\in (p_{\theta_1}-\nu,p_{\theta_1}+\nu)$, there is trade with delay less than $\eta$ conditional on the seller's cost being $\theta_1$, and there is an expected delay within $\eta$-distance from $\frac{1}{2}$ when the seller's cost is $\theta_2$. Consequently, letting $\rho^*$ denote the limiting probability with which the buyer offers $p_{\theta_2}$, type-$\theta_1$ seller's payoff in the bargaining stage converges to $\rho^*(p_{\theta_2}-\theta_1)+(1-\rho^*)(p_{\theta_1}-\theta_1)$ and type $\theta_2$'s payoff converges to $p_{\theta_2}-\theta_2$. The seller's indifference condition at the adoption stage requires that
    \begin{equation}
        \rho^*=\frac{2c -(\theta_2-\theta_1)}{\theta_2-\theta_1}.
    \end{equation}
Note that $\rho^*\in [0,1]$ if and only if $c\in \big[\frac{\theta_2-\theta_1}{2},\theta_2-\theta_1\big]$. This implies that an equilibrium with adoption probability converging to $\pi^*$ cannot be sustained if $p_{\theta_1}<\theta_2$ and $c\notin \bigg[\frac{\theta_2-\theta_1}{2},\theta_2-\theta_1\bigg]$. This conclusion will be used in the proof of Lemma \ref{lemmab5}.

Therefore,  the expected delay in the limit where $\varepsilon\to 0$ for a fixed $\nu$, and then taking the limit as $\nu\to 0$ is
    \begin{equation}
    (1-\pi^*)(1-\rho^*)\frac{1}{2}= \frac{\theta_2-\theta_1-c}{1-\theta_1}>0.
    \end{equation}
\end{proof}

Lemma \ref{lemmab3} establishes the third part of Theorem \ref{Theorem2}. In order to show the second part, we need to consider the case where (\ref{costdifference}) is satisfied but $p_{\theta_1}>\theta_2$, or equivalently $\theta_2-\theta_1\in \big(\frac{1-\theta_2}{2},1-\theta_2\big)$, and the adoption cost is intermediate. Lemma \ref{lemmab4} characterizes the limiting equilibria in this case.
\begin{Lemma} \label{lemmab4}
If $(\theta_1,\theta_2)$ satisfies (\ref{costdifference}), $p_{\theta_1}>\theta_2$, and $c\in \big(\frac{(1-\theta_2)(\theta_2-\theta_1)}{1-\theta_1},\theta_2-\theta_1\big)$, then for every $\eta>0$, there exists $\bar{\nu}>0$ such that when $\nu<\bar{\nu}$, there exists $\bar{\varepsilon}_{\nu} >0$ such that when $\varepsilon \in (0,\bar{\varepsilon}_{\nu})$:  in every equilibrium, \textit{either} the adoption probability is $\eta$-close to $\pi^*$, and the expected delay is bounded above zero;
\textit{or} the adoption probability is greater than $1-\eta$, and the expected delay is less than $\eta$. Moreover, an equilibrium of the first kind exists. 
\end{Lemma}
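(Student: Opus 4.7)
The plan is to argue by cases on the equilibrium adoption probability $\pi(\theta_1)$, showing that any value outside $[\pi^*-\eta,\pi^*+\eta]\cup[1-\eta,1]$ leads to a contradiction, while each of the two remaining ranges realizes one of the alternatives in the statement. I would mirror the structure of the proof of Lemma \ref{lemmab3}, adapting the calculations to the regime $p_{\theta_1}>\theta_2$, in which the optimal screening offer is $\approx \theta_2$ rather than $\approx p_{\theta_1}$. Using Lemmas \ref{Lemma_4.5}, \ref{L.A3}, \ref{L.A4}, and \ref{lemma4.6}, I would first show that for $\nu$ and $\varepsilon$ sufficiently small, in any equilibrium the buyer's best response reduces to choosing between pooling (offer $\approx p_{\theta_2}$, payoff $\approx (1-\theta_2)/2$) and screening (offer $\approx \theta_2$, with type $\theta_1$ counteroffering $p_1(\theta_2)=1+\theta_1-\theta_2$ by Lemma \ref{lemma4.6} and type $\theta_2$ demanding $1$ by Lemma \ref{Lemma_4.5}, yielding buyer payoff $\approx \pi(\theta_1)(\theta_2-\theta_1)$). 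Hence the buyer strictly prefers screening if and only if $\pi(\theta_1)>\pi^*$.

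The two contradiction cases then run as follows. If $\pi(\theta_1)\leq \pi^*-\eta$, the buyer strictly prefers to pool, and both types receive bargaining payoff $p_{\theta_2}-\theta$, so the private gain from adoption is $\approx \theta_2-\theta_1>c$; the seller strictly prefers to adopt, contradicting $\pi(\theta_1)<1$. If $\pi(\theta_1)\in(\pi^*+\eta,1-\eta)$, the buyer strictly screens at $\approx \theta_2$; here type $\theta_1$ obtains $\approx 1-\theta_2$, while the incentive-based bounds (\ref{upperbound})--(\ref{limitinglowerbound}) from the discussion following Theorem \ref{Theorem1}---applied using $\max\{p_{\theta_1},1-\theta_2+\theta_1\}=p_{\theta_1}$, which holds since $p_{\theta_1}>\theta_2$---pin down $V_{\theta_2}\approx (1-\theta_2)^2/(1-\theta_1)$. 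Thus the private gain from adoption is $\approx \alpha \equiv (1-\theta_2)(\theta_2-\theta_1)/(1-\theta_1)$, which is strictly less than $c$ by hypothesis, so the seller strictly prefers not to adopt, contradicting $\pi(\theta_1)>0$.

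The welfare claims in the two remaining ranges follow from direct computation. For $\pi(\theta_1)\in[\pi^*-\eta,\pi^*+\eta]$, the buyer must mix between pooling and screening with pooling weight $\rho$ determined by the seller's adoption indifference, $\rho(\theta_2-\theta_1)+(1-\rho)\alpha=c$, which yields $\rho=(c-\alpha)/((\theta_2-\theta_1)-\alpha)\in(0,1)$ under $\alpha<c<\theta_2-\theta_1$. Conditional on screening and $\theta=\theta_2$, the expected discount factor at trade is $\approx (1-\theta_2)/(1-\theta_1)$, so the ex ante expected delay is bounded below by $\approx (1-\pi^*)(1-\rho)(\theta_2-\theta_1)/(1-\theta_1)>0$, uniformly in $(\nu,\varepsilon)$. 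For $\pi(\theta_1)\geq 1-\eta$, delay arises only conditional on $\theta=\theta_2$, whose unconditional probability is at most $\eta$, so the ex ante expected delay is at most $\eta$.

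The main obstacle is the contradiction case $\pi(\theta_1)\in(\pi^*+\eta,1-\eta)$: I need Theorem \ref{Theorem1}'s characterization to apply \emph{uniformly} in $\pi(\theta_1)$ over this range, rather than pointwise at each fixed $\pi$. The crucial uniformity is that the posterior $\hat{\pi}_2$, after type $\theta_2$'s separating demand of $1$ following the buyer's screening offer, remains bounded away from zero uniformly in $\varepsilon$; this in turn validates the war-of-attrition bounds of (\ref{lowerbound})--(\ref{limitinglowerbound}). I would establish this by repeating the direct deviation arguments from the proof of Theorem \ref{Theorem1} rather than invoking the theorem pointwise, exploiting the fact that $\pi(\theta_2)\geq \eta$ throughout the range to keep both $\hat{\pi}_2$ and the associated bargaining-stage payoffs uniformly controlled.
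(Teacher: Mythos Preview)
Your approach is essentially the same as the paper's: rule out $\pi(\theta_1)<\pi^*-\eta$ via the pooling branch and $\pi(\theta_1)\in(\pi^*+\eta,1-\eta)$ via the screening branch, then read off the delay in the two surviving ranges; you also correctly flag the uniformity-in-$\pi$ issue and propose the right fix (repeat the deviation arguments directly rather than quoting Theorem~\ref{Theorem1} pointwise).

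One slip to correct: the identity ``$\max\{p_{\theta_1},1-\theta_2+\theta_1\}=p_{\theta_1}$, which holds since $p_{\theta_1}>\theta_2$'' is backwards. In fact $p_{\theta_1}>\theta_2$ is \emph{equivalent} to $1-\theta_2+\theta_1>p_{\theta_1}$, so in this regime $\max\{p_{\theta_1},1-\theta_2+\theta_1\}=1-\theta_2+\theta_1$. This is precisely what delivers $V_{\theta_2}\approx(1-\theta_2)^2/(1-\theta_1)$ via (\ref{upperbound})--(\ref{limitinglowerbound}); had the max really equaled $p_{\theta_1}$, those bounds would instead give $V_{\theta_2}\approx(1-\theta_2)/2$. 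Your downstream numbers ($V_{\theta_2}$, $\alpha$, $\rho$, and the expected delay) are all correct and agree with the paper's, so this is only a mis-stated justification, not a gap in the argument.

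A minor remark on scope: the paper's proof of this lemma also \emph{constructs} equilibria of both kinds (adoption probability near $\pi^*$ and near $1$), which is not required by the lemma statement as written but is used for the existence claims in Theorem~\ref{Theorem2} and for the ``white region'' multiplicity in Figure~\ref{fig:eqm}. Your proposal covers only the necessary-condition direction, which suffices for the lemma itself.
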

\begin{proof}
Fix $\eta>0$ and any small enough $\nu>0$. First, suppose by way of contradiction that there exists $\eta>0$ such that for all $\overline{\varepsilon}>0$ there exists $\varepsilon\in(0,\overline{\varepsilon})$ such that $\pi^*-\pi(\theta_1)>\eta$. Then, analogous to the proof of Lemma \ref{lemmab3}, we can show that by taking $\varepsilon$ and $\nu$ to be small enough, type-$\theta_1$ seller's equilibrium payoff can be made arbitrarily close to $p_{\theta_2}-\theta_1-c$ and type-$\theta_2$ seller's equilibrium payoff to $p_{\theta_2}-\theta_2$. Therefore, there exists $\overline{\nu}>0$ and $\overline{\varepsilon}_\nu>0$ such that $\nu<\overline{\nu}$ and $\varepsilon\in(0,\overline{\varepsilon}_\nu)$ implies that the seller strictly prefers to adopt the technology at cost   $c<\theta_2-\theta_1$. This contradicts the hypothesis that $\pi(\theta_1)<1$.

Next, suppose by way of contradiction that there exists $\overline{\varepsilon}>0$ such that $\pi(\theta_1)\in(\pi^*+\eta,1-\eta)$ when $\varepsilon\in(0,\overline{\varepsilon})$. Following the same argument as in the proof of Lemma \ref{lemmab3}, we know that the buyer will offer $\theta_2$ with probability converging to $1$ in the limit as $\varepsilon \rightarrow 0$ for $\nu>0$ fixed and then taking $\nu\to 0$. As a result, the limiting equilibrium payoff of type $\theta_1$ equals $1-\theta_2-c$, and the limiting equilibrium payoff of type $\theta_2$ equals $\frac{1-\theta_2}{1-\theta_1}(1-\theta_2)$. The assumption that $c>\frac{(1-\theta_2)(\theta_2-\theta_1)}{1-\theta_1}$ then implies that, for $\varepsilon$ and $\nu$ sufficiently small, type $\theta_2$'s payoff is strictly more than that of type $\theta_1$'s. This contradicts our earlier hypothesis that $\pi(\theta_1)>\pi^* \geq 0$. 

Therefore, the equilibrium adoption probability must converge to either $\pi^*$ or $1$. We now show that, if the limit point is $1$, then expected delay vanishes in the limit. Fix $\varepsilon\in(0,\overline{\varepsilon})$ and let $p_b$ be any price offered by the buyer with probability greater than $\eta$. In the bargaining stage, by offering $p_1(p_b)$, the low type can secure a payoff of at least $p_1(p_b)-\theta_1-\eta/2\geq p_{\theta_1}-\theta_1-\eta/2$. As a result, the buyer's bargaining payoff conditional on facing the low-cost seller is at most $1-p_1(p_b)\leq 1-p_{\theta_1}$. If $\pi(\theta_1)=1$, then the arguments in Abreu and Gul (2000) imply that the buyer can secure a payoff greater than $1-p_{\theta_1}-\eta/2$ when she offers $p_{\theta_1}$. As we show in Online Appendix A, for every $\varepsilon>0$, players payoffs in the bargaining game are continuous in $\pi(\theta_1)$. As a result, the fact that $\pi(\theta_1)$ converges to $1$ implies that the buyer's payoff when she offers $p_{\theta_1}$ is also bounded below by $1-p_{\theta_1}-\eta/2$ when $\varepsilon$ is small enough. Then, we have that, for sufficiently low $\varepsilon$, the sum of players payoffs conditional on $\theta=\theta_1$ is greater than $1-p_{\theta_1}-\eta/2+p_{\theta_1}-\theta_1-\eta/2=1-\theta_1-\eta$, which establishes the result.  

If the limit point is $\pi^*$, then it must be the case that the buyer mixes between offering prices that converge to $\theta_2$ and $p_{\theta_2}$ in a way that makes the seller indifferent
between adopting and not adopting the technology. 
As in the proof of Lemma \ref{lemmab3}, let $\rho^*$ denote the limiting probability with which the buyer offers $p_{\theta_2}$. The seller's incentive constraint at the adoption stage requires that in the limit as $\varepsilon\to 0$ for a fixed $\nu>0$, and then taking the limit as $\nu\to 0$
\begin{align*}
     \rho^* (p_{\theta_2}-\theta_1)+(1-\rho^*)(1-\theta_2)-c=\rho^*(p_{\theta_2}-\theta_2)+(1-\rho^*)\frac{1-\theta_2}{1-\theta_1}(1-\theta_2)  
\end{align*}
or equivalently,
\begin{equation}
\rho^*=\frac{(1-\theta_1)c-(1-\theta_2)(\theta_2-\theta_1)}{(\theta_2-\theta_1)^2}.
\end{equation}
With $\rho^*\in [0,1]$, we have $c\in\bigg[\frac{(1-\theta_2)(\theta_2-\theta_1)}{1-\theta_1},\theta_2-\theta_1\bigg]$.
The expected delay, in the limit as $\varepsilon\to 0$ for a fixed $\nu$, and then taking the limit as $\nu\to 0$, is then
\begin{equation}
    (1-\pi^*)(1-\rho^*)\Bigg(1-\frac{1-\theta_2}{1-\theta_1}\Bigg)=\frac{(3\theta_2-1-2\theta_1)(\theta_2-\theta_1-c)}{2(\theta_2-\theta_1)^2}>0.
\end{equation}

The above conditions must be satisfied in any equilibrium in which the seller's adoption probability is arbitrarily close to $\pi^*$. Next, we show that such an equilibrium exists. Let $\pi(\varepsilon,\nu)\in(0,1)$ (which is arbitrarily close to $\pi^*$) denote the adoption rate that makes the buyer exactly indifferent between offering $p_b\in p_1^\nu=\argmax\limits_{p\in \mathbf{P}_b\cap[\theta_1,\theta_2]}(1-\max\{p,p_1(p)\})$ and $p_b\in p_2^\nu=\argmax\limits_{p\in \mathbf{P}_b\cap[\theta_2,1]}(1-\max\{p,p_2(p)\})$. As we will show in Online Appendix A, the continuation war-of-attrition game with $\pi(\theta_1)=\pi(\varepsilon,\nu)$ has an equilibrium. Thus, it suffices to show that given players' equilibrium strategies in the war-of-attrition game, the seller is indifferent at the adoption stage and hence choosing $\pi(\varepsilon,\nu)\in (0,1)$ is incentive compatible. 

As we have shown earlier, when $\varepsilon$ and $\nu$ are arbitrarily small, players' on-path strategies in the war-of-attrition game are arbitrarily close to the following: The buyer offers $p_{\theta_2}$ with probability $\rho^*\in [0,1]$ and offers $\theta_2$ with probability $1-\rho^*$. After the buyer offers $p_{\theta_2}$, both types of the seller accept the offer. After the buyer offers $\theta_2$, type $\theta_1$ demands $1+\theta_1-\theta_2$ and type $\theta_2$ demands $1$. Moreover, when $\varepsilon>0$ and $\nu>0$ are sufficiently small, we know that if $\rho^*=1$, then the seller's equilibrium payoff from adopting the technology is strictly greater than his payoff from not adopting, while the opposite is true if $\rho^*=0$. By continuity, there exists $\rho(\varepsilon,\nu)\in(0,1)$ such that the seller is indifferent between adopting and not adopting the technology, and therefore the interior adoption probability $\pi(\varepsilon,\nu)$ can be sustained as the seller's adoption strategy in an equilibrium of the game. 

\end{proof}

Lemma \ref{lemmab4} implies that first, as stated in Theorem \ref{Theorem2}, when (\ref{costdifference}) holds and $\theta_2<p_{\theta_1}$, there is an open set of adoption costs, given by $\Big(\frac{(1-\theta_2)(\theta_2-\theta_1)}{1-\theta_1},\theta_2-\theta_1\Big)\subset \Big(\frac{\theta_2-\theta_1}{2},\theta_2-\theta_1\Big)$ such that there exists an equilibrium with inefficient adoption and significant delay in reaching agreement. This equilibrium shares the same features as the unique equilibrium characterized in Lemma \ref{lemmab3}. Second, there may be multiple limiting equilibria.\footnote{We show that an inefficient equilibrium always exists under the conditions stated in Lemma \ref{lemmab4}. Additionally, our arguments do not allow to rule out the existence of an approximately efficient equilibrium, although we do not have an explicit construction of this equilibrium and thus our proof does not establish that this equilibrium exists.} When $\theta_2<p_{\theta_1}$ and the cost of adoption satisfies $c\in\Big(\frac{(1-\theta_2)(\theta_2-\theta_1)}{1-\theta_1},\theta_2-\theta_1\Big)$, we cannot rule out the approximately efficient equilibrium since whenever the buyer expects the seller to adopt with probability close to $1$, her optimal strategy is to offer $p_{\theta_1}$. Since $\theta_2<p_{\theta_1}$, it must be the case that $\tau_s <+\infty$ if the seller does not adopt. Therefore, conditional on not adopting, the seller will concede in finite time, and the expected delay can make him indifferent between adopting and not adopting. The fact that $\pi(\theta_1)$ is close to $1$ ensures that this delay can be sustained as an outcome of the war-of-attrition game. However, this reasoning does not apply when $\theta_2>p_{\theta_1}$. This is because conditional on not adopting, type $\theta_2$ has no incentive to concede when the buyer offers $p_{\theta_1}$. This ability to commit makes type $\theta_2$'s bargaining payoffs considerably higher, which would give rise to a profitable deviation for the seller at the adoption stage. 

Finally, we consider the remaining case in which either (\ref{costdifference}) is violated and $c<\theta_2-\theta_1$, or \eqref{costdifference} is satisfied and $c<\max\big\{\frac{1}{2},\frac{1-\theta_2}{1-\theta_1}\big\}(\theta_2-\theta_1)$. We show that in every equilibrium, the seller's investment decision is socially efficient and there is almost no delay in reaching agreement. 
\begin{Lemma} \label{lemmab5}
    If the parameters of the model satisfy (\ref{costdifference}) and $c<\max\big\{\frac{1}{2},\frac{1-\theta_2}{1-\theta_1}\big\}(\theta_2-\theta_1)$, or if (\ref{costdifference}) is violated and $c<\theta_2-\theta_1$, then for every $\eta>0$, there exists $\bar{\nu}>0$ such that when $\nu<\bar{\nu}$, there exists $\bar{\varepsilon}_{\nu} >0$ such that for every $\varepsilon \in (0,\bar{\varepsilon}_{\nu})$, the adoption probability is at least $1-\eta$ and the expected welfare loss from delay is no more than $\eta$ in every equilibrium.
\end{Lemma}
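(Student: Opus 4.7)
The plan is to show that under each parameter configuration covered by the lemma, the adoption probability $\pi(\theta_1)$ must converge to $1$ as $(\varepsilon,\nu)\to(0,0)$. Given this, the expected welfare loss from delay is at most roughly $(1-\pi(\theta_1))(1-\theta_2)$, since along the equilibrium path delay can occur only when the seller does not adopt and the buyer uses a screening offer, so the loss vanishes. I proceed by contradiction: fix $\eta>0$ and suppose along a sequence of equilibria one has $\pi(\theta_1)\le 1-\eta$.

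The argument splits according to whether (\ref{costdifference}) holds. Suppose first that (\ref{costdifference}) fails, so $\pi^*=1$. The key step is a uniform version of Theorem \ref{Theorem1}: for every $\xi>0$ there exists $\bar\varepsilon>0$ such that, for all $\varepsilon<\bar\varepsilon$ and all $\pi$ with $\pi(\theta_1)\in[0,1-\xi]$, every equilibrium has the buyer offering within an $\eta$-neighborhood of $p_{\theta_2}$. Uniformity follows by inspecting the proof of Theorem \ref{Theorem1}: when (\ref{costdifference}) fails, the buyer's payoff from any screening offer is bounded above by $\pi(\theta_1)\big(\min\{p_{\theta_1},\theta_2\}-\theta_1\big)$, whose gap to the pooling payoff $1-p_{\theta_2}$ is bounded below by a positive constant uniformly in $\pi(\theta_1)\in[0,1-\xi]$. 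Applying this with $\xi=\eta$, the seller's limiting gain from adoption is $\theta_2-\theta_1>c$, so the seller strictly prefers to adopt, contradicting $\pi(\theta_1)\le 1-\eta$.

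Now suppose (\ref{costdifference}) holds, so $\pi^*\in(0,1)$. The arguments already developed in Lemmas \ref{lemmab3} and \ref{lemmab4} show that the only possible limit points of $\pi(\theta_1)$ are $\pi^*$ and $1$, and that sustaining a limit point at $\pi^*$ requires the buyer to randomize between $p_{\theta_2}$ and $\min\{p_{\theta_1},\theta_2\}$ with a mixing probability $\rho^*$ pinned down by the seller's adoption-stage indifference. Reproducing the calculations there, $\rho^*=\frac{2c-(\theta_2-\theta_1)}{\theta_2-\theta_1}$ when $p_{\theta_1}<\theta_2$ and $\rho^*=\frac{(1-\theta_1)c-(1-\theta_2)(\theta_2-\theta_1)}{(\theta_2-\theta_1)^2}$ when $p_{\theta_1}>\theta_2$. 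Under the hypotheses on $c$, each expression is strictly negative, so no feasible mixing sustains $\pi^*$; hence $\pi(\theta_1)$ cannot converge to $\pi^*$. Convergence of $\pi(\theta_1)$ to $0$ is ruled out because the buyer would then strictly prefer $p_{\theta_2}$, in which case the seller's gain from adoption is approximately $\theta_2-\theta_1>c$. Thus $\pi(\theta_1)\to 1$.

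Once $\pi(\theta_1)>1-\eta$, the ex ante probability of the event $\{\theta=\theta_2\}$ is less than $\eta$, so its contribution to expected welfare loss is at most $\eta(1-\theta_2)$; conditional on adoption the buyer's best response puts nearly all mass on a price near $p_{\theta_1}$ at which trade is immediate. Hence the expected welfare loss from delay is $O(\eta)$. The main obstacle is the uniform extension of Theorem \ref{Theorem1} used in the first subcase: the proof of Theorem \ref{Theorem1} fixes $\pi$ before sending $\varepsilon\to 0$, so one must track that each of the bounds used there (in particular the bounds on the buyer's payoff from screening and on the seller's concession probabilities derived from Lemmas \ref{L.A1}--\ref{L.A3}) depends continuously on $\pi$ over the compact set $\{\pi:\pi(\theta_1)\in[0,1-\xi]\}$, so that a single $\bar\varepsilon$ works uniformly.
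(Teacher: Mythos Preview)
Your overall architecture matches the paper's: split on whether (\ref{costdifference}) holds, rule out every candidate limit of $\pi(\theta_1)$ other than $1$, and then bound the welfare loss by the (vanishing) probability of $\theta_2$. The first subcase and the welfare-loss step are fine and essentially identical to the paper's.

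The gap is in the second subcase. You write that ``the arguments already developed in Lemmas~\ref{lemmab3} and~\ref{lemmab4} show that the only possible limit points of $\pi(\theta_1)$ are $\pi^*$ and $1$,'' and then rule out $\pi^*$ via $\rho^*<0$. But Lemmas~\ref{lemmab3} and~\ref{lemmab4} eliminate the range $\pi(\theta_1)\in(\pi^*,1)$ bounded away from both endpoints by showing that, once the buyer's offer converges to the screening price, the seller \emph{strictly prefers not to adopt}---an argument that uses $c>\max\{\tfrac12,\tfrac{1-\theta_2}{1-\theta_1}\}(\theta_2-\theta_1)$. Under the present hypothesis $c$ lies \emph{below} that threshold, so that direction of contradiction is unavailable, and your $\rho^*<0$ calculation only addresses the knife-edge $\pi(\theta_1)\to\pi^*$ where the buyer genuinely randomizes, not the interior range where she plays the screening offer with probability one.

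The paper fills this hole directly: for $\pi(\theta_1)\in(\pi^*,1)$ bounded away from both, the buyer's offer converges to $\min\{p_{\theta_1},\theta_2\}$, type~$\theta_1$'s limiting bargaining payoff is $\max\{p_{\theta_1},1+\theta_1-\theta_2\}-\theta_1$, and type~$\theta_2$'s is $\max\{\tfrac12,\tfrac{1-\theta_2}{1-\theta_1}\}(1-\theta_2)$, so the gain from adoption equals $\max\{\tfrac12,\tfrac{1-\theta_2}{1-\theta_1}\}(\theta_2-\theta_1)-c>0$. Hence the seller strictly prefers to adopt, contradicting $\pi(\theta_1)<1$. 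This is the mirror image of the Lemma~\ref{lemmab3}/\ref{lemmab4} argument, with the sign of the adoption incentive reversed; once you insert it, your proof is complete and coincides with the paper's.
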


\begin{proof}
    Fix $\eta>0$ and a small enough $\nu>0$. Suppose first that (\ref{costdifference}) is violated and $c<\theta_2-\theta_1$. If for any small enough $\varepsilon>0$, the seller will adopt the technology with probability less than $1-\eta$, then the buyer's incentive constraint implies that her equilibrium offer is $\eta$-close to the Dirac measure on $p_{\theta_2}$. Therefore, type $\theta$'s payoff in the war-of-attrition game is $\eta$-close to $p_{\theta_2}-\theta$. The assumption that $c<\theta_2-\theta_1$ then implies that the seller strictly prefers to adopt. This contradicts our earlier hypothesis that $\pi(\theta_1)<1$. 
    
    Next, suppose that (\ref{costdifference}) is satisfied and $c<\max\big\{\frac{1}{2},\frac{1-\theta_2}{1-\theta_1}\big\}(\theta_2-\theta_1)$. In the subcase where there exists an arbitrarily small $\varepsilon$ such that $\pi(\theta_1)<\pi^*-\eta$, the same argument as in the previous paragraph applies. Moreover, our proofs of Lemmas \ref{lemmab3} and \ref{lemmab4} imply that an inefficient equilibrium with limiting adoption probability equal to $\pi^*$ exists only if $c>\max\big\{\frac{1}{2},\frac{1-\theta_2}{1-\theta_1}\big\}(\theta_2-\theta_1)$, which is ruled out under the assumption in Lemma \ref{lemmab5}. In the subcase where $\pi(\theta_1)\in (\pi^*+\eta,1-\eta)$ for all arbitrarily small $\varepsilon>0$, the buyer's incentive constraint requires that her offer in equilibrium belong to $(\min\{\theta_2,p_{\theta_1}\}-\nu,\min\{\theta_2,p_{\theta_1}\}+\nu)$ with probability more than $1-\eta$. As a result, the seller's payoff from adopting the technology converges to $\max\{p_{\theta_1},1+\theta_1-\theta_2\}-\theta_1-c$ and his payoff from not adopting the technology converges to $\max\big\{\frac{1}{2},\frac{1-\theta_2}{1-\theta_1}\big\}(1-\theta_2)$. Hence, the limiting payoff gain from adoption is
\begin{equation*}
    \max\Bigg\{\frac{1}{2},\frac{1-\theta_2}{1-\theta_1}\Bigg\} \Big(\theta_2-\theta_1 \Big)-c>0,
\end{equation*}
which follows from the assumption that $c<\max\big\{\frac{1}{2},\frac{1-\theta_2}{1-\theta_1}\big\}(\theta_2-\theta_1)$. This contradicts our earlier hypothesis that $\pi(\theta_1)<1-\eta$ when $\varepsilon$ is sufficiently small. 
Hence, the seller's equilibrium adoption probability must converge to $1$ in the limit as $\varepsilon \rightarrow 0$ for a fixed $\nu>0$ and then taking $\nu\to 0$. Finally, an argument analogous to the one given in Lemma \ref{lemmab4} establishes that the outcome conditional on the seller being type $\theta_1$ is approximately efficient, both in the adoption stage and in the ensuing war-of-attrition game. If this is the case, then the expected delay from reaching an agreement vanishes in the limit as $\varepsilon \rightarrow 0$. 
\end{proof}
\newpage


\begin{thebibliography}{99}
\bibitem{pa} Abreu, Dilip and Faruk Gul (2000) ``Bargaining and Reputation,'' \textit{Econometrica}, 68(1), 85-117.
\bibitem{pa} Abreu, Dilip and David Pearce (2007) ``Bargaining, reputation, and equilibrium selection in repeated games with contracts,'' \textit{Econometrica}, 75(3), 653-710.
\bibitem{pa} Abreu, Dilip, David Pearce and Ennio Stacchetti (2015) ``One Sided Uncertainty and Delay in Reputational Bargaining,'' \textit{Theoretical Economics}, 10(3), 719-773.
\bibitem{pa} Ausubel, Lawrence and Raymond Deneckere (1989) ``Reputation in Bargaining and Durable Goods Monopoly,'' \textit{Econometrica}, 57(3), 511-531. 
\bibitem{pa} Baliga, Sandeep and Tomas Sj\"{o}str\"{o}m (2023) ``Long Wars,'' Working Paper. 
\bibitem{pa} Billingsley, Patrick (2013a) \textit{Convergence of Probability Measures}. John Wiley and Sons.
\bibitem{pa} Billingsley, Patrick (2013b) \textit{Probability and Measure}. John Wiley and Sons.
\bibitem{pa} Che, Yeon-Koo and Donald Hausch (1999) ``Cooperative Investments and the Value of Contracting,'' \textit{American Economic Review}, 89(1), 125-147. 
\bibitem{pa} Che, Yeon-Koo and J\'{o}zsef S\'{a}kovics (2004) ``A Dynamic Theory of Holdup,'' \textit{Econometrica}, 72(4), 1063-1103.
\bibitem{pa} Compte, Olivier and Philippe Jehiel (2002) ``On the Role of outside Options in Bargaining with Obstinate Parties,'' \textit{Econometrica}, 70(4), 1477-1517.
\bibitem{pa} Deneckere, Raymond and Meng-Yu Liang (2006) ``Bargaining with Interdependent Values,'' \textit{Econometrica}, 74(5), 1309-1364. 
\bibitem{pa} Dilm{\'e}, Francesc and Daniel Garrett (2022) ``A Dynamic Theory of Random Price Discounts,'' Working Paper. 
\bibitem{pa} Dutta, Rohan (2023) ``Bargaining as a Struggle Between Competing Attempts at Commitment,'' \textit{Review of Economic Studies}, forthcoming. 
\bibitem{pa} Ekmekci, Mehmet and Hanzhe Zhang (2024) ``Reputational Bargaining with Ultimatum Opportunities,'' \textit{Review of Economic Studies}, forthcoming. 
\bibitem{pa} Fanning, Jack (2016) ``Reputational bargaining and deadlines,'' \textit{Econometrica}, 84(3), 1131-1179.
\bibitem{pa} Fanning, Jack (2018) ``No Compromise: Uncertain Costs in Reputational Bargaining,'' \textit{Journal of Economic Theory}, 175, 518-555.
\bibitem{pa} Fanning, Jack (2024) ``Outside Options, Reputations, and the Partial Success of the Coase Conjecture,'' Working Paper.
\bibitem{pa} Fuchs, William and Andrzej Skrzypacz (2010) ``Bargaining with Arrival of New Traders,'' \textit{American Economic Review}, 100(3), 802-836. 
\bibitem{pa} Fudenberg, Drew and David Levine (1989) ``Reputation and Equilibrium Selection in Games with a Patient Player,'' \textit{Econometrica}, 57, 759-778.
\bibitem{pa} Gerardi, Dino, Johannes H\"{o}rner and Lucas Maestri (2014) ``The Role of Commitment in Bilateral Trade,'' \textit{Journal of Economic Theory}, 154,   578-603. 
\bibitem{pa} Grossman, Sanford and Oliver Hart (1986) ``The Costs and Benefits of Ownership: A Theory of Vertical and Lateral Integration,'' \textit{Journal of Political Economy}, 94(4), 691-719.
\bibitem{pa} Gul, Faruk (2001) ``Unobservable Investment and the Hold-Up Problem,''  \textit{Econometrica}, 69(2), 343-376.
\bibitem{pa} Gul, Faruk, Hugo Sonnenschein, and Robert Wilson (1986) ``Foundations of Dynamic Monopoly and the Coase Conjecture,'' \textit{Journal of Economic Theory}, 39.1, 155-190.
\bibitem{pa} Kambe, Shinsuke (1999) ``Bargaining with Imperfect Commitment,'' \textit{Games and Economic Behavior}, 28(2), 217-237.
\bibitem{pa} Liu, Qingmin, Konrad Mierendorff, Xianwen Shi and Weijie Zhong (2019) ``Auctions with Limited Commitment,'' \textit{American Economic Review}, 109(3), 876-910.
\bibitem{pa} Ortner, Juan (2017) ``Durable Goods Monopoly with Stochastic Costs,'' \textit{Theoretical Economics}, 12, 817-861.
\bibitem{pa} Rubinstein, Ariel (1982) ``Perfect Equilibrium in a Bargaining Model,'' \textit{Econometrica}, 50(1), 97-109.
\bibitem{pa} Wolitzky, Alexander (2012) ``Reputational Bargaining with Minimal Knowledge of Rationality,'' \textit{Econometrica}, 80(5), 2047-2087.
\bibitem{pa} Wolitzky, Alexander (2018) ``Learning from Others' Outcomes,'' \textit{American Economic Review}, 108(10), 2763-2801.
\end{thebibliography}
\end{document}